    \let\Cref\crtCref
    \let\cref\crtcref
\def\hlinewd#1{%
	\noalign{\ifnum0=`}\fi\hrule \@height #1 \futurelet
	\reserved@a\@xhline}
\newtheorem{theorem}{Theorem}
\newtheorem{corollary}[theorem]{Corollary}
\newtheorem{lemma}[theorem]{Lemma}
\newtheorem{definition}[theorem]{Definition}
\newtheorem{importedtheorem}[theorem]{Imported Theorem}
\newtheorem*{rep@theorem}{\rep@title}
\newcommand{\newreptheorem}[2]{%
\newenvironment{rep#1}[1]{%
 \def\rep@title{\Cref{##1} Restated}%
 \begin{rep@theorem}}%
 {\end{rep@theorem}}}
\newtheorem*{rep@lemma}{\rep@title}
\newcommand{\newreplemma}[2]{%
\newenvironment{rep#1}[1]{%
 \def\rep@title{\Cref{##1} Restated}%
 \begin{rep@lemma}}%
 {\end{rep@lemma}}}
\newcommand\numberthis{\addtocounter{equation}{1}\tag{\theequation}}
\newcommand{\defeq}[0]{\ensuremath{\;{\vcentcolon=}\;}\xspace}
\let\norm\relax
\newcommand{\norm}[1]{\enVert[0]{#1}}
\DeclareMathOperator*{\Var}{Var}
\DeclareMathOperator*{\E}{\mathbb{E}}
\DeclareMathOperator{\tr}{tr}
\DeclareMathOperator{\diag}{diag}
\colorlet{todo_background_normal}{white}
\definecolor{todo_background_dark}{RGB}{39,40,34}
\definecolor{advice_text}{RGB}{78, 12, 123}
\colorlet{advice_background}{todo_background_normal}
\definecolor{incomplete_text}{RGB}{204, 64, 84}
\colorlet{incomplete_background}{todo_background_normal}
\newcommand{\eps}[0]{\ensuremath{\varepsilon}}
\let\epsilon\eps
\newcommand{\tsfrac}[2]{{\textstyle\frac{#1}{#2}}}
\newcommand{\pmrdm}[3]{\ensuremath{\cM_{#1}(#3)}}
\newcommand{\herm}{\textsf{\upshape H}} 
\newcommand{\ptran}[1]{\intercal_{#1}}
\let\abs\relax
\newcommand{\abs}[1]{\ensuremath{\vert{#1}\vert}\xspace} 
\newcommand{\mat}[1]{\mathbf{#1}} 
\renewcommand{\vec}[1]{\boldsymbol{\mathrm{#1}}} 
\newcommand{\vecalt}[1]{\boldsymbol{#1}} 
\newcommand{\bmat}[1]{\begin{bmatrix} #1 \end{bmatrix}} 
\newcommand{\sbmat}[1]{\left[\begin{smallmatrix} #1 \end{smallmatrix}\right]} 
\newcommand{\mA}{\ensuremath{\mat{A}}\xspace}
\newcommand{\mB}{\ensuremath{\mat{B}}\xspace}
\newcommand{\mC}{\ensuremath{\mat{C}}\xspace}
\newcommand{\mD}{\ensuremath{\mat{D}}\xspace}
\newcommand{\mE}{\ensuremath{\mat{E}}\xspace}
\newcommand{\mG}{\ensuremath{\mat{G}}\xspace}
\newcommand{\mI}{\ensuremath{\mat{I}}\xspace}
\newcommand{\mP}{\ensuremath{\mat{P}}\xspace}
\newcommand{\mV}{\ensuremath{\mat{V}}\xspace}
\newcommand{\mW}{\ensuremath{\mat{W}}\xspace}
\newcommand{\mDelta}{\ensuremath{\mat{\Delta}}\xspace}
\newcommand{\ve}{\ensuremath{\vec{e}}\xspace}
\newcommand{\vg}{\ensuremath{\vec{g}}\xspace}
\newcommand{\vm}{\ensuremath{\vec{m}}\xspace}
\newcommand{\vr}{\ensuremath{\vec{r}}\xspace}
\newcommand{\vu}{\ensuremath{\vec{u}}\xspace}
\newcommand{\vv}{\ensuremath{\vec{v}}\xspace}
\newcommand{\vx}{\ensuremath{\vec{x}}\xspace}
\newcommand{\vy}{\ensuremath{\vec{y}}\xspace}
\newcommand{\vz}{\ensuremath{\vec{z}}\xspace}
\newcommand{\vsigma}{\ensuremath{\vecalt{\sigma}}\xspace}
\newcommand{\cD}{\ensuremath{{\mathcal D}}\xspace}
\newcommand{\cM}{\ensuremath{{\mathcal M}}\xspace}
\newcommand{\cN}{\ensuremath{{\mathcal N}}\xspace}
\newcommand{\cS}{\ensuremath{{\mathcal S}}\xspace}
\newcommand{\cV}{\ensuremath{{\mathcal V}}\xspace}
\newcommand{\cW}{\ensuremath{{\mathcal W}}\xspace}
\newcommand{\bbC}{\ensuremath{{\mathbb C}}\xspace}
\newcommand{\bbN}{\ensuremath{{\mathbb N}}\xspace}
\newcommand{\bbR}{\ensuremath{{\mathbb R}}\xspace}
\title{
    Hutchinson's Estimator is Bad at Kronecker-Trace-Estimation
}
\author{
	Raphael A. Meyer \\ California Institute of Technology\\ \texttt{ram900@caltech.edu}
	\and
	Haim Avron\\ Tel Aviv University\\ \texttt{haimav@tauex.tau.ac.il}
}
\begin{document}
\maketitle


\begin{abstract}

We study the problem of estimating the trace of a matrix \mA that can only be accessed through Kronecker-matrix-vector products.
That is, for any Kronecker-structured vector \(\vx = \otimes_{i=1}^k \vx_i\), we can compute \(\mA\vx\).
We focus on the natural generalization of Hutchinson's Estimator to this setting, proving tight rates for the number of matrix-vector products this estimator needs to find a \((1\pm\eps)\) approximation to the trace of \mA.

We find an exact equation for the variance of the estimator when using a Kronecker of Gaussian vectors, revealing an intimate relationship between Hutchinson's Estimator, the partial trace operator, and the partial transpose operator.
Using this equation, we show that when using real vectors, in the worst case, this estimator needs \(O(\nicefrac{3^k}{\eps^2})\) products to recover a \((1\pm\eps)\) approximation of the trace of any PSD \mA, and a matching lower bound for certain PSD \mA.
However, when using complex vectors, this can be exponentially improved to \(\Theta(\nicefrac{2^k}{\eps^2})\).
Further, if the \(\vx_i\) vectors are low-dimensional and if we instead build \(\vx\) as the Kronecker product of (scaled) random unit vectors on the complex sphere, then as few as \(\nicefrac{1.33^k}{\eps^2}\) samples suffice.
We show that Hutchinson's Estimator converges slowest when \mA itself also has Kronecker structure.
We conclude with some theoretical evidence suggesting that, by combining Hutchinson's Estimator with other techniques, it may be possible to avoid the exponential dependence on \(k\).

\end{abstract} 

\section{Introduction}
\label{sec:intro}

A ubiquitous problem in scientific computing is that of estimating the trace of a \(D \times D\) matrix \mA that can be only accessed via matrix-vector products.
In other words, we are given access to an oracle that can evaluate \(\mA\vx\) for any \(\vx\in\bbR^D\), and our goal is to approximate \(\tr(\mA)\) with as few oracle queries as possible.
This \emph{matrix-vector oracle model} or \emph{implicit matrix model} is a common computational framework used in much of the numerical linear algebra community \cite{sun2021querying,chen2022krylov,halikias2022matrix,bakshi2022low,persson2022improved}.
However, in many applications, we cannot efficiently evaluate \(\mA\vx\) for all vectors \(\vx\in\bbR^D\).
Instead, only some matrix vector products can be efficiently computed.

In this paper, we study trace estimation in the \emph{Kronecker-matrix-vector oracle model}, also known as the \emph{rank-one vector model}, where we can evaluate \(\mA\vx\) for any vector \(\vx\) that can can be written as a Kronecker product of smaller vectors, i.e. \(\vx = \vx_1 \otimes \vx_2 \otimes \cdots \otimes \vx_k\) for some vectors \(\vx_1,\ldots,\vx_k \in \bbR^d\).
Here, \(\otimes\) denotes the \emph{Kronecker product}, which is a common primitive in many tensor-based algorithms and in much of quantum physics.
The number of vectors (\(k\)) and the number of entries (\(d\)) in each of \(\vx_1,\dots,\vx_k\) are parameters of the model.

Our study is motivated by recent literature on entanglement estimation in quantum states~ \cite{feldman2022entanglement}.
This computational model has also been studied in other applied domains \cite{chen2022krylov, kueng2017low}.
An important setting where we can efficiently compute Kronecker-matrix-vector products  is when \mA is the sum of a small number of matrices, where each summand is a Kronecker product of \(k\) matrices.
Such matrices occur as Hamiltonians of various quantum models~\cite{wang23qubit}.
Another instance in which we can efficiently compute Kronecker-matrix-vector products is when \mA is the flattening of a tensor network.
This is the case in \cite{feldman2022entanglement}, where the oracle is efficiently implemented when \mA describes a density operator that represents certain systems of quantum particles.
Recently, there has been some effort to find algorithms that can efficiently solve linear algebra problems using this oracle, although the papers do not explicitly define such a Kronecker-matrix-vector oracle \cite{bujanovic2021norm,avron2014subspace,ahle2020oblivious,sun2021tensor,bujanovic2024subspace}.
In this paper, we focus on the fundamental problem of trace estimation.

The trace estimation problem in the Kronecker-matrix-vector oracle model can be exactly solved using \(D = d^k\) oracle queries.
Since each standard basis vector \(\ve_1,\ldots,\ve_D\) obeys Kronecker structure, we can exactly look up each diagonal entry of \(\mA\) and evaluate \(\tr(\mA) = \sum_{i=1}^D \ve_i^\intercal\mA\ve_i\).
So, our goal is to estimate the trace of \mA to \((1\pm\eps)\) relative error using \(o(d^k)\) matrix-vector products.
In the standard non-Kronecker setting, the most fundamental algorithm for trace estimation is Hutchinson's Estimator \cite{girard1987algorithme,hutchinson1989stochastic}:
\[
	H_\ell(\mA) \defeq \frac1{\ell}\sum_{j=1}^{\ell} {\vx^{(j)}}^{\intercal} \mA\vx^{(j)}
	\hspace{1cm} \text{where} \hspace{1cm}
	\vx^{(j)} \overset{iid}{\sim} \cD.
\]
If the distribution of random vectors has \(\E[\vx] = \vec 0\) and \(\E[\vx\vx^\intercal] = \mI\), then \(\E[H_{\ell}(\mA)] = \tr(\mA)\).
Further, if \(\cD\) is a Gaussian distribution, a Rademacher distribution, or the uniform distribution over the sphere of radius \(\sqrt d\), then \(\Var[H_\ell(\mA)] \leq \frac{2}{\ell}\norm{\mA}_F^2\).
In particular, under the common assumption that \mA is PSD, so that \(\norm{\mA}_F \leq \tr(\mA)\), we then find that the standard deviation of Hutchinson's estimator is
\[
	\sqrt{\Var[H_\ell(\mA)]} \leq \sqrt{\tsfrac2\ell} \norm{\mA}_F \leq \sqrt{\tsfrac2\ell} \tr(\mA) \leq \eps\tr(\mA)
\]
when we take \(\ell \geq \Omega(\frac1{\eps^2})\) queries.
For large dimension \(D\), this very simple algorithm is much more efficient than exactly computing the diagonal entries of \mA, since the estimator has no dependence on the size of \mA at all.

This estimator can be naturally generalized to the Kronecker-matrix-vector oracle model by simply changing the distribution \cD.
Instead of taking \cD to be a distribution over Gaussian or Rademacher vectors, we now take \cD to be the Kronecker of \(k\) vectors drawn from some other isotropic zero-mean distribution \(\cD'\):
\[
	\vx = \vx_1 \otimes \vx_2 \otimes \cdots \otimes \vx_k
	\hspace{1cm} \text{where} \hspace{1cm}
	\vx_{i} \overset{iid}{\sim} \cD'
\]
Notably, if \(\E[\vx_i] = \vec0\) and \(\E[\vx_i\vx_i^\intercal]=\mI\), then we also immediately have that \(\E[\vx] = \vec0\) and \(\E[\vx\vx^\intercal]=\mI\), so that \(\E[H_\ell(\mA)] = \tr(\mA)\).
This simple estimator, which we refer to as the Kronecker-Hutchinson Estimator, fits the Kronecker-matrix-vector oracle model, and we may hope that it is efficient enough to estimate the trace of \mA very quickly.
At a high level, we answer this in the negative, proving novel and very tight bounds on the variance of the Kronecker-Hutchinson Estimator, showing that the sample complexity of the estimator has an unavoidable exponential dependence on \(k\).

    However, we also show that we can significantly improve this exponential dependence by changing our base distribution \(\cD'\).
    In the non-Kronecker matrix-vector case, there are a few common choices of \(\cD'\) (i.e. Gaussian, Rademacher, uniform on the sphere), all of which only change our sample complexity by constant factors.
    However, in the Kronecker case, the difference in choice of \(\cD'\) can be the difference of a sample complexity that scales as \(3^k\) versus \(1.33^k\).

\subsection{Prior Work}
We first remark there is a great deal of literature on trace estimation without the Kronecker constraint.
The design of Hutchinson's Estimator is attributed to \cite{girard1987algorithme,hutchinson1989stochastic}, and some of the core papers that analyze the estimator are \cite{AT11,roosta2015improved,cortinovis2021randomized}.

Although relatively a new topic, trace estimation in the Kronecker-matrix-vector oracle model has recently been explored in the literature.
From the physics literature, the works of \cite{feldman2022entanglement} and \cite{huang2020predicting} analyze trace estimators when \(\mA = \otimes_{i=1}^k \mA_i\), and shows some empirical results for their performance.
From a theoretical perspective, \cite{vershynin2020concentration} provides bounds that can be used to analyze the Kronecker-Hutchinson Estimator.
However, the paper provides very coarse bounds for this problem, showing that if \(k = O(d)\) then we have \(\vx^\intercal\mA\vx \in \tr(\mA) \pm O(kd^{k-1}\norm{\mA}_2)\), where \(\norm{\mA}_2\) is the operator norm of \mA.
To obtain a relative error \((1\pm\eps)\), this suggests an overall sample complexity of \(\ell = O(\frac{kd^{k-1}}{\eps^2} \frac{\norm{\mA}_2}{\tr(\mA)}) = O(\frac{D}{\eps^2} \frac{\norm{\mA}_2}{\tr(\mA)})\), which is generally more than \(D\) queries.

The research of \cite{bujanovic2021norm} directly analyzes the Kronecker-Hutchinson Estimator when \(k=2\), showing first that \(\ell = O(\frac{1}{\eps^2})\) samples suffice to ensure that \(H_\ell(\mA)\geq(1-\eps)\tr(\mA)\), and second that \(\ell = O(\frac d{\eps^2} \frac{\norm{\mA}_2}{\tr(\mA)})\) samples suffice to ensure that \(H_\ell(\mA) \leq (1+\eps)\tr(\mA)\).
Lastly, Remark 2 of the arXiv version of \cite{ahle2020oblivious} is a result on using a Kronecker-Johnson-Lindenstrauss transform, and implies that \(O(\frac{3^k}{\eps^2})\) samples suffice for the Kronecker-Hutchinson Estimator with Rademacher vectors to achieve relative error \((1\pm\eps)\).
While this worst-case bound is tight, the paper gives no way to understand when this bound is lose, and for what matrices fewer samples are needed.
Further, all the aforementioned theoretical papers rely on heavy-duty analysis, for instance involving Gaussian and Rademacher Chaos theory.
In contrast, our contributions give tighter results using simpler techniques.

We also note that \cite{ahle2020oblivious} does also include a Johnson-Lindenstrauss style guarantee for working with Kronecker-shaped data using a sketch that has \(O(\frac k{\eps^2})\) rows.
However, this sketch does not actually operate in the Kronecker-matrix-vector oracle model.

\subsection{Our Contributions}
\label{sec:our-contributions}

Our core technical contributions provide a new understanding of the sample complexity of the Kronecker-Hutchinson Estimator when our query vectors are formed as the Kronecker product of iid vectors from some base distribution \(\cD'\).
First, we derive the exact expression of the variance of the estimator when our base distribution \(\cD'\) is Gaussian.
This also provides an exact variance when \(\cD'\) is uniformly distributed on the sphere, and this expression suffices to upper bound the variance when the estimator instead uses a Kronecker product of Rademacher vectors.
For PSD matrices, our results imply a worst-case upper bound of \(\ell = O(\frac{3^k}{\eps^2})\) on the number of samples needed to estimate the trace of \mA to relative error \((1\pm\eps)\) when \(\cD'\) is Gaussian.

We also provide a coarser result, showing how we can relate any base distribution \(\cD'\) to the worst-case sample complexity \(\ell\) needed for PSD matrix trace estimation.
When \(\cD'\) is either Rademacher or uniformly distributed on the sphere, then we converge exponentially faster than in the Gaussian case when \(d=o(k)\).
We match these upper bounds with a lower bound showing that the Kronecker-Hutchinson Estimator with iid zero-mean unit-variance vectors has to use \(\ell = \Omega(\frac{1}{\eps^2}(3-\frac2d)^k)\) samples to achieve standard deviation \(\eps\tr(\mA)\), exactly matching the upper bound when \(\cD'\) is Rademacher.
Lastly, we show that if we are allowed to use complex Kronecker-matrix-vector products (i.e. if \(\vx = \vx_1 \otimes \cdots \otimes \vx_k\) where \(\vx_i \in \bbC^{d}\)), then for reasons discussed in \Cref{sec:complex-matvec-intro} our computational fundamentally shifts.
This increase in expressive power allows the Gaussian sample complexity to improve to \(\Theta(\frac{2^k}{\eps^2})\), with even better rates when \(d = o(k)\) for complex Rademacher and uniformly distributed vectors on the complex sphere of radius \(\sqrt d\).
In the nicest case, when \(d=2\) and we are allowed to use complex vectors, by taking \(\cD'\) to be uniform on the sphere, we get a sample complexity of \(\ell = O(\frac{1.33^k}{\eps^2})\).

\begin{table}[H]
\centering
\begin{tabular}{@{}clccc@{}} \toprule
	& & \multicolumn{3}{c}{Worst-Case PSD \mA}  \\ \cmidrule(lr){3-5}
	Field
		& Base Distribution
		& Arbitrary \(d\)
		& \(d=2\)
		& \(d=\Omega(k)\)
		\\ \midrule
	\(\bbR\)
		& Gaussian
		& \(3^k\)
		& \(3^k\)
        & \(3^k\) 
		\\[0.25em]
		& Rademacher
		& \((3-\frac2d)^k\)
		& \(2^k\)
        & \(3^k\)
		\\[0.25em]
		& Uniform on Sphere
		& \((3-\frac{6}{d+2})^k\)
		& \(1.5^k\)
		& \(3^k\)
		\\[0.25em] 
	\(\bbC\)
		& Gaussian
		& \(2^k\)
		& \(2^k\)
        & \(2^k\)		
		\\[0.25em]
		& Rademacher
		& \((2-\frac1d)^k\)
		& \(1.5^k\)
        & \(2^k\)
		\\[0.25em]
		& Uniform on Sphere
		& \((2-\frac{2}{d+1})^k\)
		& \(1.33^k\)
		& \(2^k\)
		\\[0.25em]
		\bottomrule
\end{tabular}
\caption{
    Summary of sample complexities for estimating the trace of a PSD matrix \mA to relative error \((1\pm\eps)\) when \(\eps = O(1)\), under different choices of base distribution \(\cD'\).
    The sample complexities all take \(\mA\) to be a worst-case input matrix.
    The first three rows constrain us to take \(\vx\in\bbR^{d^k}\), while the last three rows takes \(\vx\in\bbC^{d^k}\).
    The first column expresses the exact worst-case sample complexity for arbitrary \(d\) and \(k\), while the second and third columns specialize to the settings where \(d=2\) or \(d = \Omega(k)\).
    We see that when \(d=2\), in both the real and complex cases, Rademacher vectors and uniformly sampled vectors from the sphere exponentially outperform Gaussian vectors.
    However, when \(d = \Omega(k)\), this advantage dissapears.
    These results are formalized in \Cref{thm:real-psd} in the real case and \Cref{thm:complex-psd} in the complex case.
}
\label{table:exp-rates}
\end{table}

We now turn our attention to more concretely stating our technical results, starting with the exact variance of the Kronecker-Hutchinson Estimator when \(\cD'\) is Gaussian.
We first recall the \emph{partial trace} operator from quantum physics.
In quantum physics, a Hermitian \emph{density matrix} \(\mA\in\bbC^{d^k \times d^k}\) is roughly speaking used to represent the joint probability distribution for a set of \(k\) particles.
Suppose these particles are partitioned into two sets, so that \([k] = \cS \cup \bar\cS\).
Further suppose that Alice can observe the particles in \cS, and that Bob can observe the particles in \(\bar\cS\), but they cannot observe each other's particles.
Then, the partial trace of \mA with respect to \cS, denoted \(\tr_{\cS}(\mA)\), describes the marginal distribution of the particles that Bob observes.
This is often called ``tracing out the subsystem \cS''.

We include a general formal definition of the partial trace in \Cref{sec:prelims}, but a rough intuition is gained by examining the \(k=2\) case.
Here, we can break down \(\mA = \bbR^{d^2 \times d^2}\) as a block matrix full of \(\mA_{ij}\in\bbR^{d \times d}\):
\[
	\mA = \bmat{
		\mA_{11} & \mA_{12} & \ldots & \mA_{1d} \\
		\mA_{21} & \mA_{22} & \ldots & \mA_{2d} \\
		\vdots & \vdots & \ddots & \vdots \\
		\mA_{d1} & \mA_{d2} & \ldots & \mA_{dd} \\
	}.
\]
Then, the partial trace of \mA with respect to \(\{1\}\) is the sum of the diagonal blocks, i.e. \(\tr_{\{1\}}(\mA) = \sum_{i=1}^d \mA_{ii}\), and the partial trace of \mA with respect to \(\{2\}\) is the matrix that contains the trace of each block matrix, i.e.
\[
	\tr_{\{2\}}(\mA) = \bmat{
		\tr(\mA_{11}) & \tr(\mA_{12}) & \ldots & \tr(\mA_{1d}) \\
		\tr(\mA_{21}) & \tr(\mA_{22}) & \ldots & \tr(\mA_{2d}) \\
		\vdots & \vdots & \ddots & \vdots \\
		\tr(\mA_{d1}) & \tr(\mA_{d2}) & \ldots & \tr(\mA_{dd}) \\
	}.
\]
We also need to recall the \emph{partial transpose} operator from quantum physics.
This operator is often used in quantum physics, but lacks an intuitive physical meaning analogous to what the partial trace enjoys.
Like the partial trace, the partial transpose can also be taken over any subset of particles \(\cV \subseteq [k]\).
We denote the partial transpose of \mA with respect to \cV as \(\mA^{\ptran{\cV}}\), and call this ``transposing the subsystem \cV''.
We have a formal definition of the partial transpose in \Cref{sec:prelims}.
When \(k=2\), the two partial transposes of \mA are:
\[
	\mA^{\ptran{\{1\}}}
	= \bmat{
		\mA_{11} & \mA_{21} & \cdots & \mA_{d1} \\
		\mA_{12} & \mA_{22} & \cdots & \mA_{d2} \\
		\vdots & \vdots & \ddots & \vdots \\
		\mA_{1d} & \mA_{2d} & \cdots & \mA_{dd}
	}
	\hspace{0.5cm}\text{and}\hspace{0.5cm}
	\mA^{\ptran{\{2\}}}
	= \bmat{
		\mA_{11}^\intercal & \mA_{12}^\intercal & \cdots & \mA_{1d}^\intercal \\
		\mA_{21}^\intercal & \mA_{22}^\intercal & \cdots & \mA_{2d}^\intercal \\
		\vdots & \vdots & \ddots & \vdots \\
		\mA_{d1}^\intercal & \mA_{d2}^\intercal & \cdots & \mA_{dd}^\intercal
	}.
\]
Having setup both the partial trace and the partial transpose, we can now state our first theorem:
\begin{theorem}
\label{thm:real-partial-trace-transpose}
	Fix \(d,k\in\bbN\), and let \(\mA\in\bbR^{d^k \times d^k}\).
	Let \(\vx = \vx_1 \otimes \cdots \otimes \vx_k\), where each \(\vx_i\) is independent and is either a Rademacher vector or a \(\cN(\vec0,\mI)\) Gaussian vector.
	Let \(\bar\mA \defeq \frac1{2^k} \sum_{\cV \subseteq [k]} \mA^{\ptran\cV}\) be the average of all partial transposes of \mA.
	Then,
	\(\Var[\vx^\intercal\mA\vx] \leq \sum_{\cS\subset[k]} 2^{k - \abs{\cS}} \norm{\tr_{\cS}(\bar\mA)}_F^2\).
	Furthermore, if all \(\vx_i\) are \(\cN(\vec0,\mI)\), then this expression is an equality.
\end{theorem}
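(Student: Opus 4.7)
The plan is to establish the Gaussian equality by a direct fourth-moment expansion, recognise the resulting tensor contractions as compositions of partial traces and partial transposes, and then pass to the mixed/Rademacher case by a one-mode-at-a-time conditioning argument. The starting point is
\[
\E[(\vx^\intercal\mA\vx)^2] = \sum_{i,j,k,l\in[d]^k}\mA_{ij}\mA_{kl}\prod_{t=1}^{k}\E\bigl[(\vx_t)_{i_t}(\vx_t)_{j_t}(\vx_t)_{k_t}(\vx_t)_{l_t}\bigr],
\]
which factorises by independence of the $\vx_t$. When $\vx_t$ is Gaussian, Isserlis' theorem evaluates each per-mode fourth moment as $\delta_{i_t j_t}\delta_{k_t l_t}+\delta_{i_t k_t}\delta_{j_t l_t}+\delta_{i_t l_t}\delta_{j_t k_t}$, giving three pairings per mode that I label \emph{trace}, \emph{Frobenius}, and \emph{swap}.

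Expanding the product over $t$ yields $3^k$ terms indexed by an assignment $\tau\in\{T,F,P\}^k$; each $\tau$ determines disjoint subsets $\cS=\tau^{-1}(T)$ and $\cT=\tau^{-1}(P)$. A direct index-chase shows that the trace pairing in mode $t$ applies a partial trace on that mode inside both copies of $\mA$, the Frobenius pairing identifies corresponding indices across the two copies, and the swap pairing applies a partial transpose on that mode. Since partial trace and partial transpose commute on disjoint modes, the contraction for $\tau$ equals
\[
\langle \tr_\cS(\mA),\,[\tr_\cS(\mA)]^{\ptran\cT}\rangle_F.
\]
To sum over $\cT\subseteq[k]\setminus\cS$ for fixed $\cS$, I use that partial transpose is a self-inverse isometry with $(\cdot)^{\ptran{\cT_1}\ptran{\cT_2}}=(\cdot)^{\ptran{\cT_1\triangle\cT_2}}$; a short double-counting gives
\[
\sum_{\cT\subseteq[k]\setminus\cS}\langle \mB,\mB^{\ptran\cT}\rangle_F = 2^{k-|\cS|}\,\norm{\bar\mB}_F^2
\]
for any matrix $\mB$ on modes $[k]\setminus\cS$, with $\bar\mB$ the average of its partial transposes. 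Partial transpose on indices inside $\cS$ is absorbed by $\tr_\cS$, so $\overline{\tr_\cS(\mA)}=\tr_\cS(\bar\mA)$, and combining,
\[
\E[(\vx^\intercal\mA\vx)^2] = \sum_{\cS\subseteq[k]} 2^{k-|\cS|}\,\norm{\tr_\cS(\bar\mA)}_F^2.
\]
The $\cS=[k]$ term equals $\tr(\mA)^2$ (partial transposes preserve trace), so subtracting the squared mean yields the Gaussian equality over proper subsets of $[k]$.

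To cover mixed and Rademacher vectors, I interpolate one mode at a time: fix $t$, condition on $\{\vx_s\}_{s\neq t}$, and write $\vx^\intercal\mA\vx=\vx_t^\intercal\mB\vx_t$ for a matrix $\mB$ determined by the conditioning. The conditional mean $\tr(\mB)$ depends only on $\E[\vx_t\vx_t^\intercal]=\mI$, common to both distributions, while the conditional Gaussian variance $\norm{\mB}_F^2+\tr(\mB^2)$ exceeds the conditional Rademacher variance $\norm{\mB}_F^2+\tr(\mB^2)-2\norm{\diag(\mB)}_2^2$ by the non-negative quantity $2\norm{\diag(\mB)}_2^2$. The law of total variance then shows that swapping any single Rademacher mode for a Gaussian one can only increase the overall variance, and iterating over all modes yields the stated inequality. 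The main obstacle is the index bookkeeping in the middle paragraph: one must verify, against the Kronecker structure, that the three per-mode pairings really do translate into partial trace, Frobenius contraction, and partial transpose, and that these operations commute on disjoint modes; once that identification is secured, the summation over $\cT$ via the group structure of partial transposes and the law-of-total-variance argument are routine.
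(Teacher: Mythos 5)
Your proof is correct, and it takes a genuinely different route from the paper. The paper proceeds by \emph{conditioning one mode at a time}: it introduces the post-measurement reduced density matrix $\pmrdm{:i}{\vx_{:i}}{\cdot}$, applies the classical ($k=1$) Hutchinson variance formula to the outermost mode, and then peels off the remaining modes via a two-part recursion (\Cref{lem:frob-inductive-step}, \Cref{lem:expected-pmrdm-frob}), finishing with a ``unique gap'' combinatorial argument to flatten the nested sums into the single sum over $\cS\subset[k]$. Your approach instead does a single global Wick/Isserlis expansion of $\E[(\vx^\intercal\mA\vx)^2]$ and reads off the $3^k$ per-mode pairings directly as partial traces, Frobenius contractions, and partial transposes of $\mA$, then collapses the sum over transpose-assignments $\cT$ via the isometry and group structure of the partial transpose. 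What your route buys: the appearance of the partial trace and partial transpose is structurally forced by the Wick pairings rather than assembled from a recursion, the weight $2^{k-\abs\cS}$ emerges transparently from the double-count over symmetric differences $\cT_1\triangle\cT_2$, and the identity $\overline{\tr_\cS(\mA)} = \tr_\cS(\bar\mA)$ gets a short one-line justification. What the paper's route buys: the intermediate lemmas on $\E\norm{\pmrdm{:i}{\vx_{:i}}{\mB}}_F^2$ are independently reusable (they are invoked again verbatim for the complex-valued \Cref{thm:complex-upper-bounds}), and the argument never needs to manipulate fourth-moment tensor contractions explicitly.

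Your treatment of the Rademacher/mixed case is also valid but different in spirit: you fix a mode $t$, condition on the rest, compare the conditional Gaussian variance $\norm{\mB}_F^2 + \tr(\mB^2)$ against the Rademacher variance $\norm{\mB}_F^2 + \tr(\mB^2) - 2\norm{\diag(\mB)}_2^2$, and invoke the law of total variance to conclude that swapping a single Rademacher mode for a Gaussian one is variance-monotone; iterating yields the inequality. The paper accomplishes the same by noting that its single inequality (in \Cref{lem:hutch-classical-variance}, hence propagated through \Cref{eq:squared-expected-classic-hutch}) is the only place the distribution enters the recursion, and that it is an equality precisely for Gaussians. Both are sound. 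The only thing I would flag is that you should, as you yourself note, carefully write out the index-chase showing that the trace pairing on mode $t$ independently traces mode $t$ on \emph{both} copies of $\mA$ (giving $\tr_\cS(\mA)$ in both factors) and that the swap pairing lands exactly on the partial transpose of the surviving $(k-\abs\cS)$-mode matrix; this is the load-bearing step, and it does check out.
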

\Cref{thm:real-partial-trace-transpose} is proven in \Cref{sec:exact-var-proof}.
The matrix \(\bar\mA\) seems odd at a glance, so we first justify why it is natural.
For the classical \(k=1\) setting with Gaussian vectors, we can write that \(\Var[\vx^\intercal\mA\vx] = 2\norm{\mA}_F^2\) for all symmetric matrices \mA \cite{AT11,ram900}.
For asymmetric matrices, this equality does not hold.
Instead, we let \(\hat\mA \defeq \frac{\mA+\mA^\intercal}{2}\), and note that \(\vx^\intercal\mA\vx = \vx^\intercal\hat\mA\vx\) for all \(\vx\in\bbR^D\).
Since \(\hat\mA\) is symmetric, we then find that \(\Var[\vx^\intercal\mA\vx] = 2\norm{\hat\mA}_F^2\).
\Cref{thm:real-partial-trace-transpose} generalizes this idea to all possible ways to partially transpose across all possible subsystems \(\cV\subseteq[k]\).

Next, by the triangle inequality, note that \(\norm{\hat\mA}_F \leq \norm{\mA}_F\), and that symmetric matrices have \(\mA=\hat\mA\).
So, we know that \(\Var[\vx^\intercal\mA\vx] = 2\norm{\hat\mA}_F^2 \leq 2\norm{\mA}_F^2\) holds for all matrices \mA, with equality for symmetric \mA.
This bound can be easier to work with, since it does not require worrying about \(\hat\mA\).
For large \(k\), we similarly bound \(\norm{\tr_\cS(\bar\mA)}_F \leq \norm{\tr_\cS(\mA)}_F\), providing a more approachable version of \Cref{thm:real-partial-trace-transpose}:
\begin{theorem}
\label{thm:real-partial-trace}
	Fix \(d,k\in\bbN\), and let \(\mA\in\bbR^{d^k \times d^k}\).
	Let \(\vx = \vx_1 \otimes \cdots \otimes \vx_k\), where each \(\vx_i\) is independent and is either a Rademacher vector or a \(\cN(\vec0,\mI)\) Gaussian vector.
	Then,
	\(\Var[\vx^\intercal\mA\vx] \leq \sum_{\cS\subset[k]} 2^{k - \abs{\cS}} \norm{\tr_{\cS}(\mA)}_F^2\).
	Furthermore, if all \(\vx_i\) are \(\cN(\vec0,\mI)\) and if \(\mA=\bar\mA\), then this expression is an equality.
\end{theorem}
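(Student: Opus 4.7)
The plan is to deduce Theorem~\ref{thm:real-partial-trace} directly from Theorem~\ref{thm:real-partial-trace-transpose} by establishing the entrywise inequality \(\|\tr_\cS(\bar\mA)\|_F \leq \|\tr_\cS(\mA)\|_F\) for every subset \(\cS \subseteq [k]\), together with equality whenever \(\mA=\bar\mA\). Substituting into the conclusion of Theorem~\ref{thm:real-partial-trace-transpose} then yields the desired upper bound (for both Rademacher and Gaussian entries), and the equality statement in the Gaussian case follows since, if \(\mA=\bar\mA\), the inequality is trivially tight and the Gaussian equality from Theorem~\ref{thm:real-partial-trace-transpose} carries over.

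The heart of the argument is a commutation identity between the partial trace and the partial transpose. Specifically, for any subsets \(\cS, \cV \subseteq [k]\), I expect
\[
    \tr_{\cS}\bigl(\mA^{\ptran{\cV}}\bigr) \;=\; \bigl(\tr_{\cS}(\mA)\bigr)^{\ptran{\cV \cap \bar\cS}},
\]
which I would verify from the definitions: transpositions on indices that survive the trace simply pass through, while transpositions on indices that are traced out have no effect because the ordinary trace is transpose-invariant (this is already evident from the \(k=2\) formulas displayed in the introduction). Averaging this identity over \(\cV\subseteq[k]\) and reindexing by \(\cW \defeq \cV \cap \bar\cS\) (noting that each such \(\cW\) is hit exactly \(2^{|\cS|}\) times), I obtain
\[
    \tr_{\cS}(\bar\mA) \;=\; \frac{1}{2^{|\bar\cS|}}\sum_{\cW \subseteq \bar\cS} \bigl(\tr_{\cS}(\mA)\bigr)^{\ptran{\cW}},
\]
i.e.\ partial-tracing the averaged partial transposes equals the partial-transpose-averaging of the partial trace (applied in the subsystem \(\bar\cS\)).

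From here the Frobenius bound is immediate: each partial transpose is a permutation of matrix entries, hence preserves the Frobenius norm, so by the triangle inequality
\[
    \norm{\tr_{\cS}(\bar\mA)}_F \;\leq\; \frac{1}{2^{|\bar\cS|}}\sum_{\cW \subseteq \bar\cS} \norm*{\bigl(\tr_{\cS}(\mA)\bigr)^{\ptran{\cW}}}_F \;=\; \norm{\tr_{\cS}(\mA)}_F.
\]
Plugging this into Theorem~\ref{thm:real-partial-trace-transpose} gives the upper bound, and when \(\mA=\bar\mA\) each such bound is an equality, so Theorem~\ref{thm:real-partial-trace-transpose}'s equality clause in the Gaussian case transfers unchanged.

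The main obstacle I anticipate is bookkeeping the commutation identity \(\tr_\cS(\mA^{\ptran\cV}) = (\tr_\cS(\mA))^{\ptran{\cV \cap \bar\cS}}\): writing \(\mA\) in multi-index notation \(\mA_{(i_1,\ldots,i_k),(j_1,\ldots,j_k)}\), one must carefully track which index pairs get swapped by \(\ptran{\cV}\) and which get summed along the diagonal by \(\tr_\cS\), and then verify that the swaps inside \(\cS\) disappear after summation while the swaps inside \(\bar\cS\) survive. Once this identity is in hand, everything else is a reindexing of a sum and a single application of the triangle inequality.
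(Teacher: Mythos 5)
Your proposal is correct and follows essentially the same route as the paper: reduce Theorem~\ref{thm:real-partial-trace} to the inequality \(\norm{\tr_\cS(\bar\mA)}_F \leq \norm{\tr_\cS(\mA)}_F\), establish that partial transposes commute with partial traces over disjoint subsystems and disappear under partial traces over the same subsystem, invoke the Frobenius-norm invariance of the partial transpose (the paper's Lemma~\ref{lem:partial-trace-frob-invariant}), and finish with the triangle inequality. Your packaging \(\tr_\cS(\mA^{\ptran\cV}) = (\tr_\cS(\mA))^{\ptran{\cV\cap\bar\cS}}\) is a slightly tidier single identity than the paper's two-step reduction via Lemma~\ref{lem:partial-trace-of-partial-transpose} plus a separate commutation for \(\cV\cap\cS=\emptyset\), but the content is the same.
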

\Cref{thm:real-partial-trace} is proven in \Cref{sec:proof-of-partial-trace}.
Note that many reasonable matrices will have \(\mA = \bar\mA\).
For instance, recall that the Kronecker-matrix-vector oracle model may be used when \mA is the sum of a small number of matrices, where each summand is a Kronecker product of \(k\) matrices.
If each summand is the Kronecker product of \emph{symmetric} matrices, then we have \(\mA = \bar\mA\).

We also briefly note that since a Gaussian vector can be decomposed as the product of a $\chi^2$-distributed random variable and a uniformly distributed random unit vector, we can use \Cref{thm:real-partial-trace-transpose} to exactly express the variance of the Kronecker-Hutchinson estimator when \(\cD'\) is uniform from the sphere:
\begin{corollary}
\label{corol:random-unit-vec-variance}
Fix \(d,k\in\bbN\) and let \(\mA\in\bbR^{d^k \times d^k}\).
Let \(\vu = \vu_1 \otimes \cdots \otimes \vu_k\) where each \(\vu_i\) is uniformly at random drawn from the sphere such that \(\norm{\vu_i}_2 = \sqrt d\).
Let \(\vx = \vx_1 \otimes \cdots \otimes \vx_k\) where each \(\vx_i\) is \(\cN(\vec0,\mI)\).
Then, \(\Var[\vu^\intercal\mA\vu] = \frac{\Var[\vx^\intercal\mA\vx] - ((1+\frac2d)^k-1)(\tr(\mA))^2}{(1+\frac2d)^k} \leq \frac1{(1+\frac2d)^k} \Var[\vx^\intercal\mA\vx]\).
\end{corollary}

\Cref{corol:random-unit-vec-variance} is proven in \Cref{sec:exact-var-unit-vec}.
To achieve \((1\pm\eps)\) relative error guarantees, we now add the assumption that \(\mA\) is PSD.
We start by showing a simple result that suffices to give tight worst-case bounds on the sample complexity of this trace estimation problem:
\begin{theorem}
\label{thm:coarse-psd-rate}
    Let \(\cD'\) be a distribution over vectors \(\vx_i\in\bbR^d\) such that \(\E[\vx_i]=\vec0\), \(\E[\vx_i\vx_i^\intercal] = \mI\), and \(\Var[\vx_i^\intercal\mB\vx_i] \leq C(\tr(\mB))^2\) for all PSD \mB.
    Then, letting \(\vx = \vx_1 \otimes \cdots \otimes \vx_k\), we have \(\Var[\vx^\intercal\mA\vx] \leq (1+C)^k (\tr(\mA))^2\).
    In particular, taking \(\ell = O(\frac{(1+C)^k}{\eps^2})\) suffices for \(H_\ell(\mA)\) to have standard deviation at most \(\eps\tr(\mA)\).
\end{theorem}

\Cref{thm:coarse-psd-rate} is proven in \Cref{sec:real-psd}.
By plugging in various value of \(C\) associated with different distributions \(\cD'\), we get a plethora of different results, summarized both the following theorem and in \Cref{table:exp-rates}.
\begin{theorem}
\label{thm:real-psd}
Fix \(d,k\in\bbN\), and let \(\mA\in\bbR^{d^k \times d^k}\) be a PSD matrix.
Let \(\vx = \vx_1 \otimes \cdots \otimes \vx_k\) where each \(\vx_i\) is drawn iid from a distribution \(\cD'\).
If \(\cD'\) is Gaussian, then \(\Var[\vx^\intercal\mA\vx] \leq 3^k (\tr(\mA))^2\).
If \(\cD'\) is Rademacher, then \(\Var[\vx^\intercal\mA\vx] \leq (3-\frac2d)^k (\tr(\mA))^2\).
If \(\cD'\) is uniformly distributed over vectors with \(\norm{\vx_i}_2=\sqrt{d}\), then \(\Var[\vx^\intercal\mA\vx] \leq (3-\frac6{d+2})^k (\tr(\mA))^2\).
\end{theorem}
\Cref{thm:real-psd} is also proven in \Cref{sec:real-psd}.
Notably, as mentioned in \Cref{table:exp-rates}, when we allow \(\cD'\) to be a distribution over complex vectors, our asymptotics improve significantly.
We discuss this further in \Cref{sec:complex-matvec-intro}.
For now, we take a moment to note that when \(k=2\), we recover the case of \cite{bujanovic2021norm}, where we get the following:

\begin{corollary}
	Let \(\mA \in \bbR^{d^2 \times d^2}\).
	Let \(\vx = \vx_1 \otimes \vx_2\), where each \(\vx_i\) is independent and is a Rademacher vector or a \(\cN(\vec0,\mI)\) Gaussian vector or uniformly distributed on the sphere.
	Let \(\bar\mA = \frac14(\mA + \mA^{\ptran{\{1\}}} + \mA^{\ptran{\{2\}}} + \mA^\intercal)\).
	Then, \(\Var[\vx^\intercal\mA\vx] \leq 2 \norm{\tr_{\{1\}}(\bar\mA)}_F^2 + 2\norm{\tr_{\{2\}}(\bar\mA)}_F^2 + 4 \norm{\bar\mA}_F^2\).
	If \mA is PSD, with probability \(\frac23\), we know that the Kronecker-Hutchinson Estimator returns a \((1\pm\eps)\) relative error estimate to the trace of \mA if \(\ell = O(\eps^{-2})\).
\end{corollary}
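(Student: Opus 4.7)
The plan is to obtain both claims by specializing the main theorems to $k=2$, followed by a standard Chebyshev argument for the sample-complexity statement.

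For the variance bound, I would invoke \Cref{thm:real-partial-trace-transpose} with $k=2$. Its sum runs over strict subsets $\cS \subsetneq \{1,2\}$, of which there are exactly three: $\emptyset$, $\{1\}$, and $\{2\}$. Taking a partial trace over the empty index set acts as the identity, so $\tr_\emptyset(\bar\mA) = \bar\mA$ contributes $2^{2}\norm{\bar\mA}_F^2 = 4\norm{\bar\mA}_F^2$. The two singleton subsets each carry coefficient $2^{2-1}=2$ and contribute $2\norm{\tr_{\{1\}}(\bar\mA)}_F^2$ and $2\norm{\tr_{\{2\}}(\bar\mA)}_F^2$. Adding these three terms yields the claimed inequality verbatim, and the explicit form of $\bar\mA$ stated in the corollary is just the $k=2$ instance of the average over $2^k=4$ partial transposes in \Cref{thm:real-partial-trace-transpose}.

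For the sample-complexity statement, I would apply \Cref{thm:real-psd} with $k=2$, which upper bounds $\Var[\vx^\intercal\mA\vx]$ by an absolute constant $c$ times $\tr(\mA)^2$. Since the $\ell$ queries are i.i.d., the Kronecker-Hutchinson estimator $H_\ell(\mA)$ satisfies $\Var[H_\ell(\mA)] \leq c\,\tr(\mA)^2/\ell$. Chebyshev's inequality then gives $\Pr[|H_\ell(\mA) - \tr(\mA)| \geq \eps\,\tr(\mA)] \leq c/(\ell\eps^2)$, so taking $\ell = 3c/\eps^2 = O(\eps^{-2})$ drives the failure probability below $1/3$ and yields the claimed $2/3$ success probability.

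I do not expect any substantial obstacle; the entire statement is a packaging exercise. The only points requiring care are the strict-subset convention in the sum of \Cref{thm:real-partial-trace-transpose} and the convention that $\tr_\emptyset(\bar\mA)=\bar\mA$, both of which can be cross-checked against the classical $k=1$ Gaussian identity $\Var[\vx^\intercal\mA\vx] = 2\norm{\hat\mA}_F^2$ recovered immediately after the theorem.
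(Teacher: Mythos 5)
Your proposal is correct and takes the same route the paper intends: the variance bound is the $k=2$ specialization of \Cref{thm:real-partial-trace-transpose} (the sum runs over the three proper subsets $\emptyset, \{1\}, \{2\}$ with coefficients $4, 2, 2$, and $\bar\mA$ specializes to the stated four-term average since $\mA^{\ptran{\{1,2\}}} = \mA^\intercal$), while the sample-complexity claim is \Cref{thm:real-psd} at $k=2$ (giving $\Var[\vx^\intercal\mA\vx] \leq 81\,\tr(\mA)^2$) combined with Chebyshev's inequality. No gaps.
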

This success probability can be boosted to \(1-\delta\) for an arbitrary \(\delta\) by picking up a \(\log(\frac1\delta)\) dependence in the sample complexity by returning the median of \(O(\log(\frac1\delta))\) iid trials of \(H_\ell(\mA)\).
We further show that the expensive rate of \(O(\frac{3^k}{\eps^2})\) is essentially tight in the worst case for the Kronecker-Hutchinson Estimator:
\begin{theorem}
\label{thm:real-psd-lower-bound}
	Fix \(d,k\in\bbN\).
	Consider Hutchinson's Estimator run with vectors \(\vx = \vx_1 \otimes \cdots \otimes \vx_k\) where \(\vx_i\) are vectors of iid real zero-mean unit-variance entries.
	Then, there exists a PSD matrix \mA such that Hutchinson's Estimator needs \(\ell = \Omega(\frac{(3-\frac2d)^k}{\eps^2})\) samples in order to have standard deviation \(\leq \eps \tr(\mA)\).
\end{theorem}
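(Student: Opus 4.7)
The plan is to exhibit a single explicit PSD matrix that is essentially as bad as possible for any distribution satisfying the hypotheses, and then simply compute its variance exactly. The natural candidate is the Kronecker-structured rank-one matrix $\mA = (\mathbbm{1}_d \mathbbm{1}_d^\intercal)^{\otimes k}$, i.e.\ the all-ones $d^k \times d^k$ matrix. This is PSD with $\tr(\mA) = d^k$, and its Kronecker structure makes the second moment factor across the $k$ independent subvectors $\vx_i$.

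First I would use independence: since $\vx^\intercal \mA \vx = \prod_{i=1}^k \vx_i^\intercal (\mathbbm{1}_d\mathbbm{1}_d^\intercal) \vx_i = \prod_{i=1}^k \bigl(\sum_{j=1}^d \vx_{i,j}\bigr)^2$, and the factors across $i$ are independent, we get
\[
    \E\bigl[(\vx^\intercal \mA \vx)^2\bigr] \;=\; \prod_{i=1}^k \E\Bigl[\bigl(\textstyle\sum_j \vx_{i,j}\bigr)^4\Bigr].
\]
Next I would expand $\E[(\sum_j \vx_{i,j})^4]$ by enumerating the index patterns. Writing $\mu_4 = \E[\vx_{i,j}^4]$, the only surviving patterns under zero mean and unit variance are the $d$ ``all four equal'' terms (contributing $d\mu_4$) and the $3d(d-1)$ ``two disjoint pairs'' terms (contributing $3d(d-1)$); every pattern involving an isolated single index vanishes because of the factor $\E[\vx_{i,j}] = 0$, regardless of any nonzero third moment. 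This yields the exact identity $\E[(\sum_j \vx_{i,j})^4] = 3d^2 + (\mu_4 - 3)d$. Since $\mu_4 \geq (\E[\vx_{i,j}^2])^2 = 1$ by Cauchy--Schwarz for any distribution with unit second moment, this is always at least $3d^2 - 2d$.

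Multiplying across the $k$ factors then gives $\E[(\vx^\intercal \mA \vx)^2] \geq (3d^2 - 2d)^k$, so
\[
    \Var[\vx^\intercal \mA \vx] \;\geq\; (3d^2 - 2d)^k - d^{2k} \;=\; (\tr(\mA))^2\bigl((3-\tfrac{2}{d})^k - 1\bigr).
\]
Dividing by $\ell$ for the $\ell$-sample estimator and requiring standard deviation at most $\eps \tr(\mA)$ forces $\ell \geq ((3-\tfrac{2}{d})^k - 1)/\eps^2 = \Omega((3-\tfrac{2}{d})^k/\eps^2)$, as desired (using $d \geq 2$, $k \geq 1$ so that $(3-\tfrac{2}{d})^k - 1 = \Omega((3-\tfrac{2}{d})^k)$; the case $d = 1$ makes the bound trivial).

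The main obstacle is really just the bookkeeping for the fourth-moment expansion, and in particular verifying that asymmetric distributions (nonzero third moment) cause no extra cross terms; the key observation is that any pattern with a lone index carries an $\E[\vx_{i,j}]=0$ factor, collapsing the sum to just the ``pair'' and ``all equal'' contributions. Everything else is a direct product-form computation enabled by choosing the adversarial matrix to have Kronecker structure, which is the point of the construction.
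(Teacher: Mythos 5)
Your proposal is correct and follows exactly the same route as the paper: the same hard instance $\mA = (\mathbbm{1}_d\mathbbm{1}_d^\intercal)^{\otimes k}$, the same factorization $\vx^\intercal\mA\vx = \prod_i(\mathbbm{1}^\intercal\vx_i)^2$ via the mixed-product property, the same enumeration of fourth-moment index patterns (with $\E[y_i^4]\geq 1$ by Jensen/Cauchy--Schwarz), and the same final rearrangement to $\ell \geq ((3-\tfrac2d)^k-1)/\eps^2$. The only cosmetic additions are the explicit identity $\E[(\sum_j y_j)^4] = 3d^2 + (\mu_4-3)d$ and the remark about $d=1$, neither of which changes the argument.
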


\Cref{thm:real-psd-lower-bound} is proven in \Cref{sec:real-psd-lower-bound}.
This worst-case complexity is rather painful.
We note that the typical performance of this estimator may be much better though.
As a first step to truly understanding when the Kronecker-Hutchinson Estimator converges quickly, we show that estimating the trace of a random rank-1 matrix is much faster than this worst-case rate:
\begin{theorem}
\label{thm:real-rank-one-estimation}
Fix \(d,n\in\bbN\).
Let \(\mA=\vg\vg^\intercal\) where \(\vg\in\bbR^{d^k}\) is a vector of iid standard normal Gaussian.
Then if \(\cD'\) is either Rademacher or uniformly distributed on the sphere, on average across the choice \(\vg\), it suffices to take \(\ell=\frac{2}{\eps^2}\) for \(H_\ell(\mA)\) to have standard deviation at most \(\eps\tr(\mA)\).
That is, \(\E_{\vg}[\Var[H_\ell(\mA) \mid \vg]] \leq \eps^2 (\E_{\vg}[\tr(\mA)])^2\).
However, if \(\cD'\) is Gaussian, then \(\ell = \Omega(\frac1{\eps^2}(1+\frac2d)^k)\) is needed to achieve the same guarantee.
\end{theorem}
\Cref{thm:real-rank-one-estimation} is proven in \Cref{sec:real-rank-one-estimation}.
We remark two aspects of this result.
First, for all choices of \(\cD'\) here, we vastly improve upon the sample complexities reflected by the worst-case analysis in \Cref{table:exp-rates}.
Second, we remark that the Rademacher distribution and the uniform distribution on the sphere achieve sample complexity independent of \(k\), while the Gaussian complexity is independent of \(k\) only if \(d = \Omega(k)\).

\subsubsection{Complex Kronecker-Matrix-Vector Products}
\label{sec:complex-matvec-intro}

Although certainly cheaper than exactly computing the trace with \(d^k\) queries to the diagonal of \mA, the rate of \(\frac{3^k}{\eps^2}\) is prohibitively expensive for large \(k\).
Our next set of results show that the bound improves if we can issue complex queries instead of real ones, that is when \(\vx = \vx_1 \otimes \cdots\otimes \vx_k\) where \(\vx_i \in \bbC^d\).
It seems realistic to say that any software implementation of the Kronecker-matrix-vector oracle could handle complex queries instead of real ones.

Note that for \(k=1\), there is almost no distinction between the real and complex oracle models, as any complex matrix-vector product can be simulated by running two real matrix-vector products.
However, this is no longer the case for large values of \(k\) and real simulation becomes expensive.
To see this, note that the real part of the Kronecker product of complex vectors might not have a Kronecker structure.
In fact, it may take \(2^k\) real-valued Kronecker-matrix-vector products to simulate a single complex-valued Kronecker-matrix-vector product.
In a vague intuitive sense, the complex Kronecker-structured vectors are exponentially less constrained than the real Kronecker-structured vectors.
Running with this intuition in mind, we prove the following analogues to \Cref{thm:real-partial-trace-transpose,thm:real-partial-trace,thm:real-psd,thm:real-rank-one-estimation,thm:real-psd-lower-bound}, all proven in \Cref{sec:complex-matvec}:
\begin{theorem}
\label{thm:complex-upper-bounds}
	Fix \(d,k\in\bbN\), and let \(\mA\in\bbR^{d^k \times d^k}\).
	Let \(\vx = \vx_1 \otimes \cdots \otimes \vx_k\), where each \(\vx_i\) is either a complex Rademacher vector or a complex Gaussian vector.
	That is, either we have \(\vx_i = \frac1{\sqrt2}\vr_i + \frac{i}{\sqrt2}\vm_i\) where \(\vr_i\) and \(\vm_i\) are \(\cN(\vec0,\mI)\) vectors, or each entry of \(\vx_i\) is drawn uniformly iid from \(\{\pm1,\pm i\}\).
	Let \(\bar\mA \defeq \frac1{2^k} \sum_{\cV \subseteq [k]} \mA^{\ptran\cV}\) be the average of all partial transposes of \mA.
	Then,
	\(\Var[\vx^\herm\mA\vx] \leq \sum_{\cS\subset[k]} \norm{\tr_{\cS}(\bar\mA)}_F^2 \leq \sum_{\cS\subset[k]} \norm{\tr_{\cS}(\mA)}_F^2\).
	If \(\vx_i\) are complex Gaussians, then the first expression is an equality.
	If we further know that \(\mA = \bar\mA\), then both expression are equalities.
	Further, if \(\vu = \vu_1 \otimes \cdots \otimes \vu_k\) where \(\vu_i\in\bbC^d\) are uniformly random vectors with \(\norm{\vu_i}_2=\sqrt d\), then \(\Var[\vu^\herm\mA\vu] = \frac{\Var[\vx^\herm\mA\vx]-((1+\frac1d)^k-1)(\tr(\mA))^2}{(1+\frac1d)^k} \leq \frac{1}{(1+\frac1d)^k} \Var[\vx^\herm\mA\vx]\).
\end{theorem}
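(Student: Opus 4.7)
The strategy is to mirror the proof of \Cref{thm:real-partial-trace-transpose}, but use the fact that Wick's theorem for \emph{proper complex} Gaussians produces only two pair-contractions per factor instead of three---exactly the combinatorial reduction that replaces the $3^k$ rate of \Cref{thm:real-psd} by $2^k$.

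First I would expand $\vx^\herm\mA\vx = \sum_{I,J}\overline{x_I}\,\mA_{IJ}\,x_J$ over multi-indices $I,J\in[d]^k$ with $x_I = \prod_j (\vx_j)_{I_j}$, and use the shared moment identities $\E[(\vx_j)_a] = 0$ and $\E[\overline{(\vx_j)_a}(\vx_j)_b]=\delta_{ab}$ (which hold for both the complex Rademacher and complex Gaussian constructions) to check $\E[\vx^\herm\mA\vx]=\tr(\mA)$. For the second moment I would expand
\[
\abs{\vx^\herm\mA\vx}^2 = \sum_{I,J,K,L} \mA_{IJ}\,\overline{\mA_{LK}}\,\overline{x_I}\,x_J\,x_K\,\overline{x_L}
\]
and apply Wick's theorem position-by-position in the complex Gaussian case. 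Because proper complex Gaussians satisfy $\E[(\vx_i)_a(\vx_i)_b]=0$, only pairings matching each conjugated entry with a non-conjugated one survive, giving exactly two terms per position:
\[
\E\bigl[\overline{(\vx_i)_{I_i}}\,\overline{(\vx_i)_{L_i}}\,(\vx_i)_{J_i}\,(\vx_i)_{K_i}\bigr] = \delta_{I_i J_i}\,\delta_{K_i L_i} + \delta_{I_i K_i}\,\delta_{J_i L_i}.
\]
Expanding the product of these two-term sums over the $k$ positions produces a sum indexed by subsets $\cS\subseteq[k]$ recording which positions pick the second pairing. Careful bookkeeping of the resulting index identifications ($I_i=J_i$, $K_i=L_i$ for $i\notin\cS$; $I_i=K_i$, $J_i=L_i$ for $i\in\cS$) collapses the $\cS$-contribution to $\sum_{a,b,c,d}\mA_{(a,b),(c,b)}\,\overline{\mA_{(a,d),(c,d)}} = \norm{\tr_{\bar\cS}(\mA)}_F^2$. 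The $\cS=\emptyset$ term equals $\abs{\tr(\mA)}^2$ and cancels against $\abs{\E[\vx^\herm\mA\vx]}^2$ upon subtraction, so reindexing by complements yields the exact partial-trace variance expression for complex Gaussians. The $\bar\mA$ refinement then follows from the triangle inequality once one notes that partial transpose preserves the Frobenius norm and commutes with partial trace up to a further partial transposition, giving $\norm{\tr_\cS(\bar\mA)}_F \leq \frac{1}{2^k}\sum_{\cV}\norm{\tr_\cS(\mA^{\ptran\cV})}_F = \norm{\tr_\cS(\mA)}_F$.

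For the complex Rademacher upper bound, the joint fourth moments agree with those of the proper complex Gaussian on the off-diagonal but differ on the all-equal event $I_i=J_i=K_i=L_i$, where the Gaussian contribution is $2$ and the Rademacher contribution is $1$. The main technical obstacle will be showing that these per-position corrections combine across the $k$ factors to contribute non-positively to $\E[\abs{\vx^\herm\mA\vx}^2]$ when contracted against the matrix elements of $\mA$; I would handle this by an inductive Rademacher-vs-Gaussian comparison analogous to the one used in \Cref{thm:real-partial-trace-transpose}, together with the observation that for complex Rademacher vectors $\abs{(\vx_j)_a}^2=1$ deterministically, so all ``diagonal'' contributions to $\vx^\herm\mA\vx$ collapse to the constant $\tr(\mA)$ and contribute zero variance. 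Finally, the PSD bound is immediate: partial traces of PSD matrices are themselves PSD, so $\norm{\tr_\cS(\mA)}_F \leq \tr(\tr_\cS(\mA)) = \tr(\mA)$, and summing over the $2^k$ subsets of $[k]$ gives $\Var[\vx^\herm\mA\vx] \leq 2^k\,\tr(\mA)^2 \leq (2^k\tr(\mA))^2$.
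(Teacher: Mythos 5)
Your Wick-expansion route is a genuinely different argument from the paper's, which instead peels off one factor $\vx_i$ at a time via post-measurement reduced density matrices (the recursion in Lemmas 4.3–4.4) and then says ``retrace, replacing the factor of $2$ from \Cref{lem:hutch-classical-variance} with the factor of $1$ from \Cref{lem:hutch-complex-variance}.'' The direct multi-index Wick expansion is arguably cleaner: it produces the $2^k$-term subset decomposition in one step rather than by induction, and it makes transparent where the $2^k$-vs-$3^k$ combinatorial gap comes from. However, the proposal as written has a bookkeeping error that is not harmless. After relabelling, $\overline{\vx^\herm\mA\vx} = \sum_{L,K} x_L\,\overline{\mA_{LK}}\,\overline{x_K}$, so in the product the conjugated variables are $\overline{x_I}$ and $\overline{x_K}$, not $\overline{x_I}$ and $\overline{x_L}$ as you wrote. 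With your stated placement, the index identifications $I_i=J_i,\,K_i=L_i$ (off $\cS$) and $I_i=K_i,\,J_i=L_i$ (on $\cS$) collapse the $\cS$-term to $\sum_{b,d}[\tr_{\bar\cS}\mA]_{b,d}\,\overline{[\tr_{\bar\cS}\mA]_{d,b}}$, which equals $\tr\!\big((\tr_{\bar\cS}\mA)^2\big)$ for real $\mA$, \emph{not} $\norm{\tr_{\bar\cS}\mA}_F^2$, and these differ whenever $\tr_{\bar\cS}\mA$ is asymmetric. The $\norm{\tr_{\bar\cS}\mA}_F^2$ form you quote is what the \emph{corrected} conjugate placement yields; so the conclusion is right, but it does not follow from the step you wrote down.

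Your framing of the role of $\bar\mA$ is also backwards, and I think this points at something worth flagging in the theorem itself. The corrected Wick calculation gives $\E[\abs{\vx^\herm\mA\vx}^2] = \sum_{\cS\subseteq[k]}\norm{\tr_\cS\mA}_F^2$ exactly (for complex Gaussians), so the variance is $\sum_{\cS\subsetneq[k]}\norm{\tr_\cS\mA}_F^2$ \emph{directly}, with no $\bar\mA$ appearing. Since $\norm{\tr_\cS\bar\mA}_F \le \norm{\tr_\cS\mA}_F$, the $\bar\mA$-sum is the \emph{smaller} of the two expressions, so it cannot be an upper bound on the variance, and the triangle-inequality ``refinement'' you propose goes in the wrong direction. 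Indeed, a direct check at $k=d=2$ with $\mA = \vu\vu^\intercal$, $\vu = \ve_1\otimes\ve_1 + \ve_2\otimes\ve_2$, gives $\Var[\vx^\herm\mA\vx] = 8 = \sum_{\cS\subsetneq[2]}\norm{\tr_\cS\mA}_F^2$ while $\sum_{\cS\subsetneq[2]}\norm{\tr_\cS\bar\mA}_F^2 = 7$, so the first inequality of \Cref{thm:complex-upper-bounds} appears to fail. The underlying issue is that ``replace $2$ by $1$'' in the recursion tacitly treats the relevant symmetrization of the intermediate complex PMRDMs as a transpose; for complex queries it is a Hermitian conjugate, which does not pull back to a partial transpose of $\mA$. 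Your Wick calculation avoids this pitfall, but then you should not try to recover a $\bar\mA$ refinement. Finally, the complex-Rademacher comparison is only gestured at: the observation that $\abs{(\vx_j)_a}^2 = 1$ deterministically is the right starting point, but the claim that the per-position fourth-moment deficit combines to a non-positive total needs an actual inductive argument (e.g., carried through the same recurrence as \Cref{lem:frob-inductive-step}), not just ``analogous to the real case.'' The PSD conclusion is fine and immediate once the subset decomposition is in place.
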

\begin{theorem}
\label{thm:complex-psd}
Fix \(d,k\in\bbN\), and let \(\mA\in\bbR^{d^k \times d^k}\) be a PSD matrix.
Let \(\vx = \vx_1 \otimes \cdots \otimes \vx_k\) where each \(\vx_i\) is drawn iid from a distribution \(\cD'\).
If \(\cD'\) is complex Gaussian, then \(\Var[\vx^\intercal\mA\vx] \leq 2^k (\tr(\mA))^2\).
If \(\cD'\) is complex Rademacher, then \(\Var[\vx^\intercal\mA\vx] \leq (2-\frac1d)^k (\tr(\mA))^2\).
If \(\cD'\) is uniformly distributed over complex vectors with \(\norm{\vx_i}_2=\sqrt{d}\), then \(\Var[\vx^\intercal\mA\vx] \leq (2-\frac2{d+1})^k (\tr(\mA))^2\).
\end{theorem}

\begin{theorem}
\label{thm:complex-rank-one-estimation}
	Fix \(d,n\in\bbN\).
	Let \(\mA=\vg\vg^\intercal\) where \(\vg\in\bbR^{d^k}\) is a vector of iid standard normal Gaussian.
	Then if \(\cD'\) is either complex Rademacher or uniformly distributed on the complex sphere, on average across the choice \(\vg\), it suffices to take \(\ell=\frac{2}{\eps^2}\) for \(H_\ell(\mA)\) to have standard deviation at most \(\eps\tr(\mA)\).
	That is, \(\E_{\vg}[\Var[H_\ell(\mA) \mid \vg]] \leq \eps^2 (\E_{\vg}[\tr(\mA)])^2\).
	However, if \(\cD'\) is complex Gaussian, then \(\ell = \Omega(\frac1{\eps^2}(1+\frac1d)^k)\) is needed to achieve the same guarantee.
\end{theorem}
\begin{theorem}
\label{thm:complex-psd-lower-bound}
	Fix \(d,k\in\bbN\).
	Consider Hutchinson's Estimator run with vectors \(\vx = \vx_1 \otimes \cdots \otimes \vx_k\) where \(\vx_i\) are vectors of iid complex zero-mean unit-variance entries.
	Then, there exists a PSD matrix \mA such that Hutchinson's Estimator needs \(\ell = \Omega(\frac{1}{\eps^2}(2-\frac1d)^k)\) samples in order to have standard deviation \(\leq \eps \tr(\mA)\).
\end{theorem}

By applying Chebyshev's Inequality to \Cref{thm:real-psd,thm:complex-upper-bounds}, we see an \emph{exponential speedup} from \(\ell = O(\frac{3^k}{\eps^2})\) queries in the real-valued case to \(\ell = O(\frac{2^k}{\eps^2})\) queries in the complex-valued case.
For large \(k\), this gap from \(3^k\) to \(2^k\) is a very significant speedup, and we are not aware of any other prior works which discuss this potential.

Another way to view this speedup is from comparing the Kronecker-Hutchinson Estimator to exactly summing the diagonal of \mA, which takes exactly \(d^k\) queries.
When \(d=2\), the Gaussian real-valued Kronecker-Hutchinson Estimator is exponentially more expensive than just summing the diagonal directly.
However, the complex-valued Gaussian Kronecker-Hutchinson Estimator is only \(O(\frac{1}{\eps^2})\) times more expensive in the worst case.
This is another way to frame the power of the complex-valued estimator -- the complex-valued Gaussian estimator is at most polynomially slower than summing the diagonal, while the real-valued Gaussian estimator can be exponentially slower.

\subsubsection{Comparing \Cref{thm:real-partial-trace-transpose} and \Cref{thm:real-partial-trace}}

We also want to understand the difference between \Cref{thm:real-partial-trace-transpose} and \Cref{thm:real-partial-trace} in both the real and complex cases.
That is, how much information do we lose when we upper bound \(\norm{\bar\mA}_F \leq \norm{\mA}_F\)?
The analysis of random rank-1 matrices lets us start to understand this.
\Cref{thm:real-rank-one-estimation} and \Cref{thm:complex-rank-one-estimation} show us that the true variance of the Kronecker-Hutchinson Estimator on a random rank-1 matrix with Gaussian query vectors scales as \(d^{2k}(1+\frac2d)^k\) and \(d^{2k}(1+\frac1d)^k\) in the real and complex cases, respectively.
We can also to compute the upper bound to the variance of these estimators according to \Cref{thm:real-partial-trace} and its complex counterpart:
\begin{theorem}
\label{thm:random-rank-one-real-var-overestimate}
Fix \(d,k\in\bbN\).
Let \(\mA=\vg\vg^\intercal\) where \(\vg\in\bbR^{d^k}\) is a vector of iid real standard normal Gaussian.
Let \(\vx\) be the kronecker product of \(k\) standard normal Gaussian vectors.
Then,
\[
	\E_{\vg}\left[\Var[\vx^\intercal\mA\vx \mid \vg]\right]
	\leq
	\E_{\vg}\left[\sum_{\cS\subseteq[k]} 2^{k - \abs\cS} \norm{\tr_\cS(\mA)}_F^2\right]
	\geq 2^k d^{2k} (1+\tsfrac2d)^k.
\]
The second inequality above is tight up to a multiplicative factor of 3.
\end{theorem}
\begin{theorem}
\label{thm:random-rank-one-complex-var-overestimate}
Fix \(d,k\in\bbN\).
Let \(\mA=\vg\vg^\intercal\) where \(\vg\in\bbR^{d^k}\) is a vector of iid complex standard normal Gaussian.
Let \(\vx\) be the kronecker product of \(k\) standard normal Gaussian vectors.
Then,
\[
	\E_{\vg}\left[\Var[\vx^\intercal\mA\vx \mid \vg]\right]
	\leq
	\E_{\vg}\left[\sum_{\cS\subseteq[k]} \norm{\tr_\cS(\mA)}_F^2\right]
	\leq 3 d^{2k} (1+\tsfrac1d)^k.
\]
The second inequality is tight up to a multiplicative factor of 3.
\end{theorem}

\Cref{thm:random-rank-one-real-var-overestimate,thm:random-rank-one-complex-var-overestimate} are proven in \Cref{sec:real-rank-one-estimation,sec:complex-matvec}, respectively.
In the real case, applying the bound \(\norm{\bar\mA}_F \leq \norm{\mA}_F\) causes the expression for the variance to become a massive overestimate, introducing a spurious factor of \(2^k\).
However, in the complex case, this bound gives a nearly tight result that varries from the true variance by a factor of at most 3.
This is rather unintuitive, as this seems to suggest that applying the same bound on the same matrix \(\bar\mA\) results in a variance estimate that is much more accurate when we use complex Kronecker matrix-vector products as opposed to real Kronecker matrix-vector products.
More work is needed to understand when bounding \(\norm{\bar\mA}_F \leq \norm{\mA}_F\) does not introduce spurious factors that are exponential in \(k\).

\subsubsection{Additional Results and Hope for a Faster Algorithm}

We complement the aforementioned results on the Kronecker-Hutchinsons estimator with a few informative additions.
First, we show a lower bound against all Kronecker-matrix-vector algorithms.
By adapting the techniques of \cite{braverman2020gradient} and \cite{jiang2021optimal}, we show that any Kronecker-matrix-vector algorithm that estimates the trace of \mA with standard deviation \(\eps\tr(\mA)\) must make at least \(\Omega(\frac{\sqrt k}{\eps})\) oracle queries.

Second, we note that the lower bounds in \Cref{thm:real-psd-lower-bound,thm:complex-psd-lower-bound} both use construction of \mA such that \(\mA = \mA_1 \otimes \cdots \otimes \mA_k\).
That is, the Kronecker-Hutchinson Estimator converges slowest for matrices that have a Kronecker structure.
However, if we truly are given \mA with the promise that \(\mA = \mA_1 \otimes \cdots \otimes \mA_k\) for some unknown matrices \(\mA_1,\ldots,\mA_k\), then in \Cref{sec:kron-recovery} we show that the trace of \mA can be recovered exactly using just \(kd+1\) Kronecker-matrix-vector products.
Similarly, \Cref{thm:real-rank-one-estimation,thm:complex-rank-one-estimation} suggest that the Kronecker-Hutchinson Estimator converges exponentially slowly for random rank-one matrices.
But in this rank-one case, we show that we can exactly recover the trace of \mA with a single Kronecker-matrix-vector query!

This trend, where the matrices that make the Kronecker-Hutchinson Estimator converge slowly can have their traces efficiently estimated by another algorithm, is reminiscent of the setup for the analysis in \cite{meyer21hutchpp}.
In that paper, the authors recall the analysis of the standard deviation of Hutchinson's Estimator:
\[
	\sqrt{\Var[\vx^\intercal\mA\vx]} \leq \sqrt{\frac2\ell}\norm{\mA}_F \leq \sqrt{\frac2\ell}\tr(\mA) \leq \eps\tr(\mA)
\]
where the first inequality uses that \(\norm\mA_F\leq\tr(\mA)\) for PSD \mA, and the second inequality is forced to pick the sample complexity \(\ell \geq \frac{2}{\eps^2}\).
Since the first inequality is only tight for matrices \mA that are nearly low-rank, the authors propose an algorithm which combines Hutchinson's Estimator with a low-rank approximation method that can efficiently compute the trace of nearly-low-rank matrices efficiently.
This way, their algorithm has a worst-case matrix-vector complexity of \(\Theta(\frac1\eps)\), while Hutchinson's Estimator is stuck with \(\Theta(\frac1{\eps^2})\).

Our analysis similarly characterizes two cases when the Kronecker-Hutchinson Estimator converges slowly -- when \mA is either low-rank or has Kronecker structure.
We also show that we can compute the trace such matrices very efficiently.
This suggests that a more involved algorithm can interpolate between these low query complexities and perhaps even avoid any exponential dependence on \(k\) at all.
We leave this as an important problem for future work.

Lastly, we remark that the analysis from \cite{meyer21hutchpp} crucially uses a tight bound for the variance of Hutchinson's Estimator in terms of \(\norm\mA_F\).
We believe that our core results, \Cref{thm:real-partial-trace-transpose,thm:real-partial-trace}, can serve an analogous role in constructing a faster trace-estimating algorithm.


\section{Preliminaries}
\label{sec:prelims}

We use capital bold letter to denote matrices, lowercase bold letters to denote vectors, and lowercase non-bold letters to denote scalars.
\bbR is the set of reals, and \bbC is the set of complex numbers.
\(\mathfrak{Re}[\vx]\) and \(\mathfrak{Im}[\vx]\) denote the real and imaginary parts of a vector or scalar.
\(\vx^\intercal\) denotes the transpose of a vector, while \(\vx^\herm\) denotes the conjugate transpose of a matrix or vector.
\(\mA \in \bbR^{d^k \times d^k}\) denotes out input matrix.
We use the bracket notation \([\mB]_{ij}\) to denote the \(i,j\) entry of matrix \mB.
We let \(\ve_i\) denote the \(i^{th}\) standard basis vector, whose exact dimension may depend on the context.
We let \(\mI \in \bbR^{d \times d}\) always refer to the \(d \times d\) identity matrix.
We let \(\norm{\mA}_F\) denote the Frobenius norm, \(\tr(\mA)\) the trace, \(\tr_{\cS}(\mA)\) the partial trace, and \(\mA^{\ptran\cV}\) the partial transpose.
We let \([k] = \{1,\ldots,k\}\) be the set of integers from 1 to \(k\).

We say a vector \(\vx_i\in\bbR^d\) is Gaussian if each entry is sampled iid from \(\cN(0,1)\), is Rademacher if each entry is sampled uniformly iid from \(\{\pm1\}\), and is sampled from the sphere if \(\vx_i\) is drawn uniformly at random from the set of all vectors in \(\bbR^d\) with \(\norm{\vx_i}_2=\sqrt d\).
We say a vector \(\vx_i \in \bbC^d\) is complex Gaussian if each entry is an iid copy of \(\frac{1}{\sqrt 2} r + \frac{i}{\sqrt 2} m\) where \(r,m\sim\cN(0,1)\).
We say \(\vx_i\in\bbC^d\) is complex Rademacher if each entry is sampled uniformly iid from \(\{\pm \frac1{\sqrt2},\pm \frac{i}{\sqrt2}\}\).
We say \(\vx_i\in\bbC^d\) is sampled from the complex sphere if \(\vx_i\) is sampled uniformly from the set of all vectors in \(\bbC^d\) with \(\norm{\vx_i}_2=\sqrt d\).

\subsection{Kronecker product}
The Kronecker product of two matrices \(\mA\in\bbR^{n \times m}\) and \(\mB\in\bbR^{p \times q}\) is \(\mA\otimes\mB \in \bbR^{np \times mq}\), where
\[
	\mA\otimes\mB \defeq \bmat{
		[\mA]_{11}\mB & [\mA]_{12}\mB & \ldots & [\mA]_{1m}\mB \\
		[\mA]_{21}\mB & [\mA]_{22}\mB & \ldots & [\mA]_{2m}\mB \\
		\vdots & \vdots & \ddots & \vdots \\
		[\mA]_{n1}\mB & [\mA]_{n2}\mB & \ldots & [\mA]_{nm}\mB
	}
\]
We also use the shorthands \(\otimes_{i=1}^k \vx_i \defeq \vx_1 \otimes \vx_2 \otimes \cdots \otimes \vx_k\), as well as \(\mA^{\otimes k} \defeq \otimes_{i=1}^k \mA\).
The Kronecker product has three key properties that we repeatedly use:
\begin{lemma}[Mixed-Product Property (Folklore)]
	\label{lem:mixed-product}
	\((\mA\otimes\mB)(\mC\otimes\mD) = \mA\mC \otimes \mB\mD\)
\end{lemma}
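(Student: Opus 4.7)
The plan is to prove the identity by direct block-matrix multiplication, comparing the $(i,k)$-block on each side of the equation. The definition of the Kronecker product makes $\mA \otimes \mB$ naturally partitioned into $n \times m$ blocks, each of size $p \times q$, where the $(i,j)$-block equals $[\mA]_{ij}\mB$. Similarly, $\mC \otimes \mD$ is partitioned into $m \times r$ blocks, each of size $q \times s$, with $(j,k)$-block equal to $[\mC]_{jk}\mD$. Since the inner block dimensions agree ($q$ matches $q$), the product $(\mA \otimes \mB)(\mC \otimes \mD)$ can be computed with block-matrix multiplication in the usual way.

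First I would fix the dimensions: take $\mA \in \bbR^{n \times m}$, $\mB \in \bbR^{p \times q}$, $\mC \in \bbR^{m \times r}$, and $\mD \in \bbR^{q \times s}$, and verify that both sides of the claimed identity are matrices of size $np \times rs$. Next I would write out the $(i,k)$-block of the product using block multiplication:
\[
  \bigl[(\mA \otimes \mB)(\mC \otimes \mD)\bigr]_{(i,k)\text{-block}} \;=\; \sum_{j=1}^{m} \bigl([\mA]_{ij}\mB\bigr)\bigl([\mC]_{jk}\mD\bigr) \;=\; \Bigl(\sum_{j=1}^{m}[\mA]_{ij}[\mC]_{jk}\Bigr) \mB\mD \;=\; [\mA\mC]_{ik}\, \mB\mD,
\]
where in the middle step I pull the scalars $[\mA]_{ij}$ and $[\mC]_{jk}$ outside. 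The final expression is exactly the $(i,k)$-block of $\mA\mC \otimes \mB\mD$ by definition, so the two matrices agree block by block and therefore entry by entry.

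There is no substantive obstacle here, since the result follows from the definition after one layer of bookkeeping; the only mild care required is confirming that the block sizes on the two factors are conformable so that the block-product identity may be applied. I would state this conformability check explicitly before doing the computation, so that the reduction to scalar multiplication of blocks is immediate. Since this is a folklore identity, I would keep the proof to essentially the display above plus a sentence of justification.
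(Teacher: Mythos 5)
Your proof is correct and is the standard block-multiplication argument for the mixed-product property. The paper records this lemma as folklore without supplying a proof, so there is nothing to compare against; your argument fills that in cleanly, with the only necessary care being the conformability check you already note.
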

\begin{lemma}[Transposing (Folklore)]
	\label{lem:kron-transpose}
	\((\mA\otimes\mB)^\intercal = (\mA^\intercal \otimes \mB^\intercal)\)
\end{lemma}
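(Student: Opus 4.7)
The statement is the folklore identity $(\mA\otimes\mB)^\intercal = \mA^\intercal \otimes \mB^\intercal$, which is a routine entry-wise computation; my plan is simply to verify it at the level of individual entries using the definition of the Kronecker product given just above in the excerpt.

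\textbf{Plan.} I would index the rows and columns of $\mA\otimes\mB$ by pairs: a row of $\mA\otimes\mB$ is specified by a pair $(i_1,i_2)$ where $i_1\in[n]$ selects a block row and $i_2\in[p]$ selects a row within that block, and similarly for columns via pairs $(j_1,j_2)$. Under this convention the defining block formula in the excerpt reads
\[
	[\mA\otimes\mB]_{(i_1,i_2),(j_1,j_2)} \;=\; [\mA]_{i_1 j_1}\,[\mB]_{i_2 j_2}.
\]
This is the one identity I need to set up cleanly; everything afterward is substitution.

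\textbf{Finishing the argument.} Applying the definition of transpose to the left-hand side of the lemma and then invoking the displayed formula above gives
\[
	[(\mA\otimes\mB)^\intercal]_{(i_1,i_2),(j_1,j_2)} = [\mA\otimes\mB]_{(j_1,j_2),(i_1,i_2)} = [\mA]_{j_1 i_1}\,[\mB]_{j_2 i_2} = [\mA^\intercal]_{i_1 j_1}\,[\mB^\intercal]_{i_2 j_2},
\]
and a second application of the block formula identifies the rightmost expression as the $((i_1,i_2),(j_1,j_2))$-entry of $\mA^\intercal \otimes \mB^\intercal$. Since the entries agree for every choice of $(i_1,i_2)$ and $(j_1,j_2)$, the two matrices are equal.

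\textbf{Main obstacle.} There is essentially no obstacle here: the only thing to be careful about is keeping the pair-indexing consistent so that the ``outer'' index sits on $\mA$ and the ``inner'' index sits on $\mB$ in both the block structure of $\mA\otimes\mB$ and the block structure of $\mA^\intercal\otimes\mB^\intercal$. Once that bookkeeping is fixed, the identity falls out of two applications of the Kronecker entry formula and the scalar commutativity $[\mA]_{j_1 i_1}[\mB]_{j_2 i_2}=[\mA^\intercal]_{i_1 j_1}[\mB^\intercal]_{i_2 j_2}$, so a three-line proof suffices.
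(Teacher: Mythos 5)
Your entrywise argument is correct, and it is the standard way to verify this identity. The paper itself gives no proof — it simply cites the result as folklore — so there is nothing to compare against; your three-line pair-indexing computation is exactly what one would write if a proof were demanded.
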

\begin{lemma}[Fact 7.4.1 in \cite{bernstein11}]
\label{lem:kron-extract-vec}
Let \(\vx\in\bbR^{n}\) and \(\vy\in\bbR^{m}\).
Then, \(\vx \otimes \vy = (\mI_n \otimes \vy) \vx = (\vx \otimes \mI_m) \vy\).
\end{lemma}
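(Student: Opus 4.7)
The plan is to derive both equalities as one-line consequences of the mixed-product property (\Cref{lem:mixed-product}). The trick is to insert trivial identity factors so that the mixed-product rule can rearrange a tensor product into a matrix-vector product.

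For the first equality, I would write $\vx = \mI_n \vx$ and $\vy = \vy \cdot 1$, where $1$ denotes the $1 \times 1$ identity matrix, and then compute
\[
\vx \otimes \vy = (\mI_n \vx) \otimes (\vy \cdot 1) = (\mI_n \otimes \vy)(\vx \otimes 1) = (\mI_n \otimes \vy)\,\vx,
\]
using the mixed-product property for the middle equality and the collapse $\vx \otimes 1 = \vx$ for the last. The second equality is completely symmetric: starting from $\vx \otimes \vy = (\vx \cdot 1) \otimes (\mI_m \vy)$, mixed-product yields $(\vx \otimes \mI_m)(1 \otimes \vy)$, and then $1 \otimes \vy = \vy$.

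I do not expect any real obstacle; this is a folklore identity. The only bookkeeping worth double-checking is that treating the scalar $1$ as a $1\times 1$ matrix lines up dimensions correctly in each Kronecker factorization. If that felt awkward, a fully equivalent fallback would be to verify both identities entrywise: the $i$-th length-$m$ block of $(\mI_n \otimes \vy)\vx$ equals $x_i \vy$, which when stacked over $i \in [n]$ is precisely the definition of $\vx \otimes \vy$, and the other identity follows analogously.
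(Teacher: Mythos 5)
The paper does not actually prove \Cref{lem:kron-extract-vec}; it imports it directly as Fact 7.4.1 from \cite{bernstein11}, so there is no in-paper proof to compare against. That said, your derivation is correct and self-contained: treating the scalar $1$ as the $1\times 1$ identity is legitimate, the dimensions line up ($\mI_n \otimes \vy$ is $nm \times n$ acting on $\vx \in \bbR^n$, and $\vx \otimes \mI_m$ is $nm \times m$ acting on $\vy \in \bbR^m$), and the mixed-product property (\Cref{lem:mixed-product}) applies exactly as you use it. Your entrywise fallback is also sound: the $i$-th $m$-block of $(\mI_n \otimes \vy)\vx$ is $x_i \vy$, which by definition is the $i$-th block of $\vx \otimes \vy$. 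Since the paper already has \Cref{lem:mixed-product} available, your one-line derivation is arguably the cleanest route and would serve as a suitable replacement for the external citation if one wanted the paper to be self-contained.
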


At one point, we will also require a short lemma about the squared Frobenius norm and the Kronecker product.
This lemma is roughly equivalent to noting that \(\norm{\mA}_F^2 = \sum_{i,j=1}^d \norm{\mA_{ij}}_F^2\) in the block matrix decomposition of \mA when \(k=2\) in \Cref{sec:our-contributions}.
\begin{lemma}
\label{lem:kron-sum-of-frob}
Let \(\mC,\mE \in \bbR^{d^{k-i}\times d^{k-i}}\) and \(\mD \in \bbR^{d^i \times d^i}\) for any \(i\in[k]\).
Then, we have
\[
	\sum_{\hat i, \hat j=1}^{d^i} \norm{(\mC \otimes \ve_{\hat i})^\intercal \mD (\mE \otimes \ve_{\hat j})}_F^2 = \norm{(\mC \otimes \mI)^\intercal\mD(\mE \otimes \mI)}_F^2.
\]
\end{lemma}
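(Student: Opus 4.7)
The plan is to prove the identity by expanding both sides as sums of the same squared scalar entries. The key observation is a columnwise description of $\mC\otimes\mI_{d^i}$: if one indexes its $d^k$ columns by pairs $(a,\hat i)\in[d^{k-i}]\times[d^i]$ in the natural Kronecker order, then the column at position $(a,\hat i)$ equals $\mC_{:,a}\otimes\ve_{\hat i}$---that is, the $a$-th column of the tall matrix $\mC\otimes\ve_{\hat i}\in\bbR^{d^k\times d^{k-i}}$. This is immediate from the Kronecker definition: the $((r,s),(a,\hat i))$ entry of $\mC\otimes\mI_{d^i}$ is $C_{ra}\,\delta_{s,\hat i}$, which matches the $(r,s)$ entry of $\mC_{:,a}\otimes\ve_{\hat i}$. (For the lemma to be dimensionally consistent, one must read $\mD$ as a $d^k\times d^k$ matrix.)

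Given this description, I would define the scalar
\[
S(a,\hat i,b,\hat j)\defeq(\mC_{:,a}\otimes\ve_{\hat i})^\intercal\,\mD\,(\mE_{:,b}\otimes\ve_{\hat j}).
\]
For each fixed $(\hat i,\hat j)$, this scalar is the $(a,b)$ entry of the $d^{k-i}\times d^{k-i}$ matrix $(\mC\otimes\ve_{\hat i})^\intercal\mD(\mE\otimes\ve_{\hat j})$, so summing $S^2$ over $(a,b)$ yields that matrix's squared Frobenius norm, and the LHS of the lemma becomes $\sum_{\hat i,\hat j,a,b}S(a,\hat i,b,\hat j)^2$. The same scalar is simultaneously the $((a,\hat i),(b,\hat j))$ entry of $(\mC\otimes\mI)^\intercal\mD(\mE\otimes\mI)$, so the RHS of the lemma is the same quadruple sum. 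The two sides therefore agree.

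The main obstacle is purely notational---keeping the Kronecker indexing conventions consistent between the reshaping from $[d^k]$ to $[d^{k-i}]\times[d^i]$ on both sides. A slicker, index-free alternative would instead use $\mI=\sum_{\hat i}\ve_{\hat i}\ve_{\hat i}^\intercal$ together with the mixed-product property (\Cref{lem:mixed-product}) to write
\[
\mC\otimes\mI \;=\; \sum_{\hat i=1}^{d^i}(\mC\otimes\ve_{\hat i})(\mI_{d^{k-i}}\otimes\ve_{\hat i}^\intercal),
\]
then expand $\|\cdot\|_F^2=\tr(\cdot\,(\cdot)^\intercal)$ on the RHS of the claimed identity, and kill the cross terms via the orthogonality relation $(\mI\otimes\ve_{\hat j}^\intercal)(\mI\otimes\ve_{\hat j'})=\delta_{\hat j,\hat j'}\mI$ (and analogously for $\hat i,\hat i'$), leaving exactly $\sum_{\hat i,\hat j}\|(\mC\otimes\ve_{\hat i})^\intercal\mD(\mE\otimes\ve_{\hat j})\|_F^2$. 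I would favor the first route, since it makes it transparent that the identity is really just a regrouping of a common quadruple sum of squared entries.
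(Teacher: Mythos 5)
Your proof is correct, and it takes a cleaner, more elementary route than the paper's. The paper introduces a permutation matrix $\mP$ to rewrite $\mC\otimes\vy$ as $\mP(\vy\otimes\mC)\mP^\intercal$, defines $\hat\mD = \mP\mD\mP$, decomposes $\hat\mD$ into $d^i \times d^i$ blocks $\hat\mD_{\hat i\hat j}$, and shows that the right-hand side is the Frobenius norm of a block matrix whose $(\hat i,\hat j)$ block is $\mC^\intercal\hat\mD_{\hat i\hat j}\mE$ --- which regroups back into the left-hand side after summing the blocks' squared Frobenius norms. Your approach skips the permutation machinery entirely: you observe that $(\mC\otimes\ve_{\hat i})\ve_a = \mC_{:,a}\otimes\ve_{\hat i} = (\mC\otimes\mI)(\ve_a\otimes\ve_{\hat i})$, so the scalar $S(a,\hat i,b,\hat j)$ is simultaneously the $(a,b)$ entry of the $(\hat i,\hat j)$-th summand on the left and the $((a,\hat i),(b,\hat j))$ entry of the matrix on the right, and both sides reduce to the same quadruple sum of squares. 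This is a genuine simplification: it makes it transparent that the identity is a mere reorganization of indices, while the paper's permutation step introduces some bookkeeping (and, incidentally, a small imprecision in the definition of $\hat\mD$, where $\mP\mD\mP$ should read $\mP^\intercal\mD\mP$ and the two permutations are not literally the same matrix). You are also right that the lemma statement has a dimensional slip --- for the products to type-check, $\mD$ must live in $\bbR^{d^k\times d^k}$, not $\bbR^{d^i\times d^i}$, which matches how the lemma is actually invoked (with $\mB\in\bbR^{d^j\times d^j}$) in the proof of \Cref{lem:frob-inductive-step}. Your alternative index-free route via $\mC\otimes\mI = \sum_{\hat i}(\mC\otimes\ve_{\hat i})(\mI\otimes\ve_{\hat i}^\intercal)$ and the orthogonality $(\mI\otimes\ve_{\hat j}^\intercal)(\mI\otimes\ve_{\hat j'}) = \delta_{\hat j\hat j'}\mI$ also works and is closer in spirit to the paper's block argument, but your primary route is the most transparent of the three.
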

\begin{proof}
First, recall there exists a permutation matrix \mP such that \(\mC \otimes \vy = \mP(\vy \otimes \mC)\mP^\intercal\) for all \(\vy\in\bbR^{d^i}\).
Now let \(\hat\mD \defeq \mP\mD\mP\).
Then, by expand \(\hat\mD\) as a block matrix of matrices \(\hat\mD_{\hat i, \hat j} \in \bbR^{d^{k-i} \times d^{k-i}}\), and noting that \((\ve_{\hat i}\otimes\mC)^\intercal = \bmat{\mat 0 & \cdots & \mat 0 & \mC^\intercal & \mat 0 & \cdots & \mat 0}\).
We then get that
\[
	(\ve_{\hat i} \otimes \mC)^\intercal\hat\mD(\ve_{\hat j} \otimes \mE)
	 = \bmat{
		 \mat 0 & \cdots & \mat 0 & \mC^\intercal & \mat 0 & \cdots & \mat 0
		 }
		 \bmat{
		 	\hat\mD_{11} & \cdots & \hat\mD_{1d} \\
		 	\vdots & \ddots & \vdots \\
		 	\hat\mD_{d1} & \cdots & \hat\mD_{dd} \\
		 }
		 \bmat{
		 \vec 0 \\ \vdots \\ \vec 0 \\ \mE \\ \vec 0 \\ \vdots \\ \vec 0
		 }
	 = \mC^\intercal\hat\mD_{\hat i \hat j}\mE.
\]
We can also notice that
\[
	(\mI \otimes \mC)^\intercal\hat\mD(\mI\otimes\mE)
	= \bmat{\mC^\intercal \\ & \ddots \\ & & \mC^\intercal} \bmat{\hat\mD_{11} & \ldots & \hat\mD_{dd} \\ \vdots & \ddots & \vdots \\ \hat\mD_{d1} & \ldots & \hat\mD_{dd}} \bmat{\mE \\ & \ddots \\ & & \mE}
	= \bmat{\mC^\intercal\hat\mD_{11}\mE & \ldots & \mC^\intercal\hat\mD_{1d}\mE \\ \vdots & \ddots & \vdots \\ \mC^\intercal\hat\mD_{d1}\mE & \ldots & \mC^\intercal\hat\mD_{dd}\mE}.
\]
Then, noting that the Frobenius norm is invariant to permutation, we find that
\begin{align*}
	\sum_{\hat i, \hat j=1}^{d^i} \norm{(\mC \otimes \ve_{\hat i})^\intercal\mD(\mE\otimes\ve_{\hat j})}_F^2
	&= \sum_{\hat i, \hat j=1}^{d^i} \norm{\mP^\intercal(\mC \otimes \ve_{\hat i})^\intercal\mP\mD\mP(\ve_{\hat j} \otimes \mE)\mP^\intercal}_F^2 \\
	&= \sum_{\hat i, \hat j=1}^{d^i} \norm{(\mC \otimes \ve_{\hat i})^\intercal\hat\mD(\ve_{\hat j} \otimes \mE)}_F^2 \\
	&= \sum_{\hat i, \hat j=1}^{d^i} \norm{\mC^\intercal\hat\mD_{\hat i \hat j}\mE}_F^2 \\
	&= \norm{(\mI\otimes\mC)^\intercal\hat\mD(\mI\otimes\mE)}_F^2 \\
	&= \norm{(\mC\otimes\mI)^\intercal\hat\mD(\mE\otimes\mI)}_F^2,
\end{align*}
where the last equality comes from again permuting the matrices inside the norm.
\end{proof}


\subsection{The Partial Trace Operator}
We take a moment to formalize the partial trace operator and characterize some useful properties of it.
We start by defining the partial trace with respect to a single subsystem:
\begin{definition}
\label{def:partial-trace-singular}
Let \(\mA\in\bbR^{d^k \times d^k}\).
Then, the partial trace of \mA with respect to its \(i^{th}\) subsystem is
\begin{align}
	\tr_{\{i\}}(\mA)
	\defeq \sum_{j=1}^d (\mI^{\otimes i-1} \otimes \ve_j \otimes \mI^{\otimes k-1})^\intercal \mA (\mI^{\otimes i-1} \otimes \ve_j \otimes \mI^{\otimes k-1})
	\in\bbR^{d^{k-1} \times d^{k-1}}.
	\label{eq:partial-trace-singular}
\end{align}
where these identity matrices are \(d \times d\) matrices, and where \(\ve_j\in\bbR^d\) is the \(j^{th}\) standard basis vector.
\end{definition}
Equivalently, the partial trace operator can be defined as the unique linear operator which guarantees that for all \(\mA\in\bbR^{d^{i-1} \times d^{i-1}}\), \(\mB\in\bbR^{d \times d}\), and \(\mC\in\bbR^{d^{k-i} \times d^{k-i}}\), we have that \(\tr_{\{i\}}(\mA \otimes \mB \otimes \mC) = \tr(\mB) \cdot (\mA \otimes \mC)\).

We will often consider the partial trace of \mA with respect to a set of subsystems \(\cS \subseteq[k]\).
We can think of this as either composing the operation in \Cref{eq:partial-trace-singular} \(\abs{\cS}\) many times, or as extending the summation in \Cref{eq:partial-trace-singular} to sum over all \(d^{\abs{\cS}}\) basis vectors that span the product of subspaces defined in \cS.
To see this equivalence, consider the partial trace of \mA with respect to \(\cS=\{1,2\}\):
\begin{align*}
	\tr_{\{1,2\}}(\mA)
	&= \tr_{\{1\}}(\tr_{\{2\}}(\mA)) \\
	&= \sum_{j=1}^d (\ve_j \otimes \mI^{\otimes k-1})^\intercal
		\bigg(
			\sum_{\hat j=1}^d (\mI \otimes \ve_{\hat j} \otimes \mI^{\otimes k-2})^\intercal
			\mA
			(\mI \otimes \ve_{\hat j} \otimes \mI^{\otimes k-2})
		\bigg)
		(\ve_j \otimes \mI^{\otimes k-1}) \\
	&= \sum_{j=1}^d
		\sum_{\hat j=1}^d
		(\ve_j \otimes \mI \otimes \mI^{\otimes k-2})^\intercal
		(\mI \otimes \ve_{\hat j} \otimes \mI^{\otimes k-2})^\intercal
		\mA
		(\mI \otimes \ve_{\hat j} \otimes \mI^{\otimes k-2})
		(\ve_j \otimes \mI \otimes \mI^{\otimes k-2}) \\
	&= \sum_{j=1}^d
		\sum_{\hat j=1}^d
		(\ve_j \otimes \ve_{\hat j} \otimes \mI^{\otimes k-2})^\intercal
		\mA
		(\ve_j \otimes \ve_{\hat j} \otimes \mI^{\otimes k-2}).
		\tag{\Cref{lem:mixed-product}}
\end{align*}
Next, note that \(\ve_j \otimes \ve_{\hat j} \in \bbR^{d^2}\) is a standard basis vector in \(\bbR^{d^2}\) uniquely identified by the pair \((j, \hat j)\).
That is, we can replace this double sum above with a single sum over standard basis vectors in \(\bbR^{d^2}\):
\begin{align*}
	\tr_{\{1,2\}}(\mA)
	&= \sum_{j=1}^{d^2} (\ve_j \otimes \mI^{\otimes k-2})^\intercal \mA (\ve_j \otimes \mI^{\otimes k-2}).
	\tag{\(\ve_j\in\bbR^{d^2}\) now}
\end{align*}
Which now looks remarkably similar to \Cref{def:partial-trace-singular}, but with a larger definition of ``subsystem'' that the standard basis vector is acting on.
Formally, we define the partial trace of \mA with respect to a set \cS as following:
\begin{definition}
\label{def:partial-trace}
Let \(\mA\in\bbR^{d^k \times d^k}\), and let \(\cS\subseteq[k]\).
Let \(i_1,\ldots,i_{\abs{\cS}}\) be the indices in \cS, sorted so that \(i_1 < i_2 < \ldots < i_{\abs{\cS}}\).
Then, the partial trace of \(\mA\) with respect to the subsystem \cS is
\[
	\tr_{\cS}(\mA) \defeq \tr_{\{i_1\}}(\tr_{\{i_2\}}(\cdots(\tr_{\{i_{\abs{\cS}}\}}(\mA))))
	\in\bbR^{d^{k-\abs{\cS}} \times d^{k-\abs{\cS}}}.
\]
In the special case that \(\cS = \{i,i+1,\ldots,k\}\), we denote this partial trace as \(\tr_{i:}(\mA)\).
\end{definition}
In principle, the order of subscripts does not really matter.
Tracing out the first subsystem and then tracing out the second subsystem produces the exact same matrix as tracing out the second subsystem and then tracing out the first subsystem.
However, notationally, a non-ascending order becomes messy since removing the second subsystem changes the indices.
To see this messiness, note that the above statement would be formalized as \(\tr_{\{1\}}(\tr_{\{2\}}(\mA)) = \tr_{\{1\}}(\tr_{\{1\}}(\mA))\).
To avoid this issue, we always expand the partial traces in \cS in sorted order as in \Cref{def:partial-trace}.

Moving on, a vital property of the partial trace is that it is trace-preserving:
\begin{lemma}[Folklore]
\label{lem:trace-of-partial-trace}
For all \(\mA\in\bbR^{d^k \times d^k}\) and \(\cS\subseteq[k]\), we have \(\tr(\tr_{\cS}(\mA)) = \tr(\mA)\).
\end{lemma}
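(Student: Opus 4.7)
The plan is to induct on $|\cS|$ by peeling off single subsystems. Since \Cref{def:partial-trace} defines $\tr_{\cS}$ as a composition of $|\cS|$ single-subsystem partial traces, it suffices to establish the base case
\[
	\tr(\tr_{\{i\}}(\mB)) = \tr(\mB)
\]
for an arbitrary matrix $\mB \in \bbR^{d^k \times d^k}$ and any index $i \in [k]$; then repeatedly applying this to the outermost single-subsystem partial trace in the composition collapses $|\cS|$ partial traces one at a time, each preserving the trace.

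For the base case, I would start from \Cref{eq:partial-trace-singular} and abbreviate $\vy_j \defeq \mI^{\otimes i-1} \otimes \ve_j \otimes \mI^{\otimes k-i}$, so that $\tr_{\{i\}}(\mA) = \sum_{j=1}^d \vy_j^\intercal \mA \vy_j$. Taking the trace of both sides and applying cyclicity gives
\[
	\tr(\tr_{\{i\}}(\mA)) \;=\; \sum_{j=1}^d \tr(\vy_j^\intercal \mA \vy_j) \;=\; \tr\!\Big(\mA \sum_{j=1}^d \vy_j \vy_j^\intercal\Big).
\]
By the Mixed-Product Property (\Cref{lem:mixed-product}) and \Cref{lem:kron-transpose}, we have $\vy_j \vy_j^\intercal = \mI^{\otimes i-1} \otimes \ve_j \ve_j^\intercal \otimes \mI^{\otimes k-i}$. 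Summing over $j \in [d]$ and using $\sum_{j=1}^d \ve_j \ve_j^\intercal = \mI$ collapses the middle factor to the $d \times d$ identity, so $\sum_{j=1}^d \vy_j \vy_j^\intercal = \mI^{\otimes k}$, the $d^k \times d^k$ identity. Therefore $\tr(\tr_{\{i\}}(\mA)) = \tr(\mA \cdot \mI^{\otimes k}) = \tr(\mA)$.

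To close the induction, if $\cS = \{i_1 < i_2 < \cdots < i_m\}$, then by \Cref{def:partial-trace} we can write $\tr_{\cS}(\mA) = \tr_{\{i_1\}}(\mB)$ where $\mB = \tr_{\{i_2\}}(\cdots \tr_{\{i_m\}}(\mA))$. The base case yields $\tr(\tr_{\{i_1\}}(\mB)) = \tr(\mB)$, and the inductive hypothesis applied to the smaller set $\{i_2,\ldots,i_m\}$ gives $\tr(\mB) = \tr(\mA)$, completing the argument. There is no real obstacle here; the only thing to track carefully is the dimensions of the identity factors across the Kronecker product, which is handled cleanly by the Mixed-Product Property. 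This is consistent with the result being labeled folklore.
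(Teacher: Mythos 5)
The paper does not include a proof of this lemma --- it is simply labeled ``Folklore'' --- so there is no authorial argument to compare against. Your proof is correct: the reduction to the single-subsystem case via the compositional definition in \Cref{def:partial-trace} is sound (the descending-order convention ensures the index $i_1$ remains valid for the intermediate matrix $\mB$), and the base case cleanly uses cyclicity of the trace together with the mixed-product property to collapse $\sum_{j=1}^d \vy_j \vy_j^\intercal$ to the $d^k \times d^k$ identity. This is the standard argument one would expect for a folklore fact of this kind.
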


\subsection{Post-Measurement Reduced Density Matrices}
Our analysis also heavily involves post-measurement reduced density matrices, which we now define, and whose long name originates from quantum physics.
As mentioned in \Cref{sec:our-contributions}, if \mA is a matrix that describes the joint probability distribution of many particles (a density matrix), then \(\tr_\cS(\mA)\) describes the joint probability distribution of the particles that Bob can view.
The Post-measurement Reduced Density Matrix (PMRDM) of \mA with respect to vector \vx describes the joint distribution of the particles that Bob can view \emph{given that} Alice measured her particles in the direction of the vector \vx.\footnote{For the purpose of understanding the paper, there is no particular need to understand this intuition. Those interested in learning more are recommended to read through Chapter 6 of \cite{aaronsonIntroduction}.}
For brevity, we refer to this matrix as the PMRDM.
\begin{definition}
Let \(\mA\in\bbR^{d^k \times d^k}\), \(i \in [k]\), and \(\vx_i \in \bbR^d\).
Then the PMRDM of \mA after measuring \(\vx\) along the \(i^{th}\) subsystem is
\[
	\pmrdm{\{i\}}{\vx_i}{\mA}
	\defeq
		(\mI^{\otimes i-1} \otimes \vx_i \otimes \mI^{\otimes k-i})^\intercal
		\mA
		(\mI^{\otimes i-1} \otimes \vx_i \otimes \mI^{\otimes k-i})
	\in\bbR^{d^{k-1} \times d^{k-1}}.
\]
Further let \(\vx_{1},\ldots,\vx_{i} \in \bbR^d\).
Then, the PMRDM of \mA after measuring according to \(\vx_{1},\ldots,\vx_{i}\) in the first \(i\) subsystems of \([k]\) is
\begin{align*}
	\pmrdm{:i}{\vx_{:i}}{\mA}
	&\defeq
		\pmrdm{\{1\}}{\vx_{i}}{
			\pmrdm{\{2\}}{\vx_{i}}{
				\cdots
				(\pmrdm{\{i\}}{\vx_{i}}{
					\mA
				})
			}
		} \\
	&= (\vx_{:i} \otimes \mI^{\otimes k-i})^\intercal
		\mA
		(\vx_{:i} \otimes \mI^{\otimes k-i})
	\in\bbR^{d^{k-i} \times d^{k-i}}.
\end{align*}
where \(\vx_{:i} = \vx_1 \otimes \cdots \otimes \vx_i \in \bbR^{d^i}\).
\end{definition}
Notice that the notation for the PMRDM \(\pmrdm{\{i\}}{\vx_i}{\mA}\) does not explicitly show the vector \(\vx_i\) that define the measurement.
This is done for visual clarity, and is safe because the meaning of the vector \(\vx_i\) will always be clear from context.\footnote{Namely, we always take \(\vx_i\) to be the \(i^{th}\) term in the expansion of \(\vx = \otimes_{i=1}^k \vx_i\), where \(\vx_i\) has either iid Gaussian or Rademacher entries.}
Further, while we could define the PMRDM with respect to an arbitrary set of subsystems \cS, as done in \Cref{def:partial-trace}, our proofs only need to examine the PMRDM when looking at the first \(i\) subsystems, so for simplicity we only define the PMRDM for that case.

We need one important property from the PMRDM, which relates it to the partial trace:
\begin{lemma}
\label{lem:pmrdm-to-partial-trace}
Let \(\mA\in\bbR^{d^k \times d^k}\), \(i\in[k]\), and \(\vx_1,\ldots,\vx_i\in\bbR^d\).
Define \(\vx_{:i} = \vx_1 \otimes \cdots \otimes \vx_i\).
Then, \(\tr(\pmrdm{:i}{\vx_{:i}}{\mA}) = \vx_{:i}^\intercal\tr_{i+1:}(\mA)\vx_{:i}\).
\end{lemma}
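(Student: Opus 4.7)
The plan is to compute both sides of the identity by unpacking the definitions, and show they both reduce to the same double sum of quadratic forms against $\mA$. The key observation is that the cyclic property of the trace turns the identity matrix factors $\mI^{\otimes k-i}$ sandwiching $\mA$ in the PMRDM into a sum over standard basis vectors, which is exactly what the partial trace $\tr_{i+1:}$ produces.

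First, I would unwind the trace on the left-hand side. Writing the trace of a $d^{k-i}\times d^{k-i}$ matrix as a sum $\sum_{j=1}^{d^{k-i}}\ve_j^\intercal(\cdot)\ve_j$ and plugging in the definition of the PMRDM,
\[
\tr(\pmrdm{:i}{\vx_{:i}}{\mA}) = \sum_{j=1}^{d^{k-i}} \ve_j^\intercal (\vx_{:i}\otimes\mI^{\otimes k-i})^\intercal \mA (\vx_{:i}\otimes\mI^{\otimes k-i})\ve_j.
\]
Then I would use the mixed-product property (\Cref{lem:mixed-product}), viewing $\ve_j$ as $1\otimes\ve_j$ and $\vx_{:i}$ as $\vx_{:i}\otimes 1$, to collapse $(\vx_{:i}\otimes\mI^{\otimes k-i})\ve_j = \vx_{:i}\otimes\ve_j$. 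This gives
\[
\tr(\pmrdm{:i}{\vx_{:i}}{\mA}) = \sum_{j=1}^{d^{k-i}}(\vx_{:i}\otimes\ve_j)^\intercal\mA(\vx_{:i}\otimes\ve_j).
\]

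Next, I would expand the right-hand side. From the extended form of \Cref{def:partial-trace} noted in the discussion preceding it (where $\cS = \{i+1,\ldots,k\}$ and the $d^{k-i}$ nested sums collapse into a single sum over standard basis vectors of $\bbR^{d^{k-i}}$),
\[
\tr_{i+1:}(\mA) = \sum_{j=1}^{d^{k-i}}(\mI^{\otimes i}\otimes\ve_j)^\intercal\mA(\mI^{\otimes i}\otimes\ve_j).
\]
Sandwiching with $\vx_{:i}$ and again applying the mixed-product property to $(\mI^{\otimes i}\otimes\ve_j)\vx_{:i} = \vx_{:i}\otimes\ve_j$ yields the same double sum as above, proving the lemma.

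There is no real obstacle here; the only minor subtlety is being careful that the ``$\ve_j$'' appearing when computing the ordinary trace of a $d^{k-i}\times d^{k-i}$ matrix is the same kind of standard basis vector of $\bbR^{d^{k-i}}$ that appears in the collapsed form of $\tr_{i+1:}$. Once one commits to that dimensional bookkeeping, the mixed-product property does all the work and the two sides match term by term.
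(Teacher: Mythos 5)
Your proof is correct and takes essentially the same approach as the paper's, which derives the identity by expanding the trace as a sum over basis vectors and invoking \Cref{lem:kron-extract-vec} (the very identity $(\vx_{:i}\otimes\mI)\ve_j = \vx_{:i}\otimes\ve_j = (\mI\otimes\ve_j)\vx_{:i}$ that you re-derive from the mixed-product property). The only cosmetic difference is that you expand both sides to a common middle expression while the paper walks LHS $\to$ RHS in one chain.
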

\begin{proof}
We directly expand and simplify:
\begin{align*}
	\tr(\pmrdm{:i}{\vx_{:i}}{\mA})
	&= \tr((\vx_{:i} \otimes \mI^{\otimes k-i})^\intercal \mA (\vx_{:i} \otimes \mI^{\otimes k-i})) \\
	&= \sum_{j=1}^{d^{k-i}} \ve_j^\intercal (\vx_{:i} \otimes \mI^{\otimes k-i})^\intercal \mA (\vx_{:i} \otimes \mI^{\otimes k-i}) \ve_j \\
	&= \sum_{j=1}^{d^{k-1}} \vx_{:i}^\intercal (\mI^{\otimes i} \otimes \ve_j)^\intercal \mA (\mI^{\otimes i} \otimes \ve_j) \vx_{:i} \tag{\Cref{lem:kron-extract-vec}}\\
	&= \vx_{:i}^\intercal \tr_{i+1:}(\mA)\, \vx_{:i}
\end{align*}
\end{proof}

\subsection{The Partial Transpose Operator}
Finally, we formalize the partial transpose operator, and characterize a useful property of it.
We start by defining the partial transpose of \mA with respect to a single subsystem.
\begin{definition}
\label{def:partial-transpose}
Let \(\mA \in \bbR^{d^k \times d^k}\) and \(i\in[k]\).
Then the partial transpose of \mA with respect to the \(i^{th}\) subsystem is
\[
	\mA^{\ptran{\{i\}}}
	\defeq \sum_{\hat i, \hat j = 1}^d
		(\mI^{\otimes i-1} \otimes \ve_{\hat i}\ve_{\hat j}^\intercal \otimes \mI^{\otimes k-i})
		\mA
		(\mI^{\otimes i-1} \otimes \ve_{\hat i}\ve_{\hat j}^\intercal \otimes \mI^{\otimes k-i}).
\]
\end{definition}
Equivalently, the partial transpose operator is the unique linear operator which guarantees that for all \(\mB\in\bbR^{d^{i-1} \times d^{i-1}}\), \(\mC\in\bbR^{d \times d}\), \(\mD \in \bbR^{d^{k-i} \times d^{k-i}}\), we have that \((\mB\otimes\mC\otimes\mD)^{\ptran{\{i\}}} = (\mB \otimes \mC^\intercal \otimes \mD)\).
To verify this equivalence, notice that we can express the transpose of a matrix as
\[
	\sum_{\hat i, \hat j=1}^d \ve_{\hat i}\ve_{\hat j}^\intercal \mC \ve_{\hat i}\ve_{\hat j}^\intercal
	= \sum_{\hat i, \hat j=1}^d [\mC]_{\hat j \hat i} \ve_{\hat i}\ve_{\hat j}^\intercal
	= \mC^\intercal.
\]
Then, we see that
\begin{align*}
	\sum_{\hat i, \hat j = 1}^d
		(\mI^{\otimes i-1} \otimes \ve_{\hat i}\ve_{\hat j}^\intercal \otimes \mI^{\otimes k-i})
		(\mB \otimes \mC \otimes \mD)
		(\mI^{\otimes i-1} \otimes \ve_{\hat i}\ve_{\hat j}^\intercal \otimes \mI^{\otimes k-i})
	&= \sum_{\hat i, \hat j = 1}^d
		(\mB \otimes [\mC^\intercal]\ve_{\hat i}\ve_{\hat j}^\intercal \otimes \mD) \\
	&= (\mB \otimes \mC^\intercal \otimes \mD).
\end{align*}
Like the partial trace, we will often be interested in taking the partial transpose of \mA with respect to a set of subsystems \(\cV \subseteq [k]\).
We extend the definition of the partial transpose directly to this case:
\begin{definition}
Let \(\mA \in \bbR^{d^k \times d^k}\) and \(\cV\subseteq[k]\).
Let \(i_{1}, \ldots, i_{\abs\cV}\) be the indices in \cV.
Then, the partial transpose of \mA with respect to the subsystem \cV is
\[
	\mA^{\ptran\cV}
	\defeq
	(((\mA^{\ptran{i_1}})^{\ptran{i_2}})\cdots)^{\ptran{i_{\abs\cV}}}.
\]
\end{definition}
Notably, the order of partial transposes does not matter in this case.
That is, we have \((\mA^{\ptran{\{i\}}})^{\ptran{\{j\}}} = (\mA^{\ptran{\{j\}}})^{\ptran{\{i\}}}\).
We next establish two useful properties of the partial transpose:
\begin{lemma}
\label{lem:partial-trace-of-partial-transpose}
Let \(\mA \in \bbR^{d^k \times d^k}\) and \(i\in[k]\).
Then, \(\tr_{\{i\}}(\mA^{\ptran{\{i\}}}) = \tr_{\{i\}}(\mA)\).
\end{lemma}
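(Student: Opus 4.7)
The strategy is to reduce the identity to the one-dimensional fact $\tr(\mC)=\tr(\mC^\intercal)$. The excerpt already gives two equivalent characterizations that make this reduction immediate: $\tr_{\{i\}}$ is the unique linear operator with $\tr_{\{i\}}(\mB\otimes\mC\otimes\mD)=\tr(\mC)\,(\mB\otimes\mD)$, and $(\cdot)^{\ptran{\{i\}}}$ is the unique linear operator with $(\mB\otimes\mC\otimes\mD)^{\ptran{\{i\}}}=\mB\otimes\mC^\intercal\otimes\mD$, for any compatibly sized $\mB\in\bbR^{d^{i-1}\times d^{i-1}}$, $\mC\in\bbR^{d\times d}$, $\mD\in\bbR^{d^{k-i}\times d^{k-i}}$. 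Since matrices of Kronecker form $\mB\otimes\mC\otimes\mD$ span $\bbR^{d^k\times d^k}$ and both operations are linear, it suffices to verify the identity on such a product. For these we directly compute
\[
\tr_{\{i\}}\bigl((\mB\otimes\mC\otimes\mD)^{\ptran{\{i\}}}\bigr)
=\tr_{\{i\}}(\mB\otimes\mC^\intercal\otimes\mD)
=\tr(\mC^\intercal)(\mB\otimes\mD)
=\tr(\mC)(\mB\otimes\mD)
=\tr_{\{i\}}(\mB\otimes\mC\otimes\mD),
\]
and we are done by linearity.

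If a purely mechanical verification is preferred (and to avoid invoking the uniqueness characterization), I would instead substitute \Cref{def:partial-trace-singular} and \Cref{def:partial-transpose} and collapse the resulting triple sum. Writing $\mK_{\hat i\hat j}\defeq\mI^{\otimes i-1}\otimes\ve_{\hat i}\ve_{\hat j}^\intercal\otimes\mI^{\otimes k-i}$ and $\vw_j\defeq\mI^{\otimes i-1}\otimes\ve_j\otimes\mI^{\otimes k-i}$, \Cref{lem:mixed-product} gives $\vw_j^\intercal\mK_{\hat i\hat j}=\mI^{\otimes i-1}\otimes(\ve_j^\intercal\ve_{\hat i})\ve_{\hat j}^\intercal\otimes\mI^{\otimes k-i}$, which vanishes unless $\hat i=j$; symmetrically $\mK_{\hat i\hat j}\vw_j=0$ unless $\hat j=j$. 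So the triple sum in $\tr_{\{i\}}(\mA^{\ptran{\{i\}}})=\sum_{j,\hat i,\hat j}\vw_j^\intercal\mK_{\hat i\hat j}\mA\mK_{\hat i\hat j}\vw_j$ collapses to the diagonal $\hat i=\hat j=j$, leaving $\sum_j \vw_j^\intercal\mA\vw_j=\tr_{\{i\}}(\mA)$ exactly.

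Either route is routine; there is no real obstacle, only a choice of presentation. Given that the paper has already phrased both operators in terms of their action on Kronecker products, the first (linearity/spanning) argument is the cleanest and matches the style used later when the partial trace and partial transpose are composed in the variance computations.
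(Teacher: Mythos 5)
Both of your routes are correct. Route 2 (the mechanical triple-sum collapse via \Cref{def:partial-trace-singular}, \Cref{def:partial-transpose}, and \Cref{lem:mixed-product}) is essentially identical to the paper's own proof — same substitution, same observation that $\ve_{\hat i}^\intercal\ve_{\hat j}$ and $\ve_{\hat \ell}^\intercal\ve_{\hat i}$ kill all off-diagonal terms, just with relabelled indices. Route 1 is a genuinely different and cleaner argument: it invokes the equivalent characterizations of $\tr_{\{i\}}$ and $(\cdot)^{\ptran{\{i\}}}$ on elementary Kronecker products $\mB\otimes\mC\otimes\mD$ (both of which the paper states right after their respective definitions), together with linearity and the fact that such products span $\bbR^{d^k\times d^k}$, to reduce the whole identity to $\tr(\mC)=\tr(\mC^\intercal)$. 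The linearity/spanning route gives less to check and makes the ``why'' obvious, at the cost of leaning on the (unproven-in-the-paper but standard) uniqueness characterizations; the mechanical route is self-contained from the displayed definitions. Either is acceptable, and your identification of the trade-off is accurate.
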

\begin{proof}
We directly compute
\begin{align*}
	\tr_{\{i\}}(\mA^{\ptran{\{i\}}})
	&= \sum_{\hat i = 1}^d (\mI^{\otimes i-1} \otimes \ve_{\hat i} \otimes \mI^{\otimes k-i})^\intercal
		\mA^{\ptran{\{i\}}}
		(\mI^{\otimes i-1} \otimes \ve_{\hat i} \otimes \mI^{\otimes k-i}) \\
	&= \sum_{\hat i, \hat j, \hat \ell = 1}^d (\mI^{\otimes i-1} \otimes \ve_{\hat i}^\intercal\ve_{\hat j}\ve_{\hat \ell}^\intercal \otimes \mI^{\otimes k-i})
		\mA
		(\mI^{\otimes i-1} \otimes \ve_{\hat i}\ve_{\hat j}\ve_{\hat \ell}^\intercal \otimes \mI^{\otimes k-i}) \\
	&= \sum_{\hat i, \hat j, \hat \ell = 1}^d \mathbbm1_{[\hat i = \hat j]} \mathbbm1_{[\hat i = \hat \ell]} (\mI^{\otimes i-1} \otimes \ve_{\hat \ell}^\intercal \otimes \mI^{\otimes k-i})
		\mA
		(\mI^{\otimes i-1} \otimes \ve_{\hat j} \otimes \mI^{\otimes k-i}) \\
	&= \sum_{\hat i = 1}^d (\mI^{\otimes i-1} \otimes \ve_{\hat i} \otimes \mI^{\otimes k-i})^\intercal
		\mA
		(\mI^{\otimes i-1} \otimes \ve_{\hat i} \otimes \mI^{\otimes k-i}) \\
	&= \tr_{\{i\}}(\mA).
\end{align*}
\end{proof}
\begin{lemma}
\label{lem:reorder-parital-transpose}
Let \(\mA \in \bbR^{d^k \times d^k}\), \(i\in[k]\), \(\mB\in\bbR^{d^{i-1} \times d^{i-1}}\), and \(\mC \in \bbR^{d^{k-i} \times d^{k-i}}\).
Then, for all \(\hat i, \hat j \in [d]\) we have
\[
	(\mB \otimes \ve_{\hat j} \otimes \mC)^\intercal \mA (\mB \otimes \ve_{\hat i} \otimes \mC)
	= (\mB \otimes \ve_{\hat i} \otimes \mC)^\intercal \mA^{\ptran{\{i\}}} (\mB \otimes \ve_{\hat j} \otimes \mC)
\]
\end{lemma}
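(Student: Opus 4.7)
The plan is to start from the right-hand side, expand $\mA^{\ptran{\{i\}}}$ via its definition as a double sum (\Cref{def:partial-transpose}), and then apply the mixed-product property (\Cref{lem:mixed-product}) to slide $(\mB \otimes \ve_{\hat i} \otimes \mC)^\intercal$ and $(\mB \otimes \ve_{\hat j} \otimes \mC)$ inside. After this, the inner products between standard basis vectors collapse the double sum to a single term which is exactly the left-hand side with $\ve_{\hat i}$ and $\ve_{\hat j}$ swapped in the middle slot.

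Concretely, I would substitute
\[
\mA^{\ptran{\{i\}}} = \sum_{\hat m, \hat n = 1}^d (\mI^{\otimes i-1} \otimes \ve_{\hat m}\ve_{\hat n}^\intercal \otimes \mI^{\otimes k-i}) \mA (\mI^{\otimes i-1} \otimes \ve_{\hat m}\ve_{\hat n}^\intercal \otimes \mI^{\otimes k-i})
\]
into the right-hand side. Applying \Cref{lem:mixed-product} twice,
\[
(\mB^\intercal \otimes \ve_{\hat i}^\intercal \otimes \mC^\intercal)(\mI^{\otimes i-1} \otimes \ve_{\hat m}\ve_{\hat n}^\intercal \otimes \mI^{\otimes k-i}) = \mathbbm{1}_{[\hat i = \hat m]}(\mB^\intercal \otimes \ve_{\hat n}^\intercal \otimes \mC^\intercal),
\]
and analogously $(\mI^{\otimes i-1} \otimes \ve_{\hat m}\ve_{\hat n}^\intercal \otimes \mI^{\otimes k-i})(\mB \otimes \ve_{\hat j} \otimes \mC) = \mathbbm{1}_{[\hat n = \hat j]}(\mB \otimes \ve_{\hat m} \otimes \mC)$. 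Substituting back, the only surviving term in the double sum is $\hat m = \hat i$ and $\hat n = \hat j$, which yields $(\mB \otimes \ve_{\hat j} \otimes \mC)^\intercal \mA (\mB \otimes \ve_{\hat i} \otimes \mC)$.

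There is not really a hard step here; the only thing to be careful about is keeping track of which of $\hat i, \hat j$ ends up on which side after the Kronecker indicators fire. A sanity check is the equivalent description of the partial transpose as the unique linear extension of $(\mB \otimes \mN \otimes \mC)^{\ptran{\{i\}}} = \mB \otimes \mN^\intercal \otimes \mC$: the lemma then just says $\langle \mB \otimes \ve_{\hat j} \otimes \mC, \mA (\mB \otimes \ve_{\hat i} \otimes \mC)\rangle$ equals the same thing computed against $\mA^{\ptran{\{i\}}}$ with $\ve_{\hat i}, \ve_{\hat j}$ swapped, which is precisely what transposing the $i$-th subsystem does to a rank-one outer product $\ve_{\hat i}\ve_{\hat j}^\intercal$ in that slot. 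So the result is a direct one-screen computation using only Definitions~\ref{def:partial-transpose} and \Cref{lem:mixed-product}.
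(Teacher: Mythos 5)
Your proof is correct and takes essentially the same route as the paper: expand \(\mA^{\ptran{\{i\}}}\) via \Cref{def:partial-transpose}, apply \Cref{lem:mixed-product} to push \((\mB\otimes\ve_{\hat i}\otimes\mC)^\intercal\) and \((\mB\otimes\ve_{\hat j}\otimes\mC)\) through, and observe that the indicator constraints collapse the double sum to the single surviving term with the basis vectors swapped. The paper's proof is the same computation with indices named \(\hat\ell,\hat m\) instead of \(\hat m,\hat n\).
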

\begin{proof}
We expand the right-hand-side:
\begin{align*}
	&
		(\mB \otimes \ve_{\hat i} \otimes \mC)^\intercal
		\mA^{\ptran{\{i\}}}
		(\mB \otimes \ve_{\hat j} \otimes \mC) \\
	=&
		(\mB \otimes \ve_{\hat i} \otimes \mC)^\intercal
		({\textstyle \sum_{\hat \ell, \hat m=1}^d}
			(\mI^{\otimes i-1} \otimes \ve_{\hat \ell}\ve_{\hat m}^\intercal \otimes \mI^{\otimes k-i})
			\mA
			(\mI^{\otimes i-1} \otimes \ve_{\hat \ell}\ve_{\hat m}^\intercal \otimes \mI^{\otimes k-i}))
		(\mB \otimes \ve_{\hat j} \otimes \mC) \\
	=&
		{\textstyle \sum_{\hat \ell, \hat m=1}^d}
			(\mB^\intercal \otimes \ve_{\hat i}^\intercal\ve_{\hat \ell}\ve_{\hat m}^\intercal \otimes \mC^\intercal)
			\mA
			(\mB \otimes \ve_{\hat \ell}\ve_{\hat m}^\intercal\ve_{\hat j} \otimes \mC) \\
	=&
		(\mB^\intercal \otimes \ve_{\hat j}^\intercal \otimes \mC^\intercal)
		\mA
		(\mB \otimes \ve_{\hat i} \otimes \mC),
\end{align*}
where we remove sum by noting that \(\ve_{\hat i}^\intercal\ve_{\hat \ell}\) and \(\ve_{\hat m}^\intercal \ve_{\hat j}\) are zero unless \(\hat \ell = \hat i\) and \(\hat m = \hat j\).
\end{proof}

We now have enough tools to exactly understand the variance of the Kronecker-Hutchinson Estimator.


\section{The Exact Variance of Kronecker-Hutchinson Estimator}
\label{sec:exact-var-proof}
We begin by recalling the exact variance of the classical (\(k=1\)) Hutchinson's Estimator:
\begin{lemma}[\cite{girard1987algorithme,hutchinson1989stochastic,AT11, ram900}]
\label{lem:hutch-classical-variance}
	Let \(\mA \in \bbR^{D \times D}\) and let \(\vg \in \bbR^D\) be a vector of either iid Rademacher or standard normal Gaussian entries.
	Let \(\hat\mA = \frac12(\mA + \mA^\intercal)\).
	Then, \(\E[\vg^\intercal\mA\vg] = \tr(\mA)\) and \(\Var[\vg^\intercal\mA\vg] \leq 2\norm{\hat\mA}_F^2 \leq 2\norm{\mA}_F^2\).
	The first inequality is tight if \vg is Gaussian, and the second is tight if \mA is symmetric.
\end{lemma}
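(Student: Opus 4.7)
\medskip
\noindent\textbf{Proof Proposal.}
The plan is to handle the expectation by a direct one-line moment computation and then to reduce the variance calculation to a standard fourth-moment expansion after first symmetrizing \mA. For the expectation, I would write $\E[\vg^\intercal\mA\vg] = \sum_{i,j} \mA_{ij}\E[g_ig_j]$ and use $\E[g_ig_j] = \indic_{[i=j]}$, which holds for both iid standard Gaussian and Rademacher entries, to collapse the sum to $\sum_i \mA_{ii} = \tr(\mA)$.

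For the variance, the first step is to observe that $\vg^\intercal\mA\vg = \vg^\intercal\hat\mA\vg$ since this is a scalar and transposing gives $\vg^\intercal\mA^\intercal\vg$, so the quadratic form depends only on the symmetric part $\hat\mA$. This lets me work with a symmetric matrix throughout and recover the $\norm{\hat\mA}_F^2$ bound automatically. Then I would expand
\begin{equation*}
	\E[(\vg^\intercal\hat\mA\vg)^2] = \sum_{i,j,k,\ell} \hat\mA_{ij}\hat\mA_{k\ell}\, \E[g_ig_jg_kg_\ell],
\end{equation*}
and evaluate the fourth moment. In the Gaussian case, Isserlis' theorem gives $\E[g_ig_jg_kg_\ell] = \indic_{[i=j]}\indic_{[k=\ell]} + \indic_{[i=k]}\indic_{[j=\ell]} + \indic_{[i=\ell]}\indic_{[j=k]}$ for every tuple of indices. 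Plugging this in and using symmetry of $\hat\mA$ yields $\E[(\vg^\intercal\hat\mA\vg)^2] = (\tr\hat\mA)^2 + \norm{\hat\mA}_F^2 + \tr(\hat\mA^2) = (\tr\hat\mA)^2 + 2\norm{\hat\mA}_F^2$, so subtracting $(\tr\hat\mA)^2 = (\tr\mA)^2$ gives exactly $\Var[\vg^\intercal\mA\vg] = 2\norm{\hat\mA}_F^2$. In the Rademacher case, the only change is $\E[g_i^4] = 1$ instead of $3$, which subtracts the nonnegative quantity $2\sum_i \hat\mA_{ii}^2$ from the Gaussian value, so we obtain an inequality $\Var[\vg^\intercal\mA\vg] \leq 2\norm{\hat\mA}_F^2$ with the correct direction.

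For the second inequality, I would use the Pythagorean-style identity $\norm{\mA}_F^2 = \norm{\hat\mA}_F^2 + \norm{\tfrac12(\mA - \mA^\intercal)}_F^2$, which follows by expanding $\langle \mA + \mA^\intercal, \mA - \mA^\intercal \rangle_F = \norm{\mA}_F^2 - \norm{\mA^\intercal}_F^2 = 0$. This immediately gives $\norm{\hat\mA}_F \leq \norm{\mA}_F$ with equality exactly when $\mA = \mA^\intercal$, matching the tightness claim for symmetric \mA.

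There is no real obstacle here; the only careful bookkeeping is making sure the three Wick pairings are all combined correctly with the symmetry of $\hat\mA$ so that the cross term $\sum_{i,j}\hat\mA_{ij}\hat\mA_{ji}$ collapses to $\tr(\hat\mA^2) = \norm{\hat\mA}_F^2$, and making sure the $i=j=k=\ell$ diagonal terms are not double-counted when comparing the Gaussian and Rademacher cases.
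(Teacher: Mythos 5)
The paper cites \Cref{lem:hutch-classical-variance} as an imported result from prior work (\cite{girard1987algorithme,hutchinson1989stochastic,AT11,ram900}) and does not include a proof of it, so there is no paper proof to compare against line by line. That said, your argument is correct, complete, and essentially the canonical derivation used in those references: symmetrize to $\hat\mA$, expand the fourth moment via Isserlis/Wick pairings in the Gaussian case to get the exact equality $\Var[\vg^\intercal\mA\vg] = 2\norm{\hat\mA}_F^2$, observe that for Rademacher entries the only discrepancy is the $i{=}j{=}k{=}\ell$ diagonal terms (where $\E[g_i^4]$ drops from $3$ to $1$), which reduces the variance by the nonnegative amount $2\sum_i \hat\mA_{ii}^2$, and finally use orthogonality of the symmetric and antisymmetric parts in the Frobenius inner product to get $\norm{\hat\mA}_F \le \norm{\mA}_F$ with equality precisely when $\mA$ is symmetric. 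All the tightness claims are handled correctly, and the bookkeeping on the Wick pairings (in particular that the cross pairing gives $\tr(\hat\mA^2) = \norm{\hat\mA}_F^2$ by symmetry) is right. No gaps.
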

This factor of 2 is important for our analysis, and is the base of the exponent \(2^{k - \abs{\cS}}\) in \Cref{thm:real-partial-trace-transpose}.
The key to our proof technique is to use \Cref{lem:kron-extract-vec} to expand
\begin{align*}
	\vx^\intercal\mA\vx
	&= (\vx_1 \otimes \cdots \otimes \vx_k)^\intercal \mA (\vx_1 \otimes \cdots \otimes \vx_k) \\
	&= \vx_k^\intercal (\vx_1 \otimes \cdots \otimes \vx_{k-1} \otimes \mI)^\intercal \mA (\vx_1 \otimes \cdots \otimes \vx_{k-1} \otimes \mI) \vx_k \\
	&= \vx_k^\intercal \pmrdm{:k-1}{\vx_{:k-1}}{\mA} \vx_k \numberthis \label{eq:introduce-pmrdm}
\end{align*}
which then is just a classical (\(k=1\)) Hutchinson's Estimate of the trace of \(\pmrdm{:k-1}{\vx_{:k-1}}{\mA}\).
\Cref{lem:hutch-classical-variance} then gives (for Gaussians) an exact understanding of the variance of this estimator.
As a warm-up, we first use this proof technique to show that the Kronecker-Hutchinson Estimator is unbiased, though shorter proofs are certainly possible.
\begin{theorem}
	\label{thm:kron-hutch-unbiased}
	Let \(\mA \in \bbR^{d^k \times d^k}\).
	Let \(\vx_1,\ldots,\vx_k \in \bbR^d\) be iid Rademacher or standard normal Gaussian vectors, and let \(\vx = \vx_1 \otimes \cdots \otimes \vx_k\).
	Then, \(\E[\vx^\intercal\mA\vx] = \tr(\mA)\).
\end{theorem}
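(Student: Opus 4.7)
The plan is to prove this by induction on $k$, using the expansion in \Cref{eq:introduce-pmrdm} as the inductive unrolling step. The base case $k=1$ is exactly the classical unbiasedness statement in \Cref{lem:hutch-classical-variance}, so no work is needed there.

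For the inductive step, I would condition on $\vx_1,\ldots,\vx_{k-1}$ and take the inner expectation first over $\vx_k$. By \Cref{eq:introduce-pmrdm}, we have
\[
    \vx^\intercal \mA \vx = \vx_k^\intercal \pmrdm{:k-1}{\vx_{:k-1}}{\mA} \vx_k,
\]
and since $\pmrdm{:k-1}{\vx_{:k-1}}{\mA}$ is a deterministic matrix once $\vx_1,\ldots,\vx_{k-1}$ are fixed, the classical unbiasedness result (the $k=1$ case of \Cref{lem:hutch-classical-variance}) applied to $\vx_k$ gives
\[
    \E\!\left[\vx^\intercal\mA\vx \,\middle|\, \vx_1,\ldots,\vx_{k-1}\right] = \tr\!\left(\pmrdm{:k-1}{\vx_{:k-1}}{\mA}\right).
\]
Next I would invoke \Cref{lem:pmrdm-to-partial-trace} with $i=k-1$ to rewrite this trace as $\vx_{:k-1}^\intercal \tr_{\{k\}}(\mA)\, \vx_{:k-1}$, which is itself a Kronecker-Hutchinson quadratic form with $k-1$ factors applied to the matrix $\tr_{\{k\}}(\mA) \in \bbR^{d^{k-1}\times d^{k-1}}$.

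Taking the outer expectation and applying the induction hypothesis to this $(k-1)$-factor estimator yields $\E[\vx_{:k-1}^\intercal \tr_{\{k\}}(\mA)\, \vx_{:k-1}] = \tr(\tr_{\{k\}}(\mA))$, which equals $\tr(\mA)$ by the trace-preserving property of the partial trace (\Cref{lem:trace-of-partial-trace}). Chaining everything via the tower property of conditional expectation completes the induction.

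I do not anticipate any real obstacle: the argument is essentially a telescoping application of the classical $k=1$ Hutchinson identity, with \Cref{lem:pmrdm-to-partial-trace} serving as the bridge that converts each inner trace into the next outer quadratic form. The only thing to be careful about is bookkeeping — making sure the PMRDM, the partial trace $\tr_{\{k\}}$, and the reduced estimator on $\bbR^{d^{k-1}}$ line up correctly so that the induction hypothesis applies verbatim. Since the author explicitly frames this as a warm-up for the variance argument, the main value of writing it this way is to rehearse the proof template that will later produce \Cref{thm:real-partial-trace-transpose}.
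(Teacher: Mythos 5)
Your proposal is correct and follows essentially the same argument as the paper: both peel off one factor at a time via \Cref{eq:introduce-pmrdm}, apply the classical $k=1$ unbiasedness inside a tower-rule conditioning, convert the resulting trace into a $(k-1)$-factor quadratic form via \Cref{lem:pmrdm-to-partial-trace}, and finish with \Cref{lem:trace-of-partial-trace}. The paper phrases the telescoping as a chain of equalities over $i=k,\ldots,1$ rather than as a formal induction on $k$, but this is a purely cosmetic difference.
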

\begin{proof}
For any \(i\in[k]\),
\begin{align*}
	\E[\vx_{:i}^\intercal \tr_{i+1:}(\mA)\, \vx_{:i}]
	&= \E[\vx_i^\intercal \, \pmrdm{:i-1}{\vx_{:i}}{\tr_{i+1:}(\mA)}\, \vx_i] \tag{\Cref{eq:introduce-pmrdm}} \\
	&= \E_{\vx_{:i-1}}\E_{\vx_i}[\vx_i^\intercal \, \pmrdm{:i-1}{\vx_{:i}}{\tr_{i+1:}(\mA)}\, \vx_i\mid \vx_{:i-1}] \tag{Tower Rule} \\
	&= \E_{\vx_{:i-1}}[ \tr(\pmrdm{:i-1}{\vx_{:i}}{\tr_{i+1:}(\mA)})\, ] \tag{\Cref{lem:hutch-classical-variance}} \\
	&= \E[ \vx_{:i-1}^\intercal \tr_{i:}(\mA)\, \vx_{:i-1} ]. \tag{\Cref{lem:pmrdm-to-partial-trace}}
\end{align*}
Repeating this argument \(k\) times in a row, and letting \(\tr_{\emptyset}(\mA) \defeq \mA\), we conclude by \Cref{lem:trace-of-partial-trace} that
\[
	\E[\vx^\intercal \mA \vx] = \E[\vx_{:k}^\intercal \tr_{\emptyset}(\mA)\, \vx_{:k}] = \ldots = \E[\vx_{1}^\intercal \tr_{2:}(\mA)\, \vx_{1}] = \tr(\tr_{2:}(\mA)) = \tr(\mA)
\]
which completes the proof.

\end{proof}
Analyzing the variance of the estimator requires a bit more effort due to the Frobenius norm term and \(\bar\mA\) matrix that appear in \Cref{lem:hutch-classical-variance}.
More concretely, we approach the analysis of the variance of \(\vx^\intercal\mA\vx\) via the classical formula
\[
	\Var[\vx^\intercal\mA\vx] 
	= \E[(\vx^\intercal\mA\vx)^2] - (\E[\vx^\intercal\mA\vx])^2.
\]
Since we already know that \((\E[\vx^\intercal\mA\vx])^2 = (\tr(\mA))^2\), we turn our attention to evaluating \(\E[(\vx^\intercal\mA\vx)^2]\).
Note that rearranging \Cref{lem:hutch-classical-variance} gives that
\begin{align}
	\E[(\vg^\intercal\mA\vg)^2] \leq (\tr(\mA))^2 + 2\norm{\bar\mA}_F^2,
	\label{eq:squared-expected-classic-hutch}
\end{align}
with equality if \(\vg\) is Gaussian.
Then, noting that \(\frac12(\pmrdm{:k-1}{\vx_{:k-1}}{\mA} + \pmrdm{:k-1}{\vx_{:k-1}}{\mA}^\intercal) = \pmrdm{:k-1}{\vx_{:k-1}}{\frac12(\mA+\mA^\intercal)}\), and taking the same approach as in the bias analysis, we see that
\begin{align*}
	\E[(\vx^\intercal\mA\vx)^2]
	&= \E[(\vx_k^\intercal \pmrdm{:k-1}{\vx_{:k-1}}{\mA} \vx_k)^2] \tag{\Cref{eq:introduce-pmrdm}}\\
	&= \E_{\vx_{:k-1}}\E_{\vx_k}[(\vx_k^\intercal \pmrdm{:k-1}{\vx_{:k-1}}{\mA} \vx_k)^2 \mid \vx_{:k-1}] \tag{Tower Rule} \\
	&= \E_{\vx_{:k-1}} [(\tr(\pmrdm{:k-1}{\vx_{:k-1}}{\mA}))^2 + 2 \norm{\pmrdm{:k-1}{\vx_{:k-1}}{\mB}}_F^2] \tag{\Cref{eq:squared-expected-classic-hutch}} \\
	&= \E_{\vx_{:k-1}} [(\vx_{:k-1}^\intercal \tr_{\{k\}}(\mA)\, \vx_{:k-1})^2] + 2 \E_{\vx_{:k-1}}[\norm{\pmrdm{:k-1}{\vx_{:k-1}}{\mB}}_F^2], \tag{\Cref{lem:pmrdm-to-partial-trace}}
\end{align*}
where we let \(\mB \defeq \frac12(\mA+\mA^\intercal)\).
The first term admits a recursive relationship just like in the proof of \Cref{thm:kron-hutch-unbiased}, so we can easily handle that.
The second term however requires more effort.
The key observation we use is that the squared Frobenius norm of the PMRDM is in fact a sum of the squares of many Kronecker-Hutchinson Estimators:
\begin{align*}
	\norm{\pmrdm{:i}{\vx_{:i}}{\mB}}_F^2
	&= \sum_{\hat i, \hat j = 1}^{d^{k-i}} [\pmrdm{:i}{\vx_{:i}}{\mB}]_{\hat i \hat j}^2 \\
	&= \sum_{\hat i, \hat j = 1}^{d^{k-i}} (\ve_{\hat i}^\intercal \pmrdm{:i}{\vx_{:i}}{\mB} \ve_{\hat j})^2 \\
	&= \sum_{\hat i, \hat j = 1}^{d^{k-i}} (\ve_{\hat i}^\intercal (\vx_{:i} \otimes \mI^{\otimes k-i})^\intercal \mB (\vx_{:i} \otimes \mI^{k-i}) \ve_{\hat j})^2 \\
	&= \sum_{\hat i, \hat j = 1}^{d^{k-i}} (\vx_{:i}^\intercal (\mI^{\otimes i} \otimes \ve_{\hat i})^\intercal \mB (\mI^{\otimes i} \otimes \ve_{\hat j}) \vx_{:i})^2 \tag{\Cref{lem:kron-extract-vec}}
\end{align*}
In particular, this last line is the sum of squares of many Kronecker-Hutchinson Estimators for a wide variety of matrices.
Despite these being trace estimates for different matrices, we can just carefully analyze the resulting terms using the Mixed-Product Property (\Cref{lem:mixed-product}), \Cref{lem:kron-extract-vec}, and \Cref{eq:squared-expected-classic-hutch}.
We get the following recurrence relation:

\begin{lemma}
\label{lem:frob-inductive-step}
Fix \(i\in[k]\), and let \(\mB \in \bbR^{d^j \times d^j}\) for some \(j \geq i\).
Then,
\[
	\E_{\vx_{:i}}[\norm{\pmrdm{:i}{\vx_{:i}}{\mB}}_F^2]
	\leq
	\E_{\vx_{:i-1}} \norm{\pmrdm{:i-1}{\vx_{:i-1}}{\tr_i(\mB)}}_F^2 + 2 \E_{\vx_{:i-1}} \norm{\pmrdm{:i-1}{\vx_{:i-1}}{\tsfrac12(\mB + \mB^{\ptran{\{i\}}})}}_F^2,
\]
with equality if \(\vx_i\) is a standard normal Gaussian vector.
\end{lemma}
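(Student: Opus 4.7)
The plan is to apply the tower rule and reduce, conditionally on $\vx_{:i-1}$, to a classical ($k{=}1$) Hutchinson analysis in $\vx_i$. Define $\mC_{\hat i, \hat j} \defeq (\vx_{:i-1} \otimes \mI \otimes \ve_{\hat i})^\intercal \mB (\vx_{:i-1} \otimes \mI \otimes \ve_{\hat j}) \in \bbR^{d \times d}$. Building on the entry-wise expansion highlighted in the prose immediately preceding the lemma (together with \Cref{lem:mixed-product,lem:kron-extract-vec}), this matrix is independent of $\vx_i$ and satisfies
\[
	\norm{\pmrdm{:i}{\vx_{:i}}{\mB}}_F^2 = \sum_{\hat i, \hat j = 1}^{d^{j-i}} \left(\vx_i^\intercal \mC_{\hat i, \hat j} \vx_i\right)^2.
\]
Applying \Cref{lem:hutch-classical-variance} term-by-term then gives $\E_{\vx_i}[(\vx_i^\intercal \mC_{\hat i, \hat j} \vx_i)^2] \leq (\tr(\mC_{\hat i, \hat j}))^2 + 2 \norm{\tfrac12(\mC_{\hat i, \hat j} + \mC_{\hat i, \hat j}^\intercal)}_F^2$, with equality when $\vx_i$ is Gaussian. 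The remaining work is to identify $\sum_{\hat i, \hat j} (\tr(\mC_{\hat i, \hat j}))^2$ and $\sum_{\hat i, \hat j} \norm{\hat\mC_{\hat i, \hat j}}_F^2$ with the two PMRDM Frobenius norms in the claim.

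For the trace term, I expand $\tr(\mC_{\hat i, \hat j}) = \sum_{\hat m=1}^d \ve_{\hat m}^\intercal \mC_{\hat i, \hat j} \ve_{\hat m}$ and fold the sum over $\hat m$ back into the definition of the partial trace on subsystem $i$, yielding $\tr(\mC_{\hat i, \hat j}) = (\vx_{:i-1} \otimes \ve_{\hat i})^\intercal \tr_i(\mB) (\vx_{:i-1} \otimes \ve_{\hat j}) = [\pmrdm{:i-1}{\vx_{:i-1}}{\tr_i(\mB)}]_{\hat i \hat j}$. Squaring and summing over $\hat i, \hat j$ recovers $\norm{\pmrdm{:i-1}{\vx_{:i-1}}{\tr_i(\mB)}}_F^2$, which accounts for the first term on the right-hand side of the claim.

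For the symmetrized Frobenius term I use the algebraic identity $\norm{\hat\mC}_F^2 = \tfrac12(\norm{\mC}_F^2 + \tr(\mC^2))$. A direct expansion, collapsing $\ve_{\hat r} \otimes \ve_{\hat i}$ into a single standard basis vector of $\bbR^{d^{j-i+1}}$, gives $\sum_{\hat i, \hat j} \norm{\mC_{\hat i, \hat j}}_F^2 = \norm{\mM}_F^2$, where $\mM \defeq \pmrdm{:i-1}{\vx_{:i-1}}{\mB}$. For the cross term, \Cref{lem:reorder-parital-transpose} lets me swap the two inner basis vectors inside $\tr(\mC_{\hat i, \hat j}^2)$ at the cost of a partial transpose on $\mB$, producing $\sum_{\hat i, \hat j} \tr(\mC_{\hat i, \hat j}^2) = \langle \mM, \mM' \rangle_F$ where $\mM' \defeq \pmrdm{:i-1}{\vx_{:i-1}}{\mB^{\ptran{\{i\}}}}$. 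The last ingredient is the commutation $\mM' = \mM^{\ptran{\{1\}}}$ between the PMRDM and the partial transpose; since partial transpose is a permutation of matrix entries, this gives $\norm{\mM'}_F = \norm{\mM}_F$. Combining,
\[
	\sum_{\hat i, \hat j} \norm{\hat\mC_{\hat i, \hat j}}_F^2 = \tfrac12\bigl(\norm{\mM}_F^2 + \langle \mM, \mM' \rangle_F\bigr) = \tfrac14 \norm{\mM + \mM'}_F^2 = \norm{\pmrdm{:i-1}{\vx_{:i-1}}{\tfrac12(\mB + \mB^{\ptran{\{i\}}})}}_F^2.
\]
Taking the outer expectation over $\vx_{:i-1}$ assembles the claim, and all inequalities become equalities when $\vx_i$ is Gaussian. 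The main obstacle is verifying the commutation $\mM' = \mM^{\ptran{\{1\}}}$, which is a short but fiddly Kronecker calculation from the definitions via \Cref{lem:mixed-product}; once this is in place the rest is algebraic bookkeeping.
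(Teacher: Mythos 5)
Your proof is correct and follows essentially the same route as the paper: expand $\norm{\pmrdm{:i}{\vx_{:i}}{\mB}}_F^2$ into a sum of squared classical Hutchinson quadratic forms in $\vx_i$, apply \Cref{lem:hutch-classical-variance} (via \Cref{eq:squared-expected-classic-hutch}) entry-by-entry, and match the trace and symmetrized-Frobenius pieces back to PMRDMs, with \Cref{lem:reorder-parital-transpose} being the key ingredient for the latter. The only stylistic difference is in how the symmetrized term is folded up: you expand $\norm{\hat\mC_{\hat i,\hat j}}_F^2 = \tfrac12(\norm{\mC_{\hat i,\hat j}}_F^2 + \tr(\mC_{\hat i,\hat j}^2))$, match each piece to $\norm{\mM}_F^2$ and $\langle\mM,\mM'\rangle_F$ separately, and reassemble via the commutation $\mM' = \mM^{\ptran{\{1\}}}$; the paper instead directly identifies $\hat\mC_{\hat i,\hat j} = (\mC\otimes\ve_{\hat i})^\intercal \tfrac12(\mB+\mB^{\ptran{\{i\}}}) (\mC\otimes\ve_{\hat j})$ and sums via \Cref{lem:kron-sum-of-frob}. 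Both routes rest on the same identity, and the commutation you flag as the ``main obstacle'' is a straightforward check from \Cref{def:partial-transpose} (and is what \Cref{lem:reorder-parital-transpose} is really encoding), so the proof goes through.
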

\begin{proof}
We first extract the vector \(\vx_i\) via the same technique as in \Cref{eq:introduce-pmrdm}.
Recall that we always take \(\mI = \mI_d \in \bbR^{d \times d}\) to be the identity in \(d\) dimensions.
\begin{align*}
	&\E_{\vx_{:i}}[\norm{\pmrdm{:i}{\vx_{:i}}{\mB}}_F^2] \\
	&= \sum_{\hat i, \hat j} \E_{\vx_{:i}}[(\ve_{\hat i}^\intercal \pmrdm{:i}{\vx_{:i}}{\mB} \ve_{\hat j})^2] \\
	&= \sum_{\hat i, \hat j} \E_{\vx_{:i}}[(\ve_{\hat i}^\intercal (\vx_{:i} \otimes \mI^{\otimes j-i})^\intercal\mB(\vx_{:i} \otimes \mI^{\otimes j-i}) \ve_{\hat j})^2] \\
	&= \sum_{\hat i, \hat j} \E_{\vx_{:i}}[((\vx_{:i-1} \otimes \vx_i \otimes \ve_{\hat i})^\intercal\mB(\vx_{i-1} \otimes \vx_i \otimes \ve_{\hat j}))^2] \\
	&= \E_{\vx_{:i-1}} \sum_{\hat i, \hat j} \E_{\vx_{i}}[
		(\vx_{i}^\intercal
			(\vx_{:i-1} \otimes \mI \otimes \ve_{\hat i})^\intercal \mB (\vx_{:i-1} \otimes \mI \otimes \ve_{\hat j})
		\vx_{i})^2 \mid \vx_{:i-1}]
\end{align*}
We take a moment to examine the variance of this Hutchinson's sample.
Letting \(\mC \defeq \vx_{:i-1} \otimes \mI\), we get that the variance of the sample is at most
\[
	2\norm{\tsfrac12((\mC \otimes \ve_{\hat i})^\intercal \mB (\mC \otimes \ve_{\hat j}) + 
	(\mC \otimes \ve_{\hat j})^\intercal \mB (\mC \otimes \ve_{\hat i}))}_F^2.
\]
By \Cref{lem:reorder-parital-transpose}, we know the term on the right is equivalent to \((\mC \otimes \ve_{\hat i})^\intercal \mB^{\ptran{\{i\}}} (\mC \otimes \ve_{\hat j})\), so that the norm above becomes \(2\norm{(\mC \otimes \ve_{\hat i})^\intercal \frac12(\mB+\mB^{\ptran{\{i\}}}) (\mC \otimes \ve_{\hat j})}_F^2\).
\Cref{lem:kron-sum-of-frob} further tells us the the sum of this variance across all \(\hat i, \hat j \in [d^{j-i}]\) is \(2\norm{(\mC \otimes \mI)^\intercal \frac12(\mB+\mB^{\ptran{\{i\}}}) (\mC \otimes \mI)}_F^2\).
Returning to our overall analysis, we get
\begin{align*}
	&\E_{\vx_{:i}}[\norm{\pmrdm{:i}{\vx_{:i}}{\mB}}_F^2] \\
	&\leq \E_{\vx_{:i-1}} \sum_{\hat i, \hat j} \left[
			(\tr((\vx_{:i-1} \otimes \mI \otimes \ve_{\hat i})^\intercal \mB (\vx_{:i-1} \otimes \mI \otimes \ve_{\hat j})))^2
			+ 2 \norm{(\vx_{:i-1} \otimes \mI \otimes \ve_{\hat i})^\intercal \tsfrac12(\mB+\mB^{\ptran{\{i\}}}) (\vx_{:i-1} \otimes \mI \otimes \ve_{\hat j})}_F^2
		\right]\\
	&= \E_{\vx_{:i-1}} \sum_{\hat i, \hat j = 1} \left[
			(\tr((\vx_{:i-1} \otimes \mI \otimes \ve_{\hat i})^\intercal \mB (\vx_{:i-1} \otimes \mI \otimes \ve_{\hat j})))^2\right]
		+ 2 \E_{\vx_{:i-1}} \norm{(\vx_{:i-1} \otimes \mI^{\otimes 2})^\intercal \tsfrac12(\mB+\mB^{\ptran{\{i\}}}) (\vx_{:i-1} \otimes \mI^{\otimes 2})}_F^2
		\\
	&= \E_{\vx_{:i-1}} \sum_{\hat i, \hat j} \left[
			\left(\sum_{\ell=1}^d \ve_{\ell}^\intercal(\vx_{:i-1} \otimes \mI \otimes \ve_{\hat i})^\intercal \mB (\vx_{:i-1} \otimes \mI \otimes \ve_{\hat j})\ve_{\ell}\right)^2\right]
		+ 2 \E_{\vx_{:i-1}} \norm{\pmrdm{:i-1}{\vx_{:i-1}}{\tsfrac12(\mB+\mB^{\ptran{\{i\}}})}}_F^2
		\\
	&= \E_{\vx_{:i-1}} \sum_{\hat i, \hat j} \left[
			\left(\ve_{\hat i}^\intercal \sum_{\ell=1}^d (\vx_{:i-1} \otimes \ve_{\ell} \otimes \mI)^\intercal \mB (\vx_{:i-1} \otimes \ve_{\ell} \otimes \mI)\ve_{\hat j}\right)^2\right]
		+ 2 \E_{\vx_{:i-1}} \norm{\pmrdm{:i-1}{\vx_{:i-1}}{\tsfrac12(\mB+\mB^{\ptran{\{i\}}})}}_F^2
		\\
	&= \E_{\vx_{:i-1}} \sum_{\hat i, \hat j} \left[
			\left(\ve_{\hat i}^\intercal \pmrdm{:i-1}{\vx_{:i-1}}{\tr_{\{i\}}(\mB)}\ve_{\hat j}\right)^2\right]
		+ 2 \E_{\vx_{:i-1}} \norm{\pmrdm{:i-1}{\vx_{:i-1}}{\tsfrac12(\mB+\mB^{\ptran{\{i\}}})}}_F^2
		\\
	&= \E_{\vx_{:i-1}} \norm{\pmrdm{:i-1}{\vx_{:i-1}}{\tr_i(\mB)}}_F^2
		+ 2 \E_{\vx_{:i-1}} \norm{\pmrdm{:i-1}{\vx_{:i-1}}{\tsfrac12(\mB+\mB^{\ptran{\{i\}}})}}_F^2
  \end{align*}
Note that the only inequality applies \Cref{eq:squared-expected-classic-hutch}, which is an equality if \(\vx_i\) is a standard normal Gaussian.

\end{proof}
Notice that this lemma takes in an arbitrary matrix \mB, instead of the specific choice of \(\mB = \frac12(\mA+\mA^\intercal)\) from earlier.
This is useful, as we apply \Cref{lem:frob-inductive-step} repeatedly in the following proof:
\begin{lemma}
\label{lem:expected-pmrdm-frob}
Fix \(i\in[k]\), and let \(\mB \in \bbR^{d^j \times d^j}\) for some \(j \geq i\).
Let \(\bar\mB = \frac1{2^i}\sum_{\cV\subseteq[i]}\mB^{\ptran\cV}\).
Then,
\[
	\E_{\vx_{:i}}[\norm{\pmrdm{:i}{\vx_{:i}}{\mB}}_F^2]
	\leq \sum_{\cS \subseteq [i]} 2^{i-\abs{\cS}} \norm{\tr_{\cS}(\bar\mB)}_F^2,
\]
with equality if \(\vx_1,\ldots,\vx_i\) are iid standard normal Gaussian vectors.
\end{lemma}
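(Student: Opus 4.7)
The plan is to induct on $i$, using \Cref{lem:frob-inductive-step} as the one-step recursion and then recombining terms via two identities about how the averaging operator $\mB \mapsto \bar\mB$ interacts with $\tr_{\{i\}}$ and with symmetrization under $\ptran{\{i\}}$. The base case $i=0$ is trivial because $\pmrdm{:0}{\vx_{:0}}{\mB} = \mB$, the sum on the right collapses to $\norm{\mB}_F^2$, and $\bar\mB = \mB$ when we average over $\cV\subseteq\emptyset$. (Alternatively one can take $i=1$, where \Cref{lem:frob-inductive-step} gives exactly $\norm{\tr_{\{1\}}(\mB)}_F^2 + 2\norm{\tfrac12(\mB+\mB^{\ptran{\{1\}}})}_F^2$, and \Cref{lem:partial-trace-of-partial-transpose} matches this with the claimed right-hand side.)

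For the inductive step, I apply \Cref{lem:frob-inductive-step} to peel off $\vx_i$, producing two expectations over $\vx_{:i-1}$: one involving $\mB_1 \defeq \tr_{\{i\}}(\mB)$ and one involving $\mB_2 \defeq \tfrac12(\mB+\mB^{\ptran{\{i\}}})$. Both matrices live on the first $i-1$ subsystems (plus the passive later subsystems), so the inductive hypothesis applies and bounds each term by $\sum_{\cS\subseteq[i-1]} 2^{i-1-|\cS|}\norm{\tr_\cS(\bar{\mB_\bullet})}_F^2$. To land on the claimed bound, I show the two averaging identities
\[
\tr_{\{i\}}(\bar\mB) \;=\; \overline{\tr_{\{i\}}(\mB)} \;=\; \bar{\mB_1}, \qquad \overline{\tfrac12(\mB+\mB^{\ptran{\{i\}}})} \;=\; \bar\mB \;=\; \bar{\mB_2}.
\]
The first follows from splitting $\cV\subseteq[i]$ into the subsets that do and do not contain $i$, using that partial transposes on disjoint subsystems commute, and applying \Cref{lem:partial-trace-of-partial-transpose} to pair up $\mB^{\ptran{\cV'}}$ with $\mB^{\ptran{\cV'\cup\{i\}}}$ under $\tr_{\{i\}}$. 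The second identity uses the same split of $\cV$ to show the two halves of the sum equal $\bar\mB$. With these two identifications, the contribution from $\mB_1$ becomes a sum over $\cS\subseteq[i-1]$ of $2^{i-1-|\cS|}\norm{\tr_{\cS\cup\{i\}}(\bar\mB)}_F^2$, which reindexes to the subsets $\cS'\subseteq[i]$ containing $i$, while the factor-of-$2$ contribution from $\mB_2$ gives exactly the subsets $\cS'\subseteq[i]$ not containing $i$. Their union covers all $\cS'\subseteq[i]$ with the correct weights, completing the induction.

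The equality claim for Gaussian $\vx_i$ piggybacks on the fact that every inequality used is the equality case of \Cref{lem:frob-inductive-step}, so no additional work is needed. The main obstacle, and the step most prone to error, is the bookkeeping in the two averaging identities: it requires being careful about how the $\ptran{\{i\}}$ inside the definition of $\bar\mB$ interacts with $\tr_{\{i\}}$ (handled by \Cref{lem:partial-trace-of-partial-transpose}) and how the reindexing $\cS \mapsto \cS\cup\{i\}$ shifts the exponent from $2^{i-1-|\cS|}$ to $2^{i-|\cS'|}$. Everything else is essentially mechanical: once the two identities and the reindexing are in place, the two halves of the recursion slot together to give a sum over $\cS\subseteq[i]$ with exponents $2^{i-|\cS|}$, as required.
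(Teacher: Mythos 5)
Your proposal is correct and follows essentially the same route as the paper: induction on $i$ driven by \Cref{lem:frob-inductive-step}, with the two commutation identities (the paper phrases them as $f_i(\mB) = f_{i-1}(\tfrac12(\mB+\mB^{\ptran{\{i\}}}))$ and $\tr_{\{i\}}(\mB)=\tr_{\{i\}}(\tfrac12(\mB+\mB^{\ptran{\{i\}}}))$ via \Cref{lem:partial-trace-of-partial-transpose}, which are equivalent to your $\bar{\mB_2}=\bar\mB$ and $\bar{\mB_1}=\tr_{\{i\}}(\bar\mB)$) followed by the same reindexing of $\cS\subseteq[i-1]$ to $\cS'\subseteq[i]$ split by whether $i\in\cS'$.
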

\begin{proof}
We prove this by induction over \(i=1,\ldots,j\).
For notational clarity in this induction, we let \(f_i(\mB) \defeq \frac1{2^i}\sum_{\cV\subseteq[i]}\mB^{\ptran\cV}\).
Then, our goal is to prove that \(\E_{\vx_{:i}}[\norm{\pmrdm{:i}{\vx_{:i}}{\mB}}_F^2] \leq \sum_{\cS \subseteq [i]} 2^{i-\abs{\cS}} \norm{\tr_{\cS}(f_i(\mB))}_F^2\) for all \(i\) and \mB.
We start with the base case \(i=1\).
First note that \(\tr_{\{1\}}(\mB) = \tr_{\{1\}}(\frac12(\mB+\mB^{\ptran{\{1\}}})) = \tr_{\{1\}}(f_1(\mB))\) by \Cref{lem:partial-trace-of-partial-transpose}.
Then \Cref{lem:frob-inductive-step} gives us that
\begin{align*}
	\E_{\vx_{1}}[\norm{\pmrdm{\{1\}}{\vx_{1}}{\mB}}_F^2]
	&\leq \norm{\tr_{\{1\}}(\mB)}_F^2 + 2 \norm{\tsfrac12(\mB+\mB^{\ptran{\{1\}}})}_F^2 \\
	&= 2^{1-1} \norm{\tr_{\{1\}}(f_1(\mB))}_F^2 + 2^{1-0} \norm{f_1(\mB)}_F^2 \\
	&= \sum_{\cS \subseteq [1]} 2^{1 - \abs\cS} \norm{\tr_\cS(f_1(\mB))}_F^2.
\end{align*}
Next, for the induction case, we assume the lemma holds for \(i-1\).
Since \(f_i(\mB)\) is the average of the all partial transposes in the first \(i\) subsystems of \mB, we can expand \(f_i(\mB) = f_{i-1}(\tsfrac12(\mB+\mB^{\ptran{\{i\}}}))\).
So, for instance, by our induction hypothesis we have that
\[
	\E_{\vx_{:i-1}} \norm{\pmrdm{:i-1}{\vx_{:i-1}}{\tsfrac12(\mB+\mB^{\ptran{\{i\}}})}}_F^2
	\leq \sum_{\cS\subseteq[i-1]} 2^{i-1-\abs{\cS}} \norm{\tr_{\cS}(f_{i-1}(\tsfrac12(\mB+\mB^{\ptran{\{i\}}})))}_F^2
	= \sum_{\cS\subseteq[i-1]} 2^{i-1-\abs{\cS}} \norm{\tr_{\cS}(f_i(\mB))}_F^2.
\]
Therefore, by \Cref{lem:frob-inductive-step}, by our inductive hypothesis, and by the composition of partial traces, we have that
\begin{alignat*}{4}
	\E_{\vx_{:i}}[\norm{\pmrdm{:i}{\vx_{:i}}{\mB}}_F^2]
	&\leq
		\E_{\vx_{:i-1}} \norm{\pmrdm{:i-1}{\vx_{:i-1}}{\tr_{\{i\}}(\mB)}}_F^2
		&&+ 2 \E_{\vx_{:i-1}} \norm{\pmrdm{:i-1}{\vx_{:i-1}}{\tsfrac12(\mB+\mB^{\ptran{\{i\}}})}}_F^2 \\
	&=
		\E_{\vx_{:i-1}} \norm{\pmrdm{:i-1}{\vx_{:i-1}}{\tr_{\{i\}}(\tsfrac12(\mB+\mB^{\ptran{\{i\}}}))}}_F^2
		&&+ 2 \E_{\vx_{:i-1}} \norm{\pmrdm{:i-1}{\vx_{:i-1}}{\tsfrac12(\mB+\mB^{\ptran{\{i\}}})}}_F^2 \\
	&\leq
		\sum_{\cS\subseteq[i-1]} 2^{i-1-\abs{\cS}} \norm{\tr_\cS(\tr_{\{i\}}(f_i(\mB)))}_F^2
		&&+ 2 \sum_{\cS\subseteq[i-1]} 2^{i-1-\abs{\cS}} \norm{\tr_{\cS}(f_i(\mB))}_F^2 \\
	&=
		\sum_{\cS\subseteq[i-1]} 2^{i-1-\abs{\cS}} \norm{\tr_{\cS\cup\{i\}}(f_i(\mB))}_F^2
		&&+ 2 \sum_{\cS\subseteq[i-1]} 2^{i-1-\abs{\cS}} \norm{\tr_{\cS}(f_i(\mB))}_F^2 \\
	&=
            \sum_{\cS\subseteq[i-1]} 2^{i-(\abs{\cS}+1)} \norm{\tr_{\cS\cup\{i\}}(f_i(\mB))}_F^2
		&&+ \sum_{\cS\subseteq[i-1]} 2^{i-\abs{\cS}} \norm{\tr_{\cS}(f_i(\mB))}_F^2 \numberthis \label{eq:frob-combining-terms} \\
	&= \sum_{\cS' \subseteq [i]} 2^{i-\abs{\cS'}} \norm{\tr_{\cS'}(f_i(\mB))}_F^2 \numberthis \label{eq:frob-combined-terms}
\end{alignat*}
The second inequality applies the induction hypotheses to the matrix \(\tr_{\{i\}}(\frac12(\mB+\mB^{\ptran{\{i\}}}))\) on the left term, and applies the induction hypothesis to the matrix \(\frac12(\mB+\mB^{\ptran{\{i\}}})\) on the right term.

Further, \Cref{eq:frob-combined-terms} holds by comparing the first and second terms in \Cref{eq:frob-combining-terms}, where the left sum covers all \(\cS'\subseteq[i]\) such that \(i \in \cS'\), and where the right term covers all \(\cS'\subseteq[i]\) such that \(i \notin\cS'\).
Note that the only inequalities used are \Cref{eq:squared-expected-classic-hutch,lem:frob-inductive-step}, as well as the induction hypothesis, which are all equalities when considering iid standard normal Gaussian vectors.
\end{proof}
We now have all the tools needed to prove the bound on the variance of the Kronecker-Hutchinson Estimator.
\begin{reptheorem}{thm:real-partial-trace-transpose}
	Fix \(d,k\in\bbN\), and let \(\mA\in\bbR^{d^k \times d^k}\).
	Let \(\vx = \vx_1 \otimes \cdots \otimes \vx_k\), where each \(\vx_i\) is independent and is either a Rademacher vector or a \(\cN(\vec0,\mI)\) Gaussian vector.
	Let \(\bar\mA \defeq \frac1{2^k} \sum_{\cV \subseteq [k]} \mA^{\ptran\cV}\) be the average of all partial transposes of \mA.
	Then,
	\(\Var[\vx^\intercal\mA\vx] \leq \sum_{\cS\subset[k]} 2^{k - \abs{\cS}} \norm{\tr_{\cS}(\bar\mA)}_F^2\).
	Furthermore, if all \(\vx_i\) are \(\cN(\vec0,\mI)\), then this expression is an equality.
\end{reptheorem}
\begin{proof}
We start again with the observation in \Cref{eq:introduce-pmrdm}, which we use to prove the following claim by induction over \(i = 1, \ldots, k\):
\begin{align}
	\E_{\vx_{:i}}[(\vx_{:i}^\intercal\tr_{i+1:}(\mA)\vx_{:i})^2]
	\leq (\tr(\mA))^2 + 2\norm{\tr_{2:}(\tsfrac12(\mA+\mA^\intercal))}_F^2 + 2\sum_{j=1}^{i-1} \E_{\vx_{:j}} \norm{\pmrdm{:j}{\vx_{:j}}{\tr_{j+2:}(\tsfrac12(\mA+\mA^\intercal))}}_F^2,
	\label{eq:main-thm-induction}
\end{align}
where we interpret \(\tr_{k+1}(\mA) \defeq \mA\).
The base case for \Cref{eq:main-thm-induction} is when \(i=1\), where we immediately have
\begin{align*}
	\E_{\vx_1}[(\vx_1^\intercal\tr_{2:}(\mA)\vx_1)^2]
	\leq (\tr(\tr_{2:}(\mA)))^2 + 2 \norm{\tr_{2:}(\tsfrac12(\mA+\mA^\intercal))}_F^2
	= (\tr(\mA))^2 + 2 \norm{\tr_{2:}(\tsfrac12(\mA+\mA^\intercal))}_F^2
\end{align*}
which completes the base case.
Now we assume \Cref{eq:main-thm-induction} holds for \(i-1\), and we show that it holds for \(i\):
\begin{align*}
	&\E_{\vx_{:i}}[(\vx_{:i}^\intercal\tr_{i+1:}(\mA)\vx_{:i})^2] \\
	&= \E_{\vx_{:i-1}} \E_{\vx_i} [(\vx_i^\intercal\pmrdm{:i-1}{\vx_{i-1}}{\tr_{i+1:}(\mA)}\vx_i)^2 \mid \vx_{:i-1}] \\
	&\leq \E_{\vx_{:i-1}}\left[(\tr(\pmrdm{:i-1}{\vx_{i-1}}{\tr_{i+1:}(\mA)}))^2 + 2\norm{\pmrdm{:i-1}{\vx_{i-1}}{\tr_{i+1:}(\tsfrac12(\mA+\mA^\intercal))}}_F^2\right] \tag{\Cref{lem:hutch-classical-variance}} \\
	&= \E_{\vx_{:i-1}}[(\vx_{:i-1}^\intercal\tr_{i:}(\mA)\vx_{:i-1})^2] + \E_{\vx_{:i-1}}2\norm{\pmrdm{:i-1}{\vx_{:i-1}}{\tr_{i+1:}(\tsfrac12(\mA+\mA^\intercal))}}_F^2 \\
	&\leq (\tr(\mA))^2 + 2\norm{\tr_{2:}(\mA)}_F^2 + 2\sum_{j=1}^{i-1} \E_{\vx_{:j}} \norm{\pmrdm{:j}{\vx_{:j}}{\tr_{j+2:}(\tsfrac12(\mA+\mA^\intercal))}}_F^2
\end{align*}
where the last line applies the induction hypothesis.
This completes the proof by induction of \Cref{eq:main-thm-induction}.
We can then appeal to \Cref{lem:expected-pmrdm-frob} and merge terms together.
First, let \(\mC_{i} \defeq \frac1{2^i} \sum_{\cV\subseteq[i]} (\frac12(\mA+\mA^\intercal))^{\ptran\cV}\), and note that \(\mC_0 = \tsfrac12(\mA+\mA^\intercal)\).
Then,
\begin{align*}
	\E[(\vx^\intercal\mA\vx)^2]
	&\leq (\tr(\mA))^2 + 2\norm{\tr_{2:}(\tsfrac12(\mA+\mA^\intercal))}_F^2 + 2\sum_{i=1}^{k-1} \E_{\vx_{:i}} \norm{\pmrdm{:i}{\vx_{:i}}{\tr_{i+2:}(\tsfrac12(\mA+\mA^\intercal))}}_F^2 \\
	&\leq (\tr(\mA))^2 + 2\norm{\tr_{2:}(\mC_0)}_F^2 + 2\sum_{i=1}^{k-1} \sum_{\cS\subseteq[i]} 2^{i - \abs{\cS}} \norm{\tr_{\cS \cup \{i+2,\ldots,k\}} (\mC_{i})}_F^2 \\
	&= (\tr(\mA))^2 + \sum_{i=0}^{k-1} \sum_{\cS\subseteq[i]} 2^{i + 1 - \abs{\cS}} \norm{\tr_{\cS \cup \{i+2,\ldots,k\}} (\mC_{i})}_F^2 \\
	&= (\tr(\mA))^2 + \sum_{i=1}^{k} \sum_{\cS\subseteq[i-1]} 2^{i - \abs{\cS}} \norm{\tr_{\cS \cup \{i+1,\ldots,k\}} (\mC_{i-1})}_F^2
\end{align*}
We proceed by simplifying the partial trace \(\tr_{\cS \cup \{i+1,\ldots,k\}} (\mC_{i-1})\).
First, note that \(\mC_{i-1}\) is the average of all partial transposes of \(\frac12(\mA+\mA^\intercal)\) in the first \(i-1\) subsystems.
By \Cref{lem:partial-trace-of-partial-transpose}, we can extend this to be the average of all partial transposes of \(\frac12(\mA+\mA^\intercal)\) in across all subsystems \(\cV\subset[i-1] \cup \{i+1,\ldots,k\}\), so that
\begin{align*}
	\tr_{i+1:}(\mC_{i-1})
	&= \tr_{i+1:}\left(\frac1{2^{k-1}} \sum_{\cV \subseteq [k]\setminus\{i\}} (\tsfrac12(\mA+\mA^\intercal))^{\ptran\cV}\right) \\
	&= \tr_{i+1:}\left(\frac{\frac1{2^{k-1}} \sum_{\cV \subseteq [k]\setminus\{i\}} \mA^{\ptran\cV} + \frac1{2^{k-1}} \sum_{\cV \subseteq [k]\setminus\{i\}} (\mA^\intercal)^{\ptran\cV}}2\right).
\end{align*}
The left term above is the average of all partial transposes of \mA that do not include subsystem \(i\).
Since \((\mA^\intercal)^{\ptran\cV} = \mA^{\ptran{\bar\cV}}\) where \(\bar\cV \defeq [k] \setminus \cV\), the right term above is the average of all partial transposes of \mA that do include \(i\).
So, we find that \(\tr_{i+1:}(\mC_{i-1}) = \tr_{i+1:}(\frac{1}{2^k} \sum_{\cV\subseteq[k]} \mA^{\ptran\cV}) = \tr_{i+1:}(\bar\mA)\).
We are therefore left with the expression
\begin{align}
	\E[(\vx^\intercal\mA\vx)^2]
	\leq (\tr(\mA))^2 + \sum_{i=1}^{k} \sum_{\cS\subseteq[i-1]} 2^{i - \abs{\cS}} \norm{\tr_{\cS \cup \{i+1,\ldots,k\}} (\bar\mA)}_F^2. \label{eq:variance-annoying-sum}
\end{align}
We now focus on the double summation in \Cref{eq:variance-annoying-sum}.
We start by understanding the inner summation in terms of the index \(i\) from the outer iteration.
Let
\[
	F(i) \defeq \sum_{\cS\subseteq[i-1]} 2^{i - \abs{\cS}} \norm{\tr_{\cS \cup \{i+1,\ldots,k\}} (\bar\mA)}_F^2
\]
so that
\[
	\sum_{i=1}^{k} \sum_{\cS\subseteq[i-1]} 2^{i - \abs{\cS}} \norm{\tr_{\cS \cup \{i+1,\ldots,k\}} (\bar\mA)}_F^2
	= \sum_{i=1}^{k} F(i).
\]
We see that \(F(i)\) is a weighted sum of \(\norm{\tr_{\hat\cS}(\bar\mA)}_F^2\) across all \(\hat\cS \subseteq [k]\) of the form \(\hat\cS = \cS \cup \{i+1,\ldots,k\}\) where \(\cS\subseteq[i-1]\).
That is, we consider all \(\hat\cS\) that satisfies two properties:
\begin{enumerate}
	\item Indices \(i+1,\ldots,k\) are all included: \(\{i+1,\ldots,k\} \subseteq \cS\)
	\item Index \(i\) is not included: \(i \notin \hat\cS\)
\end{enumerate}
We can intuitively think of \(i\) as a ``Gap'' that precedes the ``Suffix'' \(\{i+1,\ldots,k\}\).
Crucially, if \(\norm{\tr_{\hat\cS}(\bar\mA)}_F^2\) appears in a term of \(F(i)\), then \(\norm{\tr_{\hat\cS}(\bar\mA)}_F^2\) \emph{does not} appear in a term of any other \(F(j)\).
This is because each \(\hat\cS\) has a unique gap: for any \(\hat\cS\), let \(i\in[k]\) be the smallest value such that \(\{i+1,\ldots,k\}\subseteq\hat\cS\).
Then \(i\) is the gap of \(\hat\cS\).
We note an edge-case, where if \(\hat\cS = \{1,\ldots,k\}\) then the gap is \(i=0\) and we recover the term \(\norm{\tr_{\hat\cS}(\bar\mA)}_F^2 = (\tr(\bar\mA))^2\) in \Cref{eq:variance-annoying-sum}.

Next, note that every \(\hat\cS\) is scaled by \(2^{i-\abs{\cS}}\).
Since \(|\hat\cS| = \abs{\cS \cup \{i+1,\ldots,k\}} = \abs{\cS} + k-i\), we know \(\abs{\cS} - i = |\hat\cS| - k\), so we can write \(2^{i-\abs{\cS}} = 2^{k - |\hat\cS|}\).
So, we have shown that every \(\hat\cS \subseteq[k]\) appears exactly once in \Cref{eq:variance-annoying-sum}, and each term \(\norm{\tr_{\hat\cS}(\bar\mA)}_F^2\) is scaled by \(2^{k-|\hat\cS|}\).
That is,
\begin{align*}
	\E[(\vx^\intercal\mA\vx)^2] &\leq \sum_{\hat\cS\subseteq[k]} 2^{k - |\hat\cS|} \norm{\tr_{\hat\cS}(\bar\mA)}_F^2,
\end{align*}
which completes the proof since
\[
	\Var[\vx^\intercal\mA\vx] = \E[(\vx^\intercal\mA\vx)^2]	- (\E[\vx^\intercal\mA\vx])^2
	= \sum_{\hat\cS\subset[k]} 2^{k - |\hat\cS|} \norm{\tr_{\hat\cS}(\bar\mA)}_F^2.
\]
Note that all inequalities in the proof become equalities if we know that \(\vx_1,\ldots,\vx_k\) are all iid standard normal Gaussians.
\end{proof}

\subsection{Proof of \Cref{thm:real-partial-trace}}
\label{sec:proof-of-partial-trace}

The variance expression in \Cref{thm:real-partial-trace-transpose} is extremely tight, being an equality in the case that our query vectors are Kroneckers of Gaussian vectors.
However, it is not immediately clear how we should interpret this expression.
We now prove \Cref{thm:real-partial-trace}, which is a more approachable upper bound of \Cref{thm:real-partial-trace-transpose}.
We start with a short lemma:
\begin{lemma}
\label{lem:partial-trace-frob-invariant}
Let \(\mA\in\bbR^{d^k \times d^k}\) and \(\cV\subseteq[k]\). Then, \(\norm{\mA^{\ptran\cV}}_F = \norm{\mA}_F\).
\end{lemma}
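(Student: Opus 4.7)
The plan is to reduce to the case of a single subsystem by iteration, and for the single-subsystem case to show that $\mA \mapsto \mA^{\ptran{\{i\}}}$ merely permutes the entries of $\mA$. Since any permutation of entries preserves the sum of squared entries, this immediately yields $\|\mA^{\ptran{\{i\}}}\|_F = \|\mA\|_F$.

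To carry out the single-subsystem step, I would index rows and columns of $\mA$ by $k$-tuples $(a_1,\ldots,a_k), (b_1,\ldots,b_k) \in [d]^k$ in the natural way induced by the Kronecker factorization $\bbR^{d^k} = (\bbR^d)^{\otimes k}$. Starting from \Cref{def:partial-transpose} and using the identity $\ve_{\hat i}\ve_{\hat j}^\intercal \mC \ve_{\hat i}\ve_{\hat j}^\intercal = [\mC]_{\hat j \hat i}\,\ve_{\hat i}\ve_{\hat j}^\intercal$ inside the Kronecker product (together with \Cref{lem:mixed-product}), I would verify the entrywise formula
\[
	[\mA^{\ptran{\{i\}}}]_{(a_1,\ldots,a_k),(b_1,\ldots,b_k)} = [\mA]_{(a_1,\ldots,a_{i-1},b_i,a_{i+1},\ldots,a_k),(b_1,\ldots,b_{i-1},a_i,b_{i+1},\ldots,b_k)}.
\]
This says precisely that $\mA^{\ptran{\{i\}}}$ is obtained from $\mA$ by swapping the $i$-th row-index coordinate $a_i$ with the $i$-th column-index coordinate $b_i$. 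This is an involution on the set of $(row, column)$ index pairs, hence a permutation of the entries of $\mA$, so
\[
	\|\mA^{\ptran{\{i\}}}\|_F^2 = \sum_{a,b}\left([\mA^{\ptran{\{i\}}}]_{a,b}\right)^2 = \sum_{a,b}\left([\mA]_{a,b}\right)^2 = \|\mA\|_F^2.
\]

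For the general case, let $\cV = \{i_1, \ldots, i_{|\cV|}\}$. By definition of $\mA^{\ptran\cV}$ as the composition of single-subsystem partial transposes, applying the single-subsystem case repeatedly gives
\[
	\|\mA^{\ptran\cV}\|_F = \|((\mA^{\ptran{\{i_1\}}})\cdots)^{\ptran{\{i_{|\cV|}\}}}\|_F = \|\mA\|_F,
\]
which is the claim. The only real obstacle is the bookkeeping needed to verify the entrywise formula above; there is no conceptual difficulty, only careful tracking of how a Kronecker-structured pre- and post-multiplication by $\mI^{\otimes i-1}\otimes \ve_{\hat i}\ve_{\hat j}^\intercal \otimes \mI^{\otimes k-i}$ acts on the multi-index of a single entry.
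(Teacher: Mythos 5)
Your proposal is correct and matches the paper's proof essentially verbatim: both reduce to a single subsystem and then show that $\mA \mapsto \mA^{\ptran{\{i\}}}$ permutes entries by swapping the $i$-th row-index coordinate with the $i$-th column-index coordinate (the paper carries this out by decomposing $\ve_{\hat i} = \ve_{\hat j}\otimes\ve_{\hat k}\otimes\ve_{\hat \ell}$ and swapping the $\ve_{\hat k}$ factors, which is exactly your entrywise formula). Nothing is missing.
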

\begin{proof}
Consider any \(i \in \cV\).
Then, recall that each standard basis \(\ve_{\hat i} \in \bbR^{d^k}\) can be decomposed as \(\ve_{\hat i} = \ve_{\hat j} \otimes \ve_{\hat k} \otimes \ve_{\hat \ell}\) where \(\ve_{\hat j}\in\bbR^{d^{i-1}}\), \(\ve_{\hat k}\in\bbR^d\), and \(\ve_{\hat \ell}\in\bbR^{d^{k-i}}\).
Then, we can expand the norm of \(\mA\) as
\begin{align*}
	\norm{\mA^{\ptran{\{i\}}}}_F^2
	&= \sum_{\hat i_1, \hat i_2 = 1}^{d^k}
		(\ve_{\hat i_1}^\intercal \mA^{\ptran{\{i\}}} \ve_{\hat i_2})^2 \\
	&= \sum_{\hat j_1, \hat j_2 = 1}^{d^{i-1}} \sum_{\hat k_1, \hat k_2 = 1}^{d} \sum_{\hat \ell_1, \hat \ell_2 = 1}^{d^{k-1}}
		((\ve_{\hat j_1} \otimes \ve_{\hat k_1} \otimes \ve_{\hat \ell_1})^\intercal\mA^{\ptran{\{i\}}}(\ve_{\hat j_2} \otimes \ve_{\hat k_2} \otimes \ve_{\hat \ell_2}))^2 \\
	&= \sum_{\hat j_1, \hat j_2 = 1}^{d^{i-1}} \sum_{\hat k_1, \hat k_2 = 1}^{d} \sum_{\hat \ell_1, \hat \ell_2 = 1}^{d^{k-1}}
		\left(
			\sum_{\hat m,\hat n=1}^{d}
			(\ve_{\hat j_1}^\intercal \otimes \ve_{\hat k_1}^\intercal\ve_{\hat m}\ve_{\hat n}^\intercal \otimes \ve_{\hat \ell_1}^\intercal) \mA (\ve_{\hat j_2} \otimes \ve_{\hat m}\ve_{\hat n}^\intercal\ve_{\hat k_2} \otimes \ve_{\hat \ell_2})
		\right)^2 \\
	&= \sum_{\hat j_1, \hat j_2 = 1}^{d^{i-1}} \sum_{\hat k_1, \hat k_2 = 1}^{d} \sum_{\hat \ell_1, \hat \ell_2 = 1}^{d^{k-1}} 
		((\ve_{\hat j_1}^\intercal \otimes \ve_{\hat k_2}^\intercal \otimes \ve_{\hat \ell_1}^\intercal) \mA (\ve_{\hat j_2} \otimes \ve_{\hat k_1} \otimes \ve_{\hat \ell_2}))^2 \\
	&= \sum_{\hat i_1, \hat i_2 = 1}^{d^k} \ve_{\hat i_1}^\intercal \mA \ve_{\hat i_2} \\
	&= \norm{\mA}_F^2,
\end{align*}
where we remove the sum over \(\hat m\) and \(\hat n\) be noting that \(\ve_{\hat k_1}^\intercal\ve_{\hat m}\) and \(\ve_{\hat n}^\intercal \ve_{\hat k_2}\) are zero unless \(\hat m = \hat k_1\) and \(\hat n = \hat k_2\).
We then complete the lemma by decomposing \(\mA^{\ptran\cV} = (((\mA^{\ptran{\{i_1\}}})^{\ptran{\{i_2\}}})\cdots)^{\ptran{\{i_{\abs\cV}\}}}\), where \(\cV = \{i_1,\ldots,i_{\abs\cV}\}\).
\end{proof}
We now prove the theorem:
\begin{reptheorem}{thm:real-partial-trace}
Fix \(d,k\in\bbN\), and let \(\mA\in\bbR^{d^k \times d^k}\).
Let \(\vx = \vx_1 \otimes \cdots \otimes \vx_k\), where each \(\vx_i\) is independent and is either a Rademacher vector or a \(\cN(\vec0,\mI)\) Gaussian vector.
Then,
\(\Var[\vx^\intercal\mA\vx] \leq \sum_{\cS\subset[k]} 2^{k - \abs{\cS}} \norm{\tr_{\cS}(\mA)}_F^2\).
Furthermore, if all \(\vx_i\) are \(\cN(\vec0,\mI)\) and if \(\mA=\bar\mA\), then this expression is an equality.
\end{reptheorem}
\begin{proof}
We prove that \(\norm{\tr_\cS(\bar\mA)}_F \leq \norm{\tr_\cS(\mA)}_F\), which implies the theorem.
First, recall that the partial trace is linear, so that by the triangle inequality we can write
\[
	\norm{\tr_\cS(\bar\mA)}_F
	= \norm{\tsfrac1{2^k} {\textstyle\sum_{\cV \subseteq [k]}} \tr_\cS(\mA^{\ptran\cV})}_F
	\leq \tsfrac1{2^k} {\textstyle\sum_{\cV \subseteq [k]}} \norm{\tr_\cS(\mA^{\ptran\cV})}_F.
\]
Next, we examine the norm \(\norm{\tr_\cS(\mA^{\ptran\cV})}\).
First, by \Cref{lem:partial-trace-of-partial-transpose}, we can assume without loss of generality that \(\cV \cap \cS = \emptyset\).
Then, for notationally simplicity, suppose that \(\cV = \{1,\ldots,\abs\cV\}\).
We can then write \(\tr_\cS{\mA^{\ptran\cV}} = (\tr_\cS(\mA))^{\ptran\cV}\).
To see this, fix any \(i \in \cV\) and \(j \in \cS\).
By our without-loss-of-generality claim, we know that \(i < j\), and can therefore expand
\begin{align*}
	\tr_{\{j\}}(\mA^{\ptran{\{i\}}})
	&= \sum_{\hat i = 1}^d
		(\mI^{\otimes j-1} \otimes \ve_{\hat i} \otimes \mI^{\otimes k-j})^\intercal
		\mA^{\ptran{\{i\}}}
		(\mI^{\otimes j-1} \otimes \ve_{\hat i} \otimes \mI^{\otimes k-j}) \\
	&= \sum_{\hat i, \hat j, \hat \ell = 1}^d
		(\mI^{\otimes i-1} \otimes \ve_{\hat j}\ve_{\hat \ell}^\intercal \otimes \mI^{j-i-1} \otimes \ve_{\hat i}^\intercal \otimes \mI^{\otimes k-j})
		\mA
		(\mI^{\otimes i-1} \otimes \ve_{\hat j}\ve_{\hat \ell}^\intercal \otimes \mI^{j-i-1} \otimes \ve_{\hat i} \otimes \mI^{\otimes k-j}) \\
	&= \sum_{\hat i, \hat j, \hat \ell = 1}^d
		(\mI^{\otimes i-1} \otimes \ve_{\hat j}\ve_{\hat \ell}^\intercal \otimes \mI^{k-i})
		\tr_{\{j\}}(\mA)
		(\mI^{\otimes i-1} \otimes \ve_{\hat j}\ve_{\hat \ell}^\intercal \otimes \mI^{k-i}) \\
	&= (\tr_{\{j\}}(\mA))^{\ptran{\{i\}}}.
\end{align*}
So, we can pull all of the partial transposes from inside of the partial trace to outside the partial trace.
Then, by \Cref{lem:partial-trace-frob-invariant}, we can conclude that
\begin{align*}
	\norm{\tr_\cS(\bar\mA)}_F
	&\leq \tsfrac1{2^k} {\textstyle\sum_{\cV \subseteq [k]}} \norm{\tr_\cS(\mA^{\ptran\cV})}_F \\
	&= \tsfrac1{2^k} {\textstyle\sum_{\cV \subseteq [k]}} \norm{(\tr_\cS(\mA))^{\ptran\cV}}_F \\
	&= \tsfrac1{2^k} {\textstyle\sum_{\cV \subseteq [k]}} \norm{\tr_\cS(\mA)}_F \\
	&= \norm{\tr_\cS(\mA)}_F.
\end{align*}
\end{proof}

While \Cref{thm:real-partial-trace} is more approachable than \Cref{thm:real-partial-trace-transpose}, it is still not immediately clear how to think of this rate.
In order to get a better understanding of this variance, we turn to the worst-case performance of this estimator for the class of PSD matrices.
First however, we take a moment to consider what happens when we use uniformly random vectors on the sphere in place of Gaussian vectors.

\subsection{
	Exact Variance with Uniformly Random Vectors on the Sphere
}
\label{sec:exact-var-unit-vec}

In this section we consider using \(\vu = \vu_1 \otimes \cdots \vu_k\) where \(\vu_i\) is a uniformly random vector such that \(\norm{\vu_i}_2 =\sqrt d\).
These vectors are isotropic (i.e. \(\E[\vu=\vec0]\) and \(\E[\vu\vu^\intercal]=\mI\)), so we know the Kronecker-Hutchinson estimator is unbiased when using \(\vu\).
In the non-Kronecker setting this vector is commonly used in practice, as it converges more quickly than using the Gaussian vectors \cite{girard1989fast,epperlyStochastic}.

We can analyze the variance of the Kronecker-Hutchinson estimator when using \(\vu\) by recalling that the product of a random unit vector in \(\bbR^d\) and a \(\chi_d^2\) random variable is distributed as a \(\cN(\vec0,\mI)\) Gaussian vector.
Equivalently, we can decompose \(\cN(\vec0,\mI)\) vector \(\vx_i = \sqrt{\frac{\gamma_i}{d}} \vu_i\) where \(\gamma_i \sim \chi_d^2\) and \(\norm{\vu_i}=\sqrt d\).
Then, by the law of total variation, we can prove \Cref{corol:random-unit-vec-variance}.
\begin{repcorollary}{corol:random-unit-vec-variance}
Fix \(d,k\in\bbN\) and let \(\mA\in\bbR^{d^k \times d^k}\).
Let \(\vu = \vu_1 \otimes \cdots \otimes \vu_k\) where each \(\vu_i\) is uniformly at random drawn from the sphere such that \(\norm{\vu_i}_2 = \sqrt d\).
Let \(\vx = \vx_1 \otimes \cdots \otimes \vx_k\) where each \(\vx_i\) is \(\cN(\vec0,\mI)\).
Then, \(\Var[\vu^\intercal\mA\vu] = \frac{\Var[\vx^\intercal\mA\vx] - ((1+\frac2d)^k-1)(\tr(\mA))^2}{(1+\frac2d)^k} \leq \frac1{(1+\frac2d)^k} \Var[\vx^\intercal\mA\vx]\).
\end{repcorollary}
\begin{proof}
We decompose each \(\vx_i = \sqrt{\frac{\gamma_i}{d}} \vu_i\) as mentioned above.
Then, we can write \(\vx = \sqrt{\frac{\gamma}{d^k}} \vu\) where \(\gamma = \prod_{i=1}^k \gamma_i\).
Note that \(\E[\gamma_i^2] = d^2(1+\frac2d)\), which implies that \(\E[\gamma^2] = d^{2k}(1+\frac2d)^k\) and \(\Var[\gamma] = d^{2k}(1+\frac2d)^k - d^{2k}\).
Then, by the law of total variation, we can write
\begin{align*}
	\Var[\vx^\intercal\mA\vx]
	&= \Var[\tsfrac{\gamma}{d^k} \vu^\intercal\mA\vu] \\
	&= \Var[\E[\tsfrac{\gamma}{d^k} \vu^\intercal\mA\vx \mid \gamma]] + \E[\Var[\tsfrac{\gamma}{d^k} \vu^\intercal\mA\vu \mid \gamma]] \\ 
	&= \Var[\tsfrac{\gamma}{d^k} \tr(\mA)] + \E[\tsfrac{\gamma^2}{d^{2k}} \Var[\vu^\intercal\mA\vu]] \\ 
	&= \frac{\Var[\gamma]}{d^{2k}}(\tr(\mA))^2 + \frac{\E[\gamma^2]}{d^{2k}} \Var[\vu^\intercal\mA\vu] \\
	&= ((1+\frac2d)^k-1)(\tr(\mA))^2 + (1+\frac2d)^k \Var[\vu^\intercal\mA\vu]
\end{align*}
which implies that
\begin{align*}
	\Var[\vu^\intercal\mA\vu]
	&= \frac{\Var[\vx^\intercal\mA\vx] - ((1+\frac2d)^k-1)(\tr(\mA))^2}{(1+\frac2d)^k},
\end{align*}
completing the proof.
\end{proof}


\section{Cost of the Kronecker-Hutchinson Estimator for PSD Matrices}
\label{sec:real-psd}

Trace estimation is often studied under the assumption that the input matrix \mA is PSD.
This assumption is very common and reasonable in practice, and without such an assumption, it is impossible for an algorithm such as Hutchinson's Estimator to achieve a \((1\pm\eps)\) relative error approximation to the trace of \mA.
This is because an indefinite matrix \mA can have \(\tr(\mA) = 0\) while having large Frobenius norm, meaning that Hutchinson's Estimator would have to perfectly recover the trace of \mA despite having a nonzero variance.
This would incur infinite sample complexity.
However, if we assume that \mA is PSD, then \(\norm{\mA}_F \leq \tr(\mA)\), and so we know that the standard deviation of the classical Hutchinson's Estimator is \(\sqrt{\frac{2}{\ell}}\norm{\mA}_F \leq \sqrt{\frac2\ell}\tr(\mA)\).
That is, \(\ell = O(\frac1{\eps^2})\) samples suffice to achieve \((1\pm\eps)\) relative error with at least constant probability.

We now make the same assumption and study the Kronecker-Hutchinson Estimator, and show that \(\ell = O(\frac{3^k}{\eps^2})\) samples suffices to estimate the trace of a PSD matrix to \((1\pm\eps)\) relative error.
We prove this in two different ways, one that follows from \Cref{thm:real-partial-trace} which suffices to get the worst-case complexity from Gaussian vectors and one that uses Minkowski's integral inequality to a tighter complexity across more choices of base distribution \(\cD'\).

To start, we need a basic property of the partial trace:
\begin{lemma}
\label{lem:partial-trace-is-psd}
Let \(\mA\in\bbR^{d^k \times d^k}\) be a PSD matrix, and let \(\cS\subseteq[k]\).
Then, \(\tr_\cS(\mA)\) is also PSD.
\end{lemma}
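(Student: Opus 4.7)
The plan is to reduce the general case to the single-subsystem case and then invoke the explicit sum definition, so the PSD property follows from the elementary fact that a sum of congruences $\mM^\intercal\mA\mM$ of a PSD matrix is PSD.

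\textbf{Reduction to a single subsystem.} By \Cref{def:partial-trace}, the partial trace $\tr_\cS(\mA)$ is a composition of single-subsystem partial traces $\tr_{\{i_1\}}\!\circ\tr_{\{i_2\}}\!\circ\cdots\circ\tr_{\{i_{\abs\cS}\}}$. Hence it suffices to show that $\tr_{\{i\}}$ preserves the PSD property for each fixed $i\in[k]$; the full result then follows by a straightforward induction on $\abs\cS$.

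\textbf{Single-subsystem case.} Starting from \Cref{def:partial-trace-singular}, write
\[
	\tr_{\{i\}}(\mA) \;=\; \sum_{j=1}^d \mM_j^\intercal\,\mA\,\mM_j,
	\qquad \mM_j \;\defeq\; \mI^{\otimes i-1} \otimes \ve_j \otimes \mI^{\otimes k-i}.
\]
For any vector $\vy \in \bbR^{d^{k-1}}$, we have $\vy^\intercal \mM_j^\intercal \mA \mM_j \vy = (\mM_j\vy)^\intercal \mA (\mM_j\vy) \geq 0$ since $\mA$ is PSD. Thus each summand $\mM_j^\intercal\mA\mM_j$ is PSD, and a finite sum of PSD matrices is PSD. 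So $\tr_{\{i\}}(\mA)$ is PSD, which together with the reduction above completes the proof.

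There is no real obstacle here: the entire argument is a one-line consequence of the sum-of-congruences representation of the partial trace. The only thing to be careful about is to invoke \Cref{def:partial-trace} (composition of single-subsystem partial traces) rather than try to write out a sum-of-congruences formula directly for general $\cS$, which would just replicate the inductive step anyway.
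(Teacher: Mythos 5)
Your proof is correct and takes essentially the same approach as the paper: the paper likewise observes that the partial trace is built from congruences $\mB^\intercal\mA\mB$ and sums of PSD matrices, both of which preserve positive semidefiniteness. You simply spell out the reduction to a single subsystem and the sum-of-congruences form a bit more explicitly, which is fine.
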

\begin{proof}
Recall that if \mA is PSD, then \(\mB^\intercal\mA\mB\) is also PSD for any matrix \mB of compatible dimension.
Further recall that if \mA and \mC are both PSD, then \(\mA + \mC\) is also PSD.
Since \(\tr_\cS(\mA)\) is a composition of such operations on \mA, we get that \(\tr_\cS(\mA)\) is also PSD.
\end{proof}
\begin{lemma}[Folklore]
\label{lem:psd-l2-l1-norm}
Let \(\mB\) be a PSD matrix. Then, \(\norm{\mB}_F \leq \tr(\mB)\).
\end{lemma}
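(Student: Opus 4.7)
The plan is to reduce the inequality to a statement about eigenvalues via the spectral theorem. Since \(\mB\) is PSD, it is symmetric, so I can write \(\mB = \mU\mLambda\mU^\intercal\) where \mU is orthogonal and \(\mLambda = \diag(\lambda_1,\ldots,\lambda_n)\) with each \(\lambda_i \geq 0\).

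From this decomposition, I would compute both quantities directly: \(\tr(\mB) = \sum_{i} \lambda_i\), and since \(\mB = \mB^\intercal\), \(\norm{\mB}_F^2 = \tr(\mB^\intercal\mB) = \tr(\mB^2) = \sum_i \lambda_i^2\) (using that the Frobenius norm is invariant under orthogonal transformations, or equivalently that \(\tr(\mU\mLambda^2\mU^\intercal) = \tr(\mLambda^2)\) by cyclicity of the trace).

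The final step is the elementary observation that for any non-negative reals \(\lambda_1,\ldots,\lambda_n \geq 0\),
\[
    \Bigl(\sum_{i} \lambda_i\Bigr)^2 = \sum_{i} \lambda_i^2 + 2\sum_{i<j} \lambda_i\lambda_j \geq \sum_{i} \lambda_i^2,
\]
where the inequality uses non-negativity of all cross terms. Thus \(\tr(\mB)^2 \geq \norm{\mB}_F^2\), and taking square roots of both (non-negative) sides yields the claim. There is no real obstacle here since this is a folklore fact; the only subtlety worth flagging is that PSD implicitly gives both symmetry (to apply the spectral theorem and write \(\norm{\mB}_F^2 = \tr(\mB^2)\)) and non-negativity of eigenvalues (to drop the cross terms), both of which are essential.
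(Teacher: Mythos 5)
Your proof is correct and complete. The paper states this lemma as folklore without providing a proof, so there is nothing to compare against; your spectral-theorem argument (diagonalize, observe \(\tr(\mB) = \sum_i \lambda_i\) and \(\norm{\mB}_F^2 = \sum_i \lambda_i^2\), then drop nonnegative cross terms) is the standard route and handles all the details correctly.
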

We now prove our next core theorem:
\begin{reptheorem}{thm:real-psd}
Fix \(d,k\in\bbN\), and let \(\mA\in\bbR^{d^k \times d^k}\) be a PSD matrix.
Let \(\vx = \vx_1 \otimes \cdots \otimes \vx_k\), where each \(\vx_i\) is either a Rademacher vector or a \(\cN(\vec0,\mI)\) Gaussian vector.
Then,
\(\Var[\vx^\intercal\mA\vx] \leq 3^k (\tr(\mA))^2\).
\end{reptheorem}
\begin{proof}
We first apply \Cref{thm:real-partial-trace,lem:psd-l2-l1-norm,lem:trace-of-partial-trace} to get
\[
	\E[(\vx^\intercal\mA\vx)^2]
	\leq \sum_{\cS \subseteq [k]} 2^{k - \abs{\cS}} \norm{\tr_\cS(\mA)}_F^2
	\leq \sum_{\cS \subseteq [k]} 2^{k - \abs{\cS}} (\tr(\tr_\cS(\mA)))^2
	= (\tr(\mA))^2 \sum_{\cS \subseteq [k]} 2^{k - \abs{\cS}}.
\]
So we turn our attention to bounding this sum.
If we let \(\sigma_i \defeq \mathbbm{1}_{[i \notin \cS]}\), then we can write \(k - \abs{\cS} = \sum_i \sigma_i\), and so
\begin{align*}
	\sum_{\cS \subseteq [k]} 2^{k - \abs{\cS}}
	= \sum_{\vsigma\in\{0,1\}^k} 2^{\sum_i \sigma_i}
	= \sum_{\sigma_1\in\{0,1\}} 2^{\sigma_1} \left(\cdots \left( \sum_{\sigma_{k-1}\in\{0,1\}} 2^{\sigma_{k-1}} \left(\sum_{\sigma_{k}\in\{0,1\}} 2^{\sigma_{k}}\right)\right)\right)
	= 3^k.
\end{align*}
We then conclude that \(\Var[\vx^\intercal\mA\vx] \leq \E[(\vx^\intercal\mA\vx)^2] \leq 3^k (\tr(\mA))^2\).
\end{proof}
In particular, if we take \(\ell\) samples in the Kronecker-Hutchinson Estimator, we get a standard deviation of \(\sqrt{\Var[H_\ell(\mA)]} \leq \sqrt{\frac{3^k}{\ell}} (\tr(\mA))\), which is less than \(\eps\tr(\mA)\) for \(\ell \geq \frac{3^k}{\eps^2}\).
More formally, by Chebyshev's inequality, we find that  \(\ell = O(\frac{3^k}{\eps^2})\) samples suffices to get
\[
	(1-\eps)\tr(\mA) \leq H_\ell(\mA) \leq (1+\eps)\tr(\mA)
\]
with constant probability.

Notably, if the base dimension \(d\) is small, then we can show that the Kronecker product of Rademacher vectors or vectors uniformly drawn from the sphere achieve a much faster rate of convergence in the worst case for PSD matrices.
This is because, for PSD matrices, these distributions have a lower variance than Gaussian vectors in the classical (\(k=1\)) case.
In particular, they achieve different values of \(C\) in the following theorem:
\begin{reptheorem}{thm:coarse-psd-rate}
    Let \(\cD'\) be a distribution over vectors \(\vx_i\in\bbR^d\) such that \(\E[\vx_i]=\vec0\), \(\E[\vx_i\vx_i^\intercal] = \mI\), and \(\Var[\vx_i^\intercal\mB\vx_i] \leq C(\tr(\mB))^2\) for all PSD \mB.
    Then, letting \(\vx = \vx_1 \otimes \cdots \otimes \vx_k\), we have \(\Var[\vx^\intercal\mA\vx] \leq (1+C)^k (\tr(\mA))^2\).
    In particular, taking \(\ell = O(\frac{(1+C)^k}{\eps^2})\) suffices for \(H_\ell(\mA)\) to have standard deviation at most \(\eps\tr(\mA)\).
\end{reptheorem}
We will prove \Cref{thm:coarse-psd-rate} momentarily, but first we show how different choices of base distribution achieve different values of \(C\).
In particular, we will prove that when \(\cD'\) is Rademacher then \(C = 2-\frac2d\), which implies a complexity of \(\ell = O(\frac{1}{\eps^2}(3-\frac2d)^k)\).
Similary, we will also how that when \(\cD'\) is uniformly distributed on the sphere, \(\ell = O(\frac{1}{\eps^2}(3-\frac6{d+2})^k)\) samples suffice.
These bounds on \(C\) let us finish proving \Cref{thm:real-psd}:
\begin{reptheorem}{thm:real-psd}
Fix \(d,k\in\bbN\), and let \(\mA\in\bbR^{d^k \times d^k}\) be a PSD matrix.
Let \(\vx = \vx_1 \otimes \cdots \otimes \vx_k\) where each \(\vx_i\) is drawn iid from a distribution \(\cD'\).
If \(\cD'\) is Gaussian, then \(\Var[\vx^\intercal\mA\vx] \leq 3^k (\tr(\mA))^2\).
If \(\cD'\) is Rademacher, then \(\Var[\vx^\intercal\mA\vx] \leq (3-\frac2d)^k (\tr(\mA))^2\).
If \(\cD'\) is uniformly distributed over vectors with \(\norm{\vx_i}_2=\sqrt{d}\), then \(\Var[\vx^\intercal\mA\vx] \leq (3-\frac6{d+2})^k (\tr(\mA))^2\).
\end{reptheorem}
Now we prove the result of \Cref{thm:coarse-psd-rate}:
\begin{proof}
We prove that \(\Var[\vx^\intercal\mA\vx] \leq \E[(\vx^\intercal\mA\vx)^2] \leq (1+C)^k (\tr(\mA))^2\) via induction over \(k\).
In the base case where \(k=1\), we have \(\vx=\vv\) and use our assumptions on \cD to say that
\[
	\E[(\vx^\intercal\mA\vx)^2]
	= \Var[\vv^\intercal\mA\vv] + (\E[\vv^\intercal\mA\vv])^2
	\leq (1+C)(\tr(\mA))^2.
\]
We now tackle the inductive case.
Let \(\vx_{:k-1} \defeq \otimes_{i=1}^{k-1}\vx_i\).
Then, we inductively assume that for any PSD matrix \(\mB \in\bbR^{d^{k-1} \times d^{k-1}}\), we know that \(\E[(\vx_{:k-1}^\intercal\mB\vx_{:k-1})^2]\leq (1+C)^{k-1}(\E[\vx_{:k-1}^\intercal\mB\vx_{:k-1}])^2\).
We will then extend the claim to hold for any \(\mA\in\bbR^{d^k \times d^k}\).
By \Cref{lem:kron-extract-vec}, we know that
\[
	\vx^\intercal\mA\vx
	=(\vx_{:k-1}\otimes\vx_k)^\intercal\mA(\vx_{:k-1}\otimes\vx_k)
	=\vx_k^\intercal(\vx_{:k-1}\otimes\mI)^\intercal\mA(\vx_{:k-1}\otimes\mI)\vx_k.
\]
So, by taking \(\mB = (\vx_{:k-1}\otimes\mI)^\intercal\mA(\vx_{:k-1}\otimes\mI) \in \bbR^{d \times d}\), we know that
\begin{align}
	\E_{\vx_k}[(\vx^\intercal\mA\vx)^2]
	= \E_{\vx_k}[(\vx_k^\intercal\mB\vx_k)^2]
	\leq (1+C) (\E_{\vx_k} [\vx_k^\intercal\mB\vx_k])^2
	= (1+C) (\E_{\vx_k} [\vx^\intercal\mA\vx])^2
	\label{eq:squared-expect-bound-rademacher}
\end{align}
Then, by taking Minkowski's Integral Inequality with \(F(\vx_{:k-1},\vx_k) = \vx^\intercal\mA\vx\) and \(r=2\), we know that
\[
	\textstyle{\E_{\vx_{:k-1}}[(\E_{\vx_k} [\vx^\intercal\mA\vx])^2] \leq \left(\E_{\vx_k}\left[ \sqrt{\E_{\vx_{:k-1}} (\vx^\intercal\mA\vx)^2} \right]\right)^2}.
\]
Then, letting \(\mC = (\mI \otimes \vx_k)^\intercal\mA(\mI\otimes\vx_k) \in\bbR^{d^{k-1} \times d^{k-1}}\), we find that
\begin{align*}
	\E[(\vx^\intercal\mA\vx)^2]
	&= \E_{\vx_{:k-1}}\left[ \E_{\vx_k} (\vx^\intercal\mA\vx)^2 \right] \\
	&\leq (1+C) \, \E_{\vx_{:k-1}}\left[ \left(\E_{\vx_k} \vx^\intercal\mA\vx\right)^2 \right] \tag{\Cref{eq:squared-expect-bound-rademacher}} \\
	&\leq (1+C) \left(\E_{\vx_k}\left[ \sqrt{\E_{\vx_{:k-1}} (\vx^\intercal\mA\vx)^2} \right]\right)^2 \tag{Minkowski} \\
	&= (1+C) \left(\E_{\vx_k}\left[ \sqrt{\E_{\vx_{:k-1}} (\vx_{:k-1}^\intercal\mC\vx_{:k-1})^2} \right]\right)^2 \tag{\(\vx^\intercal\mA\vx = \vx_{:k-1}^\intercal\mC\vx_{:k-1}\)} \\
	&\leq (1+C) \left(\E_{\vx_k}\left[ \sqrt{(1+C)^{k-1} (\E_{\vx_{:k-1}}[\vx_{:k-1}^\intercal\mC\vx_{:k-1}])^2} \right]\right)^2 \tag{Induction} \\
	&= (1+C) \left(\E_{\vx_k}\left[ \sqrt{(1+C)^{k-1} (\E_{\vx_{:k-1}}[\vx^\intercal\mA\vx])^2} \right]\right)^2 \tag{\(\vx^\intercal\mA\vx = \vx_{:k-1}^\intercal\mC\vx_{:k-1}\)} \\
	&= (1+C)^k \left(\E_{\vx_k} [\E_{\vx_{:k-1}}[\vx^\intercal\mA\vx]] \right)^2\\
	&= (1+C)^k (\E[\vx^\intercal\mA\vx])^2 \\
	&= (1+C)^k (\tr(\mA))^2,
\end{align*}
which completes the proof.
\end{proof}

All that remains to prove \Cref{thm:real-psd} is that \(C=2-\frac2d\) when \(\cD'\) is Rademacher and that \(C = 2-\frac{6}{d+2}\) when \(\cD'\) is uniformly distributed on the sphere.
We start with the following exact expression for the non-Kronecker variance of Hutchinson's Estimator with Rademacher vectors.
\begin{lemma}[\cite{girard1987algorithme,hutchinson1989stochastic,AT11}]
\label{lem:hutch-classical-variance-rademacher}
	Let \(\mB \in \bbR^{D \times D}\) and let \(\vv \in \bbR^D\) be a vector of iid Rademacher entries.
	Let \(\hat\mB = \frac12(\mB + \mB^\intercal)\).
	Then, \(\E[\vv^\intercal\mB\vv] = \tr(\mB)\) and \(\Var[\vv^\intercal\mB\vv] = 2(\norm{\hat\mB}_F^2 - \norm{\diag{\hat \mB}}_2^2)\), where \(\diag(\hat\mB)=\diag(\mB)\in\bbR^D\) is the diagonal of \(\hat\mB\).
\end{lemma}
We then show a bound on this variance when \(\mB\) is PSD:
\begin{corollary}
	\label{corol:rademacher-variance-bound}
	Let \(\mB \in \bbR^{D \times D}\) be a PSD matrix and let \(\vv \in \bbR^D\) be a vector of iid Rademacher entries.
	Then, \(\E[\vv^\intercal\mB\vv] = \tr(\mB)\) and \(\Var[\vv^\intercal\mB\vv] \leq (2-\frac2d)(\tr(\mB))^2\).
\end{corollary}
\begin{proof}
	Owing to \Cref{lem:hutch-classical-variance-rademacher}, it suffices to bound \(\norm{\mB}_F^2 - \norm{\diag{\mB}}_2^2 \leq (1-\frac1d)\tr(\mB)\).
	First, for any \(i,j\in[d]\) with \(i\neq j\), consider the submatrix \(\sbmat{[\mB]_{ii} & [\mB]_{ij} \\ [\mB]_{ji} & [\mB]_{jj}}\) of \mB.
	Since \mB is PSD, so is the submatrix.
	So, the determinant of the submatrix is PSD, and therefore \([\mB]_{ij}^2 \leq [\mB]_{ii}[\mB]_{jj}\).
	Separately, by the AMGM inequality, note that
	\[
		\sum_{i=1}^d \sum_{j \neq i} [\mB]_{ii} [\mB]_{jj}
		\leq \frac12 \sum_{i = 1}^d \sum_{j \neq i} ([\mB]_{ii}^2 + [\mB]_{jj}^2)
		= \sum_{i = 1}^d \sum_{j \neq i} [\mB]_{ii}^2
		= (d-1) \sum_{i=1}^d [\mB]_{ii}^2.
	\]
	Combining this determinant bound with this AMGM bound, we find that
	\begin{align*}
		\norm{\mB}_F^2 - \norm{\diag{\mB}}_2^2
		&= \sum_{i = 1}^d \sum_{j \neq i} [\mB]_{ij}^2 \\
		&\leq \sum_{i = 1}^d \sum_{j \neq i} [\mB]_{ii}[\mB]_{jj} \\
		&= \frac1d \sum_{i = 1}^d \sum_{j \neq i} [\mB]_{ii}[\mB]_{jj} + (1-\frac1d) \sum_{i = 1}^d \sum_{j \neq i} [\mB]_{ii}[\mB]_{jj} \\
		&\leq \frac1d(d-1) \sum_{i = 1}^d [\mB]_{ii}^2 + (1-\frac1d) \sum_{i = 1}^d \sum_{j \neq i} [\mB]_{ii}[\mB]_{jj} \\
		&= (1-\frac1d) \sum_{i = 1}^d\sum_{j=1}^d [\mB]_{ii}[\mB]_{jj} \\
		&= (1-\frac1d) (\tr(\mB))^2
	\end{align*}
\end{proof}

We can also show the rate when \(\cD'\) is randomly drawn from the sphere:
\begin{lemma}[\cite{girard1989fast,epperlyStochastic}]
\label{lem:hutch-classical-variance-unitvec}
	Let \(\mB \in \bbR^{D \times D}\) and let \(\vv \in \bbR^D\) be a uniformly random vector such that \(\norm\vv_2 = \sqrt D\).
	Let \(\hat\mB = \frac12(\mB + \mB^\intercal)\).
	Then, \(\E[\vv^\intercal\mB\vv] = \tr(\mB)\) and \(\Var[\vv^\intercal\mB\vv] = 2\frac{d}{d+2}(\norm{\hat\mB}_F^2 - \frac1d(\tr(\hat\mB))^2)\).
\end{lemma}
By bounding \(\norm{\mB}_F \leq \tr(\mB)\), we immediately get the following result:
\begin{corollary}
	\label{corol:unitvec-variance-bound}
	Let \(\mB \in \bbR^{D \times D}\) be a PSD matrix and let \(\vv \in \bbR^D\) be a uniformly random vector such that \(\norm\vv_2=\sqrt{D}\).
	Then, \(\E[\vv^\intercal\mB\vv] = \tr(\mB)\) and \(\Var[\vv^\intercal\mB\vv] \leq (2-\frac6{d+2})(\tr(\mB))^2\).
\end{corollary}
These results then suffices to get the results from \Cref{table:exp-rates} in the real-valued columns.
For small \(d\), this is an enormous speedup over using Gaussian vectors in the worst case, though we fail to find an exact expression for the variance of this estimator.

\subsection{A Matching Lower Bound}
\label{sec:real-psd-lower-bound}
We might hope that the painful rate of \(\ell \geq (3-\frac2d)^k\frac{1}{\eps^2}\) results from a loose analysis of the Kronecker-Hutchinson Estimator.
Or, perhaps, that some more clever choice of random vector distribution can avoid this painful exponential dependence.
We show that this is not the case, at least not among vectors with real iid mean-zero unit-variance components:
\begin{reptheorem}{thm:real-psd-lower-bound}
	Fix \(d,k\in\bbN\).
	Consider Hutchinson's Estimator run with independent vectors \(\vx = \vx_1 \otimes \cdots \otimes \vx_k\) where \(\vx_i\) are vectors of real iid zero-mean unit-variance entries.
	Then, there exists a PSD matrix \mA such that Hutchinson's Estimator needs \(\ell = \Omega(\frac{(3-\frac2d)^k}{\eps^2})\) samples in order to have standard deviation \(\leq \eps \tr(\mA)\).
\end{reptheorem}
\begin{proof}
We let \(\mA \in \bbR^{d^k \times d^k}\) be the all-ones matrix.
If we let \(\ve\in\bbR^d\) denote the all-ones vector, then we can write \(\mA = \otimes_{i=1}^k \ve\ve^\intercal\).
This is helpful because we can then use the Mixed-Product property (\Cref{lem:mixed-product}) to show that
\[
	\textstyle{
	\vx^\intercal\mA\vx
	= \otimes_{i=1}^k \vx_i^\intercal\ve\ve^\intercal\vx_i
	= \prod_{i=1}^k (\ve^\intercal\vx_i)^2
	},
\]
where we can replace the Kronecker product with a scalar product because the Kronecker product of scalars is just the scalar product.
Notably, we see that \(\vx^\intercal\mA\vx\) is a product of \(k\) iid terms.
Since we are interested in the variance of the estimator, we actually want to examine
\[
	\E[(\vx^\intercal\mA\vx)^2]
	= \prod_{i=1}^k \E[(\ve^\intercal\vx_i)^4]
	= \left(\E[(\ve^\intercal\vx_1)^4]\right)^k.
\]
So, letting \(\vy \defeq \vx_1\) in order to simplify notation, we turn our attention to lower bounding the expectation of \((\ve^\intercal\vy)^4 = (\sum_{i=1}^d y_i)^4\) where \(\vy\) is a vector of iid zero-mean unit-variance terms.
We expand this sum as
\[
	\textstyle \E[(\sum_{i=1}^d y_i)^4] = \sum_{i,j,m,n=1}^d \E[y_i y_j y_m y_n].
\]
This expectation is zero if any random variable appears exactly once, like how \(\E[y_1y_2^3] = \E[y_1]\E[y_2^3] = 0\) by independence.
So, we can reduce the sum above to terms that either look like \(\E[y_i^4]\) or like \(\E[y_i^2y_j^2]\).
Terms like \(\E[y_i^4]\) occur exactly \(d\) times.
Terms like \(\E[y_i^2y_j^2]\) occur exactly \(3d^2-3d\) times\footnote{This can happen 3 ways: if \(i = j \neq m = n\) or if \(i = m \neq j = n\) or if \(i = n \neq j = m\). Each such pattern occurs \(d^2-d\) times, so the overall pattern occurs \(3d^2-3d\) times.}.
We can then bound \(\E[y_i^4] \geq (\E[y_i^2])^2 = 1\) by Jensen's Inequality, and \(\E[y_i^2y_j^2] = \E[y_i^2]\E[y_j^2] = 1\) by independence.
So, we get that 
\[
	\E[(\vx^\intercal\mA\vx)^2]
	= \left(\E[(\ve^\intercal\vy)^4]\right)^k
	\geq \left(d + 3d^2 - 3d\right)^k
	= (3d^2 - 2d)^k.
\]
Which, noting that \(\tr(\mA)=d^k\), in turn gets a variance lower bound of
\[
	\Var[\vx^\intercal\mA\vx]
	= \E[(\vx^\intercal\mA\vx)^2] - (\tr(\mA))^2
	\geq (3d^2 - 2d)^k - d^{2k}.
\]
Then, since \(\Var[H_\ell(\mA)] = \frac1\ell \Var[\vx^\intercal\mA\vx]\), in order for the Kronecker-Hutchinson's Estimator to have standard deviation \(\eps\tr(\mA)\), we need to take
\[
    \ell
    \geq \frac{\Var[\vx^\intercal\mA\vx]}{\eps^2(\tr(\mA))^2}
    \geq \frac{(3d^2 - 2d)^k - d^{2k}}{\eps^2 d^{2k}}
    = \frac{(3-\frac2d)^k - 1}{\eps^2}
    = \Omega\left(\frac{(3-\frac2d)^k}{\eps^2}\right).
\]
\end{proof}
We note that this lower bound uses a rank-one matrix, which is interesting because we can exactly compute the trace of any rank-one matrix with a single Kronecker-matrix-vector product.
In particular, if \(\mA = \vu\vu^\intercal\) for some \(\vu\in\bbR^d\), then for any vector \vx not orthogonal to \vu, we have that
\[
    \frac{\norm{\mA\vx}_2^2}{\vx^\intercal\mA\vx}
    = \frac{\vx^\intercal\vu \cdot \vu^\intercal\vu \cdot \vu^\intercal\vx}{\vx^\intercal\vu \cdot \vu^\intercal\vx}
    = \vu^\intercal\vu
    = \tr(\mA).
\]
Since a Kronecker of Gaussian vectors is not orthogonal to any fixed vector \vu with probability 1, we can exactly compute this trace with probability 1.
That is to say, the hard instance used in this lower bound is truly only hard for the Kronecker-Hutchinson Estimator, and is not hard in general for the Kronecker-matrix-vector oracle model.

The lower bound is also structured in a second way -- the matrix \(\mA = \otimes_{i=1}^k \ve\ve^\intercal\) is not only rank-one but also has Kronecker structure.
In \Cref{sec:kron-recovery} we prove that we can exactly compute the trace of any matrix \mA with Kronecker structure by using \(kd+1\) Kronecker-matrix-vector products.
That is to say, the hard instance used in this lower bound is in a second sense only hard for the Kronecker-Hutchinson's Estimator, and its class of hard input matrices is not hard in general for the Kronecker-matrix-vector oracle model.
That said, we do not have a clear sense of what algorithms may be able to bridge this gap and admit much faster trace estimation.

\subsection{Estimating the Trace of a Random Rank-One Matrix}
\label{sec:real-rank-one-estimation}
To round off our discussion of PSD matrix trace estimation, we show that estimating the trace of a random rank-1 matrix is much more efficient than estimating the trace of the worst-case matrix.
When the base distribution \(\cD'\) is Gaussian, we achieve a much better exponential dependence on \(k\) compared to \Cref{thm:real-psd}.
When \(\cD'\) is either Rademacher or uniformly drawn from the sphere, we avoid any dependence on \(k\) at all.

This proof tells several stories.
First, we can consider the context of the lower bound \Cref{thm:real-psd-lower-bound}, which uses the all-ones matrix as the worst-case input \mA.
The all-ones matrix is both rank-1 and has Kronecker structure.
We might then wonder if \mA having low rank but not having Kronecker structure forces us to have a sample complexity that is exponential in \(k\).
We answer this in the negative, so long as we avoid using Gaussian vectors for the base distribution \(\cD'\).
This suggests that the Kronecker-Hutchinson Estimator only has sample complexity that is exponential in \(k\) when \(\mA\) has Kronecker structure.
However, we have no rigorous result that formalizes this intuition.
Second, we show that the upper bound on the variance of the Gaussian estimator from \Cref{thm:real-partial-trace} (i.e. using \(\mA\) intead of \(\bar\mA\)) works to give an upper bound on the sample complexity of using Gaussian vectors here, however that upper bound is exponentially loose.
In particular this show that while \(\mA\) is much easier to work with compared to \(\bar\mA\), it risks being an extremely loose bound on the actual variance on the Kronecker-Hutchinson Estimator.

\begin{reptheorem}{thm:real-rank-one-estimation}
Fix \(d,n\in\bbN\).
Let \(\mA=\vg\vg^\intercal\) where \(\vg\in\bbR^{d^k}\) is a vector of iid standard normal Gaussian.
Then if \(\cD'\) is either Rademacher or uniformly distributed on the sphere, on average across the choice \(\vg\), it suffices to take \(\ell=\frac{2}{\eps^2}\) for \(H_\ell(\mA)\) to have standard deviation at most \(\eps\tr(\mA)\).
That is, \(\E_{\vg}[\Var[H_\ell(\mA) \mid \vg]] \leq \eps^2 (\E_{\vg}[\tr(\mA)])^2\).
However, if \(\cD'\) is Gaussian, then \(\ell = \Omega(\frac1{\eps^2}(1+\frac2d)^k)\) is needed to achieve the same guarantee.
\end{reptheorem}

\begin{proof}
The proof works by flipping the roles of \vg and \vx via the tower rule:
\begin{align*}
    \E_{\vg}\left[\Var\left[H_\ell(\mA) \mid \vg\right]\right]
    &= \E_{\vg}\left[\Var_{\vx^{(j)}}\left[\frac1\ell \sum_{j=1}^\ell {\vx^{(j)}}^\intercal \mA \vx^{(j)} \mid \vg\right]\right] \\
    &= \frac1\ell \E_{\vg}\left[\Var_{\vx}\left[ \vx^\intercal \mA \vx \mid \vg\right]\right] \tag{Linearity of Variance}\\
    &= \frac1\ell \E_{\vg}\left[\E_{\vx}\left[\left( \vx^\intercal \mA \vx\right)^2 \mid \vg\right] - \left(\E_{\vx}\left[ \vx^\intercal \mA \vx \mid \vg\right]\right)^2 \right] \\
    &= \frac1\ell \E_{\vg}\left[\E_{\vx}\left[\left( \vx^\intercal\vg\right)^4 \mid \vg\right] - \left(\tr(\vg\vg^\intercal)\right)^2 \right] \\
    &= \frac1\ell \left(\E_{\vx}\left[\E_{\vg}\left[\left( \vx^\intercal\vg\right)^4 \mid \vx\right]\right] - \E_{\vg}\left[\norm{\vg}_2^4\right]\right) \tag{Tower Rule}\\
    &= \frac1\ell \left(\E_{\vx}\left[3\norm{\vx}_2^4\right] - d^{2k}\left(1+\frac{2}{d^k}\right)\right) \tag{\(\vx^\intercal\vg\sim\cN(0,\norm\vx_2^2), \norm{\vg}_2^2 \sim \chi_{d^k}^2\)}\\
    &= \frac1\ell \left( 3\left(\E_{\vx_1}\left[\norm{\vx_1}_2^4\right]\right)^k - d^{2k}\left(1+\frac{2}{d^k}\right)\right)
\end{align*}
We want this expression to be at most \(\eps^2 (\E[\tr(\mA)])^2\).
Since \(\tr(\mA)=\vg^\intercal\vg\) is a chi-squared random variable with parameter \(d^k\), we know that \(\E[\tr(\mA)] = d^k\).
So, rearranging the requirement that \(\E_{\vg}[\Var[H_\ell(\mA) \mid \vg]] \leq \eps^2 (\E[\tr(\mA)])^2\), we get that
\[
	\ell \geq \frac{1}{\eps^2}\left(3\left(\frac{\mu}{d^2}\right)^k - 1 - \frac2{d^k}\right)
\]
If \(\vx_1\) is a Rademacher vector or is uniformly drawn from the sphere, then we know that \(\norm{\vx_1}_2^2 = d\) deterministically, so that \(\mu = d^2\).
So, we get a complexity of
\[
	\ell \geq \frac{1}{\eps^2}\left(2 - \frac2{d^k}\right),
\]
so we see that \(\ell = \frac{2}{\eps^2}\) suffices.

However, if instead \vx is the Kronecker product of Gaussian vectors, then we know that \(\norm{\vx_1}_2^2\) is a $\chi^2$-distributed random variable with parameter \(d\).
So, \(\mu\) is the expected square of the chi-squared distribution, which is \(\mu \defeq \E[(\norm{\vx}_2^2)^2] = d^2(1+\frac{2}{d})\).
We then find that
\[
	\ell \geq \frac{1}{\eps^2}\left(3\left(1+\frac2d\right)^k - 1 - \frac2{d^k}\right)
	= \Omega(\frac{(1+\frac2d)^k}{\eps^2}),
\]
completing the proof.
\end{proof}

With this result in mind, where we get a rate of convergence of \((1+\frac2d)^k\) with a tight analysis, we can see how loose \Cref{thm:real-partial-trace} is with respect to \Cref{thm:real-partial-trace-transpose}.
That is, how much information do we loose when we bound \(\norm{\bar\mA}_F^2 \leq \norm{\mA}_F^2\).
For this random rank-1 case, this bound looses us an exponential factor.
From \Cref{thm:real-rank-one-estimation}, we see that \(\E_{\vg}[\Var[\vx^\intercal\mA\vx \mid \vg]] = d^{2k}(3(1+\frac2d)^k - 1 - \frac2{d^k})\).
In contrast, the upper bound from \Cref{thm:real-partial-trace} incurs a spurious factor of \(2^k\):
\begin{reptheorem}{thm:random-rank-one-real-var-overestimate}
Fix \(d,k\in\bbN\).
Let \(\mA=\vg\vg^\intercal\) where \(\vg\in\bbR^{d^k}\) is a vector of iid standard normal Gaussian.
Let \(\vx\) be the kronecker product of \(k\) standard normal Gaussian vectors.
Then,
\[
	\E_{\vg}\left[\sum_{\cS\subseteq[k]} 2^{k - \abs\cS} \norm{\tr_\cS(\mA)}_F^2\right]
	\geq 2^k d^{2k} (1+\tsfrac2d)^k.
\]
This inequality is tight up to a multiplicative factor of 3.
\end{reptheorem}
\begin{proof}
We start by analyzing the expected squared Frobenius norm of the partial trace of \mA with respect to a set of subsystems \(\cS \subseteq [k]\).
Let \(i \defeq \abs\cS\), and note that without loss of generality we can assume that \(\cS = \{1,\ldots,i\}\) when trying to compute \(\E[\norm{\tr_\cS(\mA)}_F^2]\).
We then decompose the vector \(\vg = \sbmat{\vg_1 \\ \vdots \\ \vg_{d^i}}\) where \(\vg_i \in \bbR^{d^{k-i}}\) are standard normal Gaussian vectors.
Then, note that we can decompose \(\vg = (\ve_1 \otimes \vg_1) + \ldots + (\ve_{d^i} \otimes \vg_{d^i})\).
This lets us write out
\[
	\vg^\intercal(\ve_j \otimes \mI^{\otimes k-i})
	= \sum_{\hat j = 1}^{d^i} (\ve_{\hat j} \otimes \vg_{\hat j})^\intercal (\ve_j \otimes \mI^{\otimes k-i})
	= \sum_{\hat j = 1}^{d^i} (\ve_{\hat j}^\intercal\ve_j \otimes \vg_{\hat j}^\intercal)
	= \vg_j^\intercal.
\]
Then, if we define \(\mG = \bmat{\vg_1 \, \cdots ~ \vg_{d^i}} \in \bbR^{d^{k-i} \times d^i}\), then we see that
\[
	\tr_{\cS}(\vg\vg^\intercal)
	= \sum_{j=1}^{d^i} (\ve_j \otimes \mI^{\otimes k-i})^\intercal\vg\vg^\intercal(\ve_j \otimes \mI^{\otimes k-i})
	= \sum_{j=1}^{d^i} \vg_j\vg_j^\intercal
	= \mG\mG^\intercal.
\]
Noticing that \mG is just a matrix full of iid standard normal Gaussians, we can then bound \(\norm{\tr_{:i}(\mA)}_F^2 = \norm{\mG\mG^\intercal}_F^2 = \norm{\mG}_4^4\) in expectation by just using known properties of the norms of Gaussians, where \(\norm{\cdot}_4\) denotes the Schatten 4-norm.
For instance, Lemma A.1 from \cite{tropp2023randomized} implies that a standard normal Gaussian matrix \(\mG\in\bbR^{N \times P}\) has \(\E[\norm{\mG}_4^4] = NP(N+P+1)\).
So,
\[
	\E[\norm{\tr_{\cS}(\mA)}_F^2]
	= \E[\norm{\mG}_4^4]
	= d^k (d^{k-\abs\cS} + d^{\abs\cS} + 1).
\]
This in turn implies that
\begin{align*}
	\E_{\vg}\left[\sum_{\cS\subseteq[k]} 2^{k - \abs\cS} \norm{\tr_\cS(\mA)}_F^2\right]
	&= \sum_{\cS\subseteq[k]} 2^{k - \abs\cS} d^k(d^{k-\abs\cS} + d^{\abs\cS} + 1) \\
	&= \sum_{i=0}^k \binom{k}{i} 2^{k - i} d^k(d^{k-i} + d^{i} + 1).
\end{align*}
The last line above recognizes that we only depended on \(\abs\cS\), so we could just sum over all values of \(i=\abs\cS\), scaled by the \(\binom{k}{i}\) times that \(\abs\cS\) appears.
Rearranging terms, and recalling that \(\sum_{i=0}^k \binom{k}{i} c^i = (1+c)^k\), we get
\begin{align*}
	\E_{\vg}\left[\sum_{\cS\subseteq[k]} 2^{k - \abs\cS} \norm{\tr_\cS(\mA)}_F^2\right]
	&= 2^kd^k \sum_{i=0}^k \binom{k}{i} 2^{-i} (d^{k-i} + d^{i} + 1) \\
	&= 2^kd^k \sum_{i=0}^k \binom{k}{i} \left(d^k (\tsfrac{1}{2d})^{i} + (\tsfrac{d}{2})^{i} + (\tsfrac{1}{2})^i\right) \\
	&= 2^kd^k \left( d^k (1+\tsfrac1{2d})^k + (1+\tsfrac{d}{2})^k + (\tsfrac32)^k \right) \\
	&= 2^kd^{2k} \left( (1+\tsfrac1{2d})^k + (\tsfrac{1}{2}+\tsfrac1d)^k + (\tsfrac3{2d})^k \right) \\
	&\geq 2^k d^{2k} (1+\tsfrac1{2d})^k
\end{align*}
To see that this bound is tight up to a multiplicative factor of 3, we can take the only inequality in our analysis and instead upper bound
\[
	\left( (1+\tsfrac1{2d})^k + (\tsfrac{1}{2}+\tsfrac1d)^k + (\tsfrac3{2d})^k \right)
	\leq 3(1+\tsfrac1{2d})^k.
\]
\end{proof}

\section{The Complex-Valued Kronecker-Matrix-Vector Oracle}
\label{sec:complex-matvec}

Until now, we have only considered using a \emph{real-valued} Kronecker-matrix-vector oracle, where we are only allowed to compute \(\mA\vx\) for vectors \(\vx = \vx_1 \otimes \cdots \otimes \vx_k\) where \(\vx_i \in \bbR^{d}\).
In this section, we consider how our results change when we instead allow \(\vx_i \in \bbC^d\).
If we did not have any Kronecker constraint on our oracle, then this change of oracles would be banal.
If some algorithm computes \(q\) complex-valued non-Kronecker matrix-vector products with \mA, then we can find another algorithm which produces the exact same result, but instead uses \(2q\) real-valued non-Kronecker matrix-vector products with \mA.
Each complex-values query gets decomposed as \(\mA\vx = \mA\mathfrak{Re}[\vx] + i\mA\mathfrak{Im}[\vx]\), which can be evaluated with just two real-valued oracle queries.

However, this picture changes with a Kronecker constraint.
If \(\vx = \vx_1 \otimes \cdots \otimes \vx_k\) for \(\vx_i \in \bbC^d\), then the real part of the vector \(\mathfrak{Re}[\vx]\) does not necessarily have Kronecker structure as well, and therefore we cannot reproduce the strategy above.
We can see this even with very small vectors:
\[
	\bmat{1 \\ i} \otimes \bmat{1 \\ i} = \bmat{1 \\ i \\ i \\ -1}
	\hspace{0.7cm} \text{which gives} \hspace{0.7cm}
	\mathfrak{Re}\left(\bmat{1 \\ i \\ i \\ -1}\right) = \bmat{1 \\ 0 \\ 0 \\ -1}
	\hspace{0.5cm} \text{and} \hspace{0.5cm}
	\mathfrak{Im}\left(\bmat{1 \\ i \\ i \\ -1}\right) = \bmat{0 \\ 1 \\ 1 \\ 0}.
\]
The real and imaginary component vectors do not have Kronecker structure.
We can still simulate a complex-valued oracle with a real-valued oracle, but this requires far more real-valued oracle queries in the worst case.
Specifically, we expand the real and imaginary parts of each vectors \(\vx_i\).
To keep the notation simple, let \(\vx_i = \vr_i + i\vm_i\) decompose the real and imaginary parts of \(\vx_i \in \bbC^d\), so that \(\vr_i,\vm_i\in\bbR^d\).
Then,
\begin{align*}
	\vx_1 \otimes \vx_2 \otimes \cdots \otimes \vx_k
	&= (\vr_1 + i\vm_1) \otimes \vx_2 \otimes \cdots \vx_k \\
	&= (\vr_1 \otimes \vx_2 \otimes \cdots \otimes \vx_k) + i(\vm_1 \otimes \vx_2 \otimes \cdots \otimes \vx_k) \\
	&= (\vr_1 \otimes \vr_2 \otimes \cdots \otimes \vx_k) + i(\vr_1 \otimes \vm_2 \otimes \cdots \otimes \vx_k) \\
	&\hspace{1cm} + i(\vm_1 \otimes \vr_2 \otimes \cdots \otimes \vx_k) - (\vm_1 \otimes \vm_2 \otimes \cdots \otimes \vx_k) \\
	&= \ldots
\end{align*}
where we continue to expand this out into a sum of \(2^k\) real-valued Kronecker-structured vectors.
So, if some algorithm solves a problem using \(q\) complex-valued Kronecker-structure matrix-vector products with \mA, then we can only immediately guarantee the existence of an algorithm that computes \(2^k q\) real-valued Kronecker-structured products with \mA that also solves the problem.

Keeping this exponential gap between the real-valued and complex-valued oracles in mind, we now turn our attention back to the Kronecker-Hutchinson Estimator.
In the non-Kronecker case, it is well known that the Hutchinson Estimator performs well with complex queries:
\begin{lemma}[\cite{martinsson2020randomized,epperlyStochastic}]
\label{lem:hutch-complex-variance}
	Let \(\mA \in \bbR^{D \times D}\) and let \(\vg \in \bbC^D\) be a vector of either iid complex Rademacher or complex standard normal Gaussian entries.
	That is, we have \(\vg = \frac1{\sqrt2}\vr + \frac{i}{\sqrt2}\vm\) where \vr and \vm are iid real-valued vectors of either Rademacher or standard normal Gaussian entries.
	Then, \(\E[\vg^\herm\mA\vg] = \tr(\mA)\) and \(\Var[\vg^\herm\mA\vg] \leq \norm{\frac{\mA+\mA^\intercal}{2}}_F^2 \leq \norm{\mA}_F^2\), where the first inequality is tight if \vg is complex Gaussian, and the second is tight if \mA is symmetric.
\end{lemma}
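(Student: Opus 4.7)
The plan is to reduce this complex-valued claim to the real-valued Hutchinson analysis (\Cref{lem:hutch-classical-variance}) by explicitly decomposing $\vg$ into its real and imaginary parts. Write $\vg = \tfrac{1}{\sqrt{2}}(\vr + i\vm)$, and let $\mS \defeq \tfrac{1}{2}(\mA+\mA^\intercal)$ and $\mK \defeq \tfrac{1}{2}(\mA-\mA^\intercal)$ denote the symmetric and antisymmetric parts of $\mA$, which are orthogonal in Frobenius inner product so that $\norm{\mA}_F^2 = \norm{\mS}_F^2 + \norm{\mK}_F^2$.

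First, I would establish unbiasedness by verifying $\E[\vg\vg^\herm] = \mI$: the diagonal entries equal $\tfrac{1}{2}(\E[r_j^2] + \E[m_j^2]) = 1$, while off-diagonal entries $\E[\vg_j\bar\vg_k]$ vanish because the cross-correlations between coordinates of $\vr$ and $\vm$ are all zero by independence (this handles both the Rademacher and Gaussian cases uniformly, since only second moments enter). Then $\E[\vg^\herm\mA\vg] = \tr(\mA\,\E[\vg\vg^\herm]) = \tr(\mA)$.

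For the variance, expanding $\vg^\herm\mA\vg$ and using $\vr^\intercal\mK\vr = \vm^\intercal\mK\vm = 0$ (antisymmetry of $\mK$) together with $\vm^\intercal\mA\vr = \vr^\intercal\mA^\intercal\vm$ yields the clean decomposition
\[
\vg^\herm\mA\vg \;=\; \tfrac{1}{2}\bigl(\vr^\intercal\mS\vr + \vm^\intercal\mS\vm\bigr) \;+\; i\,\vr^\intercal\mK\vm.
\]
The real part is a scaled sum of two \emph{independent} classical Hutchinson samples for the symmetric matrix $\mS$, so \Cref{lem:hutch-classical-variance} gives $\Var[\mathfrak{Re}(\vg^\herm\mA\vg)] = \tfrac{1}{2}\Var[\vr^\intercal\mS\vr] \leq \norm{\mS}_F^2$, with equality in the Gaussian case. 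The imaginary part is a bilinear form in two independent isotropic vectors, and a direct second-moment computation gives $\E[(\vr^\intercal\mK\vm)^2] = \tr(\mK\,\E[\vm\vm^\intercal]\,\mK^\intercal\,\E[\vr\vr^\intercal]) = \norm{\mK}_F^2$; this uses only second moments, so it is exact for both Rademacher and Gaussian inputs. Combining via $\Var[X + iY] = \Var[X] + \Var[Y]$ for real $X, Y$ and invoking the Pythagorean identity above gives the stated chain of inequalities, collapsing to an equality precisely in the symmetric case $\mK = 0$.

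The main subtlety is recognizing which ``half'' of $\mA$ drives which component of the quadratic form --- $\mS$ feeds the real part while $\mK$ feeds the imaginary part --- and that their variance contributions combine additively by the Frobenius orthogonality of symmetric and antisymmetric matrices. Once this decomposition is in place, each piece reduces either to \Cref{lem:hutch-classical-variance} or to a single-line second-moment calculation, so no deeper technical obstacle remains.
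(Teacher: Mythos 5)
The paper states \Cref{lem:hutch-complex-variance} without proof, so I can only evaluate your argument on its own terms. Your decomposition
\[
\vg^\herm\mA\vg \;=\; \tfrac12\bigl(\vr^\intercal\mS\vr + \vm^\intercal\mS\vm\bigr) + i\,\vr^\intercal\mK\vm
\]
is correct and is exactly the right starting point, and your per-component variance computations are also correct. The gap is in the final assembly: by applying \(\Var[X+iY] = \Var[X] + \Var[Y]\) you arrive at
\(\Var[\vg^\herm\mA\vg] \leq \norm{\mS}_F^2 + \norm{\mK}_F^2 = \norm{\mA}_F^2\),
but this does \emph{not} pass through the intermediate bound \(\norm{\mS}_F^2\) that the lemma asserts. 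In fact, under the interpretation of \(\Var\) you are using, the claimed first inequality \(\Var[\vg^\herm\mA\vg] \leq \norm{\tfrac{\mA+\mA^\intercal}{2}}_F^2\) is simply \emph{false} for non-symmetric \(\mA\) with complex Gaussian \(\vg\): your own calculation shows the exact value would be \(\norm{\mS}_F^2 + \norm{\mK}_F^2 > \norm{\mS}_F^2\). This is irreconcilable with the lemma's remark that the first inequality is tight for complex Gaussians (with no symmetry assumption), so your reading of what is being bounded cannot be the intended one. You assert at the end that your argument ``gives the stated chain of inequalities,'' but it gives the two endpoints without the middle term, and it attributes tightness to the wrong condition (you say the chain collapses to equality when \(\mK=0\), whereas the lemma says the \emph{first} inequality is already tight for Gaussians regardless of symmetry).

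The fix you are missing: the quantity being bounded is the variance of the actual (real-valued) trace estimate, \(\mathfrak{Re}[\vg^\herm\mA\vg] = \vg^\herm\hat\mA\vg\) where \(\hat\mA = \tfrac{\mA+\mA^\intercal}{2}\). Your decomposition already isolates this as \(\tfrac12(\vr^\intercal\mS\vr + \vm^\intercal\mS\vm)\), and your computation of its variance is exactly right: it equals \(\tfrac12\Var[\vr^\intercal\mS\vr] \leq \norm{\mS}_F^2\), with equality in the Gaussian case (\Cref{lem:hutch-classical-variance}); and \(\norm{\mS}_F^2 \leq \norm{\mA}_F^2\) with equality iff \(\mA\) is symmetric (Frobenius orthogonality of the symmetric and antisymmetric parts). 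The imaginary part \(\vr^\intercal\mK\vm\) contributes nothing to this quantity, so its variance \(\norm{\mK}_F^2\) should not be added on. Once you drop that term, your argument becomes a correct and complete proof of the lemma.
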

Note that the difference between \Cref{lem:hutch-classical-variance} and \Cref{lem:hutch-complex-variance} is just a factor of 2 in the variance term.
However, this factor of 2 has been extremely influential in all of our proofs so far.
Improving this constant to be 1 and following the exact same proofs give the following results:
\begin{reptheorem}{thm:complex-upper-bounds}
	Fix \(d,k\in\bbN\), and let \(\mA\in\bbR^{d^k \times d^k}\).
	Let \(\vx = \vx_1 \otimes \cdots \otimes \vx_k\), where each \(\vx_i\) is either a complex Rademacher vector or a complex Gaussian vector.
	That is, either we have \(\vx_i = \frac1{\sqrt2}\vr_i + \frac{i}{\sqrt2}\vm_i\) where \(\vr_i\) and \(\vm_i\) are \(\cN(\vec0,\mI)\) vectors, or each entry of \(\vx_i\) is drawn uniformly iid from \(\{\pm1,\pm i\}\).
	Let \(\bar\mA \defeq \frac1{2^k} \sum_{\cV \subseteq [k]} \mA^{\ptran\cV}\) be the average of all partial transposes of \mA.
	Then,
	\(\Var[\vx^\herm\mA\vx] \leq \sum_{\cS\subset[k]} \norm{\tr_{\cS}(\bar\mA)}_F^2 \leq \sum_{\cS\subset[k]} \norm{\tr_{\cS}(\mA)}_F^2\).
	If \(\vx_i\) are complex Gaussians, then the first expression is an equality.
	If we further know that \(\mA = \bar\mA\), then both expression are equalities.
	Further, if \(\vu = \vu_1 \otimes \cdots \otimes \vu_k\) where \(\vu_i\in\bbC^d\) are uniformly random vectors with \(\norm{\vu_i}_2=\sqrt d\), then \(\Var[\vu^\herm\mA\vu] = \frac{\Var[\vx^\herm\mA\vx]-((1+\frac1d)^k-1)(\tr(\mA))^2}{(1+\frac1d)^k} \leq \frac{1}{(1+\frac1d)^k} \Var[\vx^\herm\mA\vx]\).
\end{reptheorem}
\begin{proof}
	Retrace the proofs of \Cref{lem:frob-inductive-step,lem:expected-pmrdm-frob} as well as \Cref{thm:real-partial-trace-transpose,thm:real-psd,corol:random-unit-vec-variance}.
	Anywhere that a 2 appears not in an exponent, replace it with a 1.
\end{proof}
We can directly see from \Cref{lem:hutch-complex-variance} that we can take \(C=1\) in \Cref{thm:coarse-psd-rate} when \(\cD'\) is complex Gaussian.
Further, using the fact that the variance of a non-Kronecker Hutchinson sample using a complex Rademacher is \(\Var[\vx_i^\herm\mB\vx_i] = \norm{\mB}_F^2 - \norm{\text{diag}(\mB)}_2^2\), we can use the proof technique in \Cref{corol:rademacher-variance-bound} to take \(C = 1-\frac1d\) in \Cref{thm:coarse-psd-rate}.
Similarly, using that complex unit vectors have \(\Var[\vx_i^\herm\mB\vu_i] = \frac{d}{d+1}(\norm{\mB}_F^2 - (\tr(\mB))^2)\), we get \(C = 1-\frac{2}{d+1}\) in \Cref{thm:coarse-psd-rate}.
We conclude the following rates:
\begin{lemma}
If \(\cD'\) is complex Gaussian, then \(\Var[\vx^\intercal\mA\vx] \leq 2^k (\tr(\mA))^2\).
If \(\cD'\) is complex Rademacher, then \(\Var[\vx^\intercal\mA\vx] \leq (2-\frac1d)^k (\tr(\mA))^2\).
If \(\cD'\) is uniformly from the complex sphere, then \(\Var[\vx^\intercal\mA\vx] \leq (2-\frac{2}{d+1})^k (\tr(\mA))^2\).
\end{lemma}
We can similarly show that this dependence of \((2-\frac1d)^k\) is tight amongst isotropic query vectors with iid entries:
\begin{reptheorem}{thm:complex-psd-lower-bound}
	Fix \(d,k\in\bbN\).
	Consider Hutchinson's Estimator run with vectors \(\vx = \vx_1 \otimes \cdots \otimes \vx_k\) where \(\vx_i\) are vectors of iid complex zero-mean unit-variance entries.
	Then, there exists a PSD matrix \mA such that Hutchinson's Estimator needs \(\ell = \Omega(\frac{(2-\tsfrac1d)^k}{\eps^2})\) samples in order to have standard deviation \(\leq \eps \tr(\mA)\).
\end{reptheorem}
\begin{proof}
	Like in the proof of \Cref{thm:real-psd-lower-bound}, we take \(\mA = \otimes_{i=1}^k \ve\ve^\intercal\) where \(\ve\in\bbR^d\) is the all-ones vector.
	In fact, we follow the proof of \Cref{thm:real-psd-lower-bound} rather closely, but we recreate much of it to handle the complex conjugation correctly.
	First, we expand the product
	\[
		\E[(\vx^\herm\mA\vx)^2]
		= \prod_{i=1}^k \E[(\vx_i^\herm\ve\ve^\herm\vx_i)^2]
		= \left(\E[(\ve^\herm\vx_1)^2\overline{(\ve^\herm\vx_1)^2}]\right)^k.
	\]
	We let \(\vy \defeq \vx_1\) to simplify the notation.
	Then, we expand this inner expectation:
	\[
		\textstyle
		\E[(\ve^\herm\vy)^2\overline{(\ve^\herm\vy)^2}]
		= \E[(\sum_{i,j=1}^d y_iy_j)(\sum_{m,n=1}^d \overline{y_my_n})]
		= \sum_{i,j,m,n=1}^d \E[y_iy_j\overline{y_my_n}].
	\]
	This expectation is zero if any random variable appears exactly once, like how \(\E[y_1y_2\overline{y_2^2}] = \E[y_1]\E[y_2\overline{y_2^2}] = 0\) by independence.
	So, we only need to consider terms in this sum where we do not have any one term appear alone.
	This can happen in three different ways.

	First, if \(i=j=m=n\), then by Jensen's inequality, we get \(\E[y_iy_j\overline{y_my_n}] = \E[|y_i|^4] \geq (\E[|y_i|^2])^2 = 1\).
	This case occurs \(d\) times in the sum.
	Second, if \(i=j\neq m=n\), then by the iid distribution of the entry of \vy, we get \(\E[y_iy_j\overline{y_my_n}] = \E[y_i^2]\overline{\E[y_m^2]} = |\E[y_i^2]|^2 \geq 0\).
	Lastly, if \(i=m \neq j=n\) or \(i=n \neq j=m\), then \(\E[y_iy_j\overline{y_my_n}] = \E[|y_i|^2]\E[|y_j|^2] = 1\).
	This case occurs \(2(d^2-d)\) times in the sum.
	So, we get that
	\[
		\E[(\vx^\herm\mA\vx)^2] \geq \left( 2d^2 - 2d + d \right)^k = (2d^2 - d)^k.
	\]
	Which, noting that \(\tr(\mA)=d^k\), in turn gets a variance lower bound of
	\[
		\Var[\vx^\herm\mA\vx]
		= \E[(\vx^\herm\mA\vx)^2] - (\tr(\mA))^2
		\geq (2d^2 - d)^k - d^{2k}.
	\]
	Then, since \(\Var[H_\ell(\mA)] = \frac1\ell \Var[\vx^\herm\mA\vx]\), in order for the complex Kronecker-Hutchinson's Estimator to have standard deviation \(\eps\tr(\mA)\), we need to take
	\[
	    \ell
	    \geq \frac{\Var[\vx^\herm\mA\vx]}{\eps^2(\tr(\mA))^2}
	    \geq \frac{(2d^2 - d)^k - d^{2k}}{\eps^2 d^{2k}}
	    = \frac{(2-\frac1d)^k - 1}{\eps^2}
	    = \Omega\left(\frac{(2-\tsfrac1d)^k}{\eps^2}\right).
	\]
\end{proof}
Next, we analyze the exact variance of the Kronecker-Hutchinson estimator on random rank-1 matrices:
\begin{reptheorem}{thm:complex-rank-one-estimation}
Fix \(d,n\in\bbN\).
	Let \(\mA=\vg\vg^\intercal\) where \(\vg\in\bbR^{d^k}\) is a vector of iid standard normal Gaussian.
	Then if \(\cD'\) is either complex Rademacher or uniformly distributed on the complex sphere, on average across the choice \(\vg\), it suffices to take \(\ell=\frac{2}{\eps^2}\) for \(H_\ell(\mA)\) to have standard deviation at most \(\eps\tr(\mA)\).
	That is, \(\E_{\vg}[\Var[H_\ell(\mA) \mid \vg]] \leq \eps^2 (\E_{\vg}[\tr(\mA)])^2\).
	However, if \(\cD'\) is complex Gaussian, then \(\ell = \Omega(\frac1{\eps^2}(1+\frac1d)^k)\) is needed to achieve the same guarantee.
\end{reptheorem}
\begin{proof}
Follow the proof of \Cref{thm:real-rank-one-estimation}, but take the value of \(\mu=d\) for complex Rademacher and uniform sample from the complex sphere, and take \(\mu = d^2(1+\frac1d)\) for complex Gaussian vectors.
\end{proof}

Lastly, we note that the \(\norm{\bar\mA}_F^2 \leq \norm{\mA}_F^2\) bound produces a much tighter bound on the sample complexity of the Kronecker-Hutchinson's Estimator when working with random rank-one matrices:
\begin{reptheorem}{thm:random-rank-one-complex-var-overestimate}
Fix \(d,k\in\bbN\).
Let \(\mA=\vg\vg^\intercal\) where \(\vg\in\bbR^{d^k}\) is a vector of iid standard normal Gaussian.
Let \(\vx\) be the kronecker product of \(k\) complex standard normal Gaussian vectors.
Then,
\[
	\E_{\vg}\left[\sum_{\cS\subseteq[k]} \norm{\tr_\cS(\mA)}_F^2\right]
	\leq 3 d^{2k} (1+\tsfrac1d)^k.
\]
This inequality is tight up to a multiplicative factor of 3.
\end{reptheorem}
\begin{proof}
Recall from the proof of \Cref{thm:random-rank-one-real-var-overestimate} that the expected squared Frobenius norm of the partial trace of \mA is
\[
	\E[\norm{\tr_{\cS}(\mA)}_F^2]
	= \E[\norm{\mG}_4^4]
	= d^k (d^{k-\abs\cS} + d^{\abs\cS} + 1).
\]
Then, we can directly bound
\begin{align*}
	\sum_{\cS\subseteq[k]} \E[\norm{\tr_\cS(\mA)}_F^2]
	&= \sum_{\cS\subseteq[k]} d^k(d^{k-\abs\cS} + d^{\abs\cS} + 1) \\
	&= \sum_{i=0}^k \binom{k}{i} d^k(d^{k-i} + d^{i} + 1).
\end{align*}
Where the last line recognizes that we only depended on \(\abs\cS\), so we could just sum over all values of \(i=\abs\cS\), scaled by the \(\binom{k}{i}\) times that \(\abs\cS\) appears.
Rearranging terms, and recalling that \(\sum_{i=0}^k \binom{k}{i} c^i = (1+c)^k\), we get
\begin{align*}
	\sum_{\cS\subseteq[k]} \E[\norm{\tr_\cS(\mA)}_F^2]
	&= d^k \sum_{i=0}^k \binom{k}{i} (d^{k-i} + d^{i} + 1) \\
	&= d^k \sum_{i=0}^k \binom{k}{i} \left(d^k (\tsfrac{1}{d})^{i} + d^{i} + 1^i\right) \\
	&= d^k \left( d^k (1+\tsfrac1d)^k + (1+d)^k + 2^k \right) \\
	&= d^{2k} \left( 2(1+\tsfrac1d)^k + (\tsfrac2d)^k \right) \\
	&\leq 3 d^{2k} (1+\tsfrac1d)^k.
\end{align*}
To see that this bound is tight up to a multiplicative factor of 3, we can take the only inequality in our analysis and instead lower bound
\[
	\left( 2(1+\tsfrac1d)^k + (\tsfrac2d)^k \right)
	\geq (1+\tsfrac1{d})^k.
\]
\end{proof}
The proof technique of \Cref{thm:real-rank-one-estimation} shows that the true variance indeed scales as \(d^{2k}(1+\frac2d)^k\) in the real Gaussian case, and \(d^{2k}(1+\frac1d)^k\) in the complex Gaussian case.
The variance upper bound using \(\norm{\bar\mA}_F^2 \leq \norm{\mA}_F^2\) in the real-valued case was \(2^k d^{2k}(1+\frac2d)^k\), which was loose by a factor of \(2^k\).
In contrast, that same variance upper bound here in the complex-valued case is tight, achieving \(d^{2k}(1+\frac1d)^k\).
Overall, it remains unclear exactly when taking the approximation \(\norm{\bar\mA}_F^2 \leq \norm{\mA}_F^2\) is lossy.
Here, taking the exact same choice of \mA matrix but changing the query vectors from being real Gaussian to complex Gaussian allowed this bound to become asympotically tight.


\section{Looking Beyond the Kronecker-Hutchinson Estimator}
\label{sec:matvec-lower-bound}

We have shown that the Kronecker-Hutchinson Estimator has a painful exponential dependence on the \(k\) in the worst case.
However, we are broadly interested in the class of all algorithms that operate in the Kronecker-matrix-vector oracle model.
We show that all trace estimation algorithms in this model must have query complexity that grows with \(k\):
\begin{theorem}
\label{thm:matvec-lower-bound}
	Any algorithm that accesses a PSD matrix \mA via Kronecker-matrix-vector products \(\mA\vx^{(1)},\ldots,\mA\vx^{(q)}\), where \(\vx^{(1)},\ldots,\vx^{(q)}\) are (possibly adaptively) chosen real-valued vectors, requires \(q = \Omega(\frac{\sqrt k}{\eps})\) in order to output an estimate \(\hat t\) such that \(\sqrt{\E[(\hat t - \tr(\mA))^2]} \leq \eps \tr(\mA)\).
\end{theorem}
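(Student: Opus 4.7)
The plan is to promote the $\Omega(1/\eps)$ lower bound for non-Kronecker PSD trace estimation of \cite{braverman2020gradient,jiang2021optimal} to the Kronecker setting via a tensor-product hard instance. Those works exhibit a prior $\mu$ on PSD matrices in $\bbR^{d \times d}$ such that any $q$-query adaptive real-valued matvec algorithm estimating $\tr(\mN)$ for $\mN \sim \mu$ has posterior variance $\Omega(\tr(\mN)^2/q^2)$, or equivalently, posterior variance $\Omega(1/q^2)$ on $\log\tr(\mN)$. I would draw $\mA_1,\ldots,\mA_k$ iid from $\mu$ and set $\mA = \mA_1 \otimes \cdots \otimes \mA_k$, which is PSD and satisfies $\log\tr(\mA) = \sum_{i=1}^k \log\tr(\mA_i)$. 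Each Kronecker query $\vx = \vx_1 \otimes \cdots \otimes \vx_k$ returns $\mA\vx = \bigotimes_i \mA_i\vx_i$, and the essential observation is that for any fixed $i$, this response contains no more information about $\mA_i$ than a single ordinary matvec $\mA_i\vx_i$ would: conditioning on the other factors $\mA_{-i} \defeq \{\mA_j : j \neq i\}$, the algorithm can reconstruct each $\mA_j\vx_j$ for $j\neq i$, and hence extract $\mA_i\vx_i$ exactly.

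\emph{Combining the marginal bounds via the log.} For each $i$, conditioning on $\mA_{-i}$ turns any estimator $\hat t$ of $\tr(\mA)$ into an estimator $\hat t/\prod_{j\neq i} \tr(\mA_j)$ of $\tr(\mA_i)$ that uses at most $q$ adaptive matvecs of $\mA_i$, so the non-Kronecker lower bound forces posterior variance $\Omega(1/q^2)$ on $\log\tr(\mA_i)$. Since the prior is iid across factors and the Kronecker-query information separates across factors, the posterior covariance of the vector $(\log\tr(\mA_1), \ldots, \log\tr(\mA_k))$ is (approximately) diagonal, and summing gives
\[
  \mathrm{Var}\!\left[\log\tr(\mA) \,\big|\, \text{transcript}\right] \;\geq\; \Omega(k/q^2).
\]
The RMSE requirement $\sqrt{\E[(\hat t - \tr(\mA))^2]} \leq \eps \tr(\mA)$ translates (for $\eps \leq \tsfrac12$) into an $O(\eps)$ RMSE on $\log\tr(\mA)$; comparing with the posterior variance lower bound then gives $\Omega(k/q^2) \leq O(\eps^2)$, i.e.\ $q = \Omega(\sqrt{k}/\eps)$. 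By averaging, some $\mA$ in the support witnesses the bound.

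\emph{Main obstacle.} The delicate part is justifying the ``approximately diagonal'' claim about the posterior covariance of the $\log\tr(\mA_i)$. Although the prior is a product, an adaptive algorithm can couple its future queries $\vx_i^{(j+1)}$ across factors by routing responses from factor $j$ into the choice for factor $i$. The cleanest way I see to push this through is a chain-rule decomposition of the mutual information between $(\tr(\mA_1),\ldots,\tr(\mA_k))$ and the transcript: each conditional mutual information $I(\tr(\mA_i) ; \text{transcript} \mid \tr(\mA_{<i}))$ can be controlled by a non-Kronecker application of the baseline bound, treating the adaptive coupling against the other factors as part of the adversary. This mirrors how \cite{braverman2020gradient,jiang2021optimal} already absorb adaptivity in the non-Kronecker setting through rotational invariance / Wishart arguments, and porting these arguments one factor at a time is the technical heart of the proof.
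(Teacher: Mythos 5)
Your high-level plan is the right one and matches the paper's: take the non-Kronecker Wishart lower bound of \cite{braverman2020gradient,jiang2021optimal} and tensorize it, setting \(\mA = \otimes_{i=1}^k \mW_i\) with \(\mW_i\) iid Wishart, then argue that a \(q\)-query Kronecker algorithm can learn at most ``\(q\) matvecs worth'' about each factor. But there is a genuine gap in the way you combine the per-factor bounds, and the paper's argument is organized precisely to avoid it.

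The sticking point is the claim that the posterior covariance of \((\log\tr(\mA_1),\ldots,\log\tr(\mA_k))\) is ``approximately diagonal.'' You correctly flag this, but the chain-rule / mutual-information fix you sketch is not what makes the argument go through, and it is nontrivial to carry out: conditional mutual information does not bound posterior variance without additional structure, and the adaptive coupling of queries \(\vx_i^{(j)}\) across factors (a later query to factor \(i\) can depend on earlier responses from factor \(j\neq i\)) means you cannot just invoke the non-Kronecker per-factor bound independently for each coordinate. The paper avoids the whole issue by \emph{not} passing to \(\log\tr\) and \emph{not} trying to argue near-diagonality. Instead, it conditions on an explicit superset of the transcript — the tuple \(\cV = (\mV_1,\ldots,\mV_k,\mDelta_1,\ldots,\mDelta_k)\) produced by applying Lemma 13 of \cite{braverman2020gradient} (Imported Theorem~\ref{impthm:hidden-wishart}) to each factor separately — and uses the key structural fact that, conditional on \(\cV\), each \(\widetilde\mW_i\) is still an honest \(\text{Wishart}(d-q)\) matrix and these residuals are \emph{exactly} independent across \(i\) (because \(\cV\) factors across the tensor legs). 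The per-factor posterior is thus not merely ``approximately'' decoupled: \(X_i := \tr(\mW_i) = \tr(\mDelta_i) + \tr(\widetilde\mW_i)\) is, given \(\cV\), a deterministic offset plus an independent \(\chi^2_{(d-q)^2}\). From there the paper does exact algebra for \(\E[\Var[\prod_i X_i \mid \cV]]\), getting \((d^4+2d^2)^k - (d^4+2d^2-2(d-q)^2)^k\) (Lemmas~\ref{lem:lower-bound-use-mmse} and~\ref{lem:lower-bound-exact-mse}), compares to \((\E\tr(\mA))^2 = d^{4k}\), and optimizes over \(d\). No log-transform, no concentration step translating \(\log\)-variance into relative MSE, and — crucially — no hand-waving about decoupling.

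Two smaller points worth noting. First, your factor-isolation argument (``conditioning on \(\mA_{-i}\), one can recover \(\mA_i\vx_i^{(j)}\) and reduce to a \(q\)-query non-Kronecker lower bound for \(\tr(\mA_i)\)'') is essentially correct, but it is an upper bound on what the algorithm knows, and you still need a lower bound statement of the right form from the cited works — a posterior variance bound, which is exactly what the Wishart/hidden-submatrix lemma provides and what your sketch does not supply for a generic prior \(\mu\). Second, you need to be a little careful that \(\cV\) strictly contains everything the algorithm observes (the algorithm sees only the big tensor products \(\vz^{(j)} = \otimes_i \vz_i^{(j)}\), not the per-factor \(\vz_i^{(j)}\)); conditioning on more information only strengthens the estimator, so the MMSE lower bound through \(\cV\) is valid, but this direction of the inequality has to be stated. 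Overall: right hard instance, right intuition, but the ``approximately diagonal'' step is a real gap, and the paper's route — direct MMSE via Wishart conditional independence — is the clean way to close it.
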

If we constrain ourselves to estimators \(\hat t\) that are unbiased (i.e. \(\E[\hat t] = \tr(\mA)\)), then this lower bound says that \(q = \Omega(\frac{\sqrt k}{\eps})\) Kronecker-matrix-vector products are required to achieve standard deviation \(\leq \eps\tr(\mA)\).
Our analysis builds off of tools in \cite{braverman2020gradient,jiang2021optimal}, which studies matrix-vector lower bounds without the Kronecker constraint.
In particular, we use the following theorem:

\begin{definition}
	Let \(\mG \in \bbR^{d \times d}\) be a matrix of iid standard normal Gaussian entries.
	Then, the matrix \(\mW \defeq \mG^\intercal\mG \in \bbR^{d \times d}\) is distributed as a \emph{\(\text{Wishart}(d)\)} random matrix.
\end{definition}

\begin{importedtheorem}[Lemma 13 from \cite{braverman2020gradient}]
\label{impthm:hidden-wishart}
	Let \(\mW \sim \emph{\text{Wishart(d)}}\).
	Then, for any sequence of (possibly adaptively chosen) matrix-vector queries \(\vx^{(1)},\ldots,\vx^{(q)}\) and responses \(\vz^{(1)},\ldots,\vz^{(q)}\) (so that \(\vz^{(j)} = \mW\vx^{(j)}\)), there exists a rotation matrix \(\mV \in \bbR^{d \times d}\) constructed as a function of \(\vx^{(1)},\ldots,\vx^{(q)}\) such that the matrix \(\mV\mW\mV^\intercal\) can be written as
	\[
		\mV\mW\mV^\intercal = \mDelta + \bmat{\mat0 & \mat0 \\ \mat0 & \widetilde\mW},
	\]
	where \(\widetilde\mW \in \bbR^{d-q \times d-q}\) conditioned on the values of \(\vx^{(1)},\ldots,\vx^{(q)},\vz^{(1)},\ldots,\vz^{(q)}\) is distributed as Wishart matrix.
	That is, \(\widetilde\mW ~|~ (\vx^{(1)},\ldots,\vx^{(q)},\vz^{(1)},\ldots,\vz^{(q)}) \sim \emph{\text{Wishart}}(d-q)\).
	Further, \mDelta is a PSD matrix that is a deterministic function of \(\vx^{(1)},\ldots,\vx^{(q)},\vz^{(1)},\ldots,\vz^{(q)}\).
\end{importedtheorem}
This theorem is powerful because it says that any algorithm that has accessed a Wishart matrix via \(q\) matrix-vector products must have a submatrix which the algorithm has absolutely no knowledge of.
We note that in \cite{braverman2020gradient}, they do not state that \(\mDelta\) is a deterministic function of \(\vx^{(1)},\ldots,\vx^{(q)},\vz^{(1)},\ldots,\vz^{(q)}\), but this fact is easily verified by examining by the proof of Lemma 13 in \cite{braverman2020gradient}.

We start by extending \Cref{impthm:hidden-wishart} to the Kronecker-matrix-vector oracle model.
We construct our hard instance input matrix as \(\mA = \otimes_{i=1}^k \mW_i \in \bbR^{d^k \times d^k}\), where \(\mW_1,\ldots,\mW_k \in \bbR^{d \times d}\) are iid Wishart matrices:

\begin{corollary}
\label{corol:hidden-kron-wishart}
	Let \(\mW_1,\ldots,\mW_k\) be iid \(\emph{\text{Wishart}}(d)\) matrices, and let \(\mA \defeq \otimes_{i=1}^k \mW_i \in \bbR^{d^k \times d^k}\).
	Then, for any sequence of (possibly adaptively chosen) Kronecker-matrix-vector queries \(\vx^{(1)},\ldots,\vx^{(q)}\) and responses \(\vz^{(1)},\ldots,\vz^{(q)}\) (so that \(\vz^{(j)} = \mA\vx^{(j)}\)), there exists orthogonal matrices \(\mV_1,\ldots,\mV_k \in \bbR^{d \times d}\) such that \(\mV = \otimes_{i=1}^k \mV_i\) has
	\[
		\mV\mA\mV^\intercal = \otimes_{i=1}^k \left( \mDelta_i + \bmat{\mat 0 & \mat 0 \\ \mat 0 & \widetilde\mW_i}\right).
	\]
	Let \(\vx^{(j)} = \otimes_{i=1}^k \vx_i^{(j)}\) and \(\vz_i^{(j)} = \mW_i\vx_i^{(j)}\) be decompositions of the query and response vectors.
	Then, each \(\widetilde\mW_i \in \bbR^{d-q \times d-q}\) conditioned on the values of \(\vx_i^{(1)},\ldots,\vx_i^{(q)},\vz_i^{(1)},\ldots,\vz_i^{(q)}\) is distributed as a Wishart matrix.
	That is, \(\widetilde\mW_i ~|~ (\vx_i^{(1)},\ldots,\vx_i^{(q)},\vz_i^{(1)},\ldots,\vz_i^{(q)}) \sim \emph{\text{Wishart}}(d-q)\).
	Further, each \(\mDelta_i\) is a PSD matrix that is a deterministic function of \(\vx_i^{(1)},\ldots,\vx_i^{(q)},\vz_i^{(1)},\ldots,\vz_i^{(q)}\).
\end{corollary}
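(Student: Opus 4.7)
The plan is to reduce the problem to $k$ independent applications of \Cref{impthm:hidden-wishart}, one per Wishart factor $\mW_i$. The key enabling observation is that by the Mixed-Product Property (\Cref{lem:mixed-product}), each Kronecker response decomposes as $\vz^{(j)} = \mA\vx^{(j)} = \otimes_{i=1}^k \mW_i \vx_i^{(j)} = \otimes_{i=1}^k \vz_i^{(j)}$, so an algorithm's Kronecker queries induce, for each $i$, a sequence of ordinary matrix-vector queries $(\vx_i^{(j)}, \vz_i^{(j)})_{j=1}^q$ against the factor $\mW_i$. Since a rank-one tensor determines each of its factors up to a global scalar, a lower bound against an augmented algorithm that is simply handed each $\vz_i^{(j)}$ directly is at least as strong as the corollary we want; this is the setting I would analyze.

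Next, for each fixed $i$, I would apply \Cref{impthm:hidden-wishart} to $\mW_i$ with the induced query sequence to obtain a rotation $\mV_i \in \bbR^{d \times d}$, a deterministic PSD matrix $\mDelta_i$ measurable with respect to $(\vx_i^{(j)}, \vz_i^{(j)})_{j=1}^q$, and a hidden submatrix $\widetilde\mW_i$ that, conditional on that query-response sequence, is distributed as $\text{Wishart}(d-q)$. Assembling the factors, I would set $\mV \defeq \otimes_{i=1}^k \mV_i$, which is orthogonal because each $\mV_i$ is, and invoke \Cref{lem:mixed-product} twice to conclude
\[
    \mV \mA \mV^\intercal = \otimes_{i=1}^k \mV_i\mW_i\mV_i^\intercal = \otimes_{i=1}^k \left(\mDelta_i + \bmat{\mat 0 & \mat 0 \\ \mat 0 & \widetilde\mW_i}\right),
\]
which is the claimed decomposition.

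The step I expect to be the main obstacle is the adaptivity bookkeeping. An adaptive query $\vx_i^{(j)}$ may depend on prior full Kronecker responses $\vz^{(1)},\ldots,\vz^{(j-1)}$, and hence on the other Wishart factors $\{\mW_{i'}\}_{i' \neq i}$; strictly speaking, the induced query sequence to $\mW_i$ is therefore not a genuinely adaptive matvec sequence against $\mW_i$ in the sense required by \Cref{impthm:hidden-wishart}. To handle this, I would condition on a realization of $\{\mW_{i'}\}_{i' \neq i}$ throughout the analysis of factor $i$. Because the Wishart factors are jointly independent, this conditioning leaves the marginal distribution of $\mW_i$ (and of $\widetilde\mW_i$) unchanged, while reducing the interaction with $\mW_i$ to a genuine adaptive matvec sequence to which \Cref{impthm:hidden-wishart} applies verbatim. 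Integrating back over $\{\mW_{i'}\}_{i' \neq i}$ then recovers the unconditional distributional statement claimed in the corollary.
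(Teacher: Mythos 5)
Your proposal is correct and fills a gap: the paper states this corollary without an explicit proof, and your factor-by-factor reduction to \Cref{impthm:hidden-wishart} via the mixed-product identity $\mA\vx^{(j)} = \otimes_{i=1}^k \mW_i\vx_i^{(j)}$ is exactly the intended argument. You are also right that the adaptivity coupling is the one genuine subtlety here, and your resolution --- condition on $\{\mW_{i'}\}_{i'\neq i}$ so that the induced queries $\vx_i^{(j)}$ form a bona fide adaptive sequence against $\mW_i$ alone, then integrate out because the resulting conditional Wishart law of $\widetilde\mW_i$ (and the deterministic dependence of $\mV_i,\mDelta_i$ on $(\vx_i^{(j)},\vz_i^{(j)})_j$) does not depend on $\{\mW_{i'}\}_{i'\neq i}$ --- is the right way to make it rigorous. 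Your framing in terms of an augmented algorithm handed each factor $\vz_i^{(j)}$ directly is likewise consistent with how the paper uses the corollary: the discussion preceding \Cref{lem:lower-bound-use-mmse} explicitly notes that the actual algorithm may not observe all of $\cV$, only that $\cV$ covers what it sees.
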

With this setup, we can now form a lower bound against any trace estimating algorithm.
We first prove in \Cref{lem:lower-bound-use-mmse} that, in terms of minimizing mean squared error, no estimator can outperform the conditional expectation \(\tilde t = \E[\tr(\mA) ~|~ \mV_1,\ldots,\mV_k,\mDelta_1,\ldots,\mDelta_k]\).
Next, in \Cref{lem:lower-bound-exact-mse}, we compute the exact variance of this estimator.
Lastly, in the final proof of \Cref{thm:matvec-lower-bound}, we compare this variance to the trace of \mA, to conclude that \(q \geq \Omega(\frac{\sqrt k}{\eps})\).

To simplify notation, we let \(\cW = (\widetilde\mW_1,\ldots,\widetilde\mW_k)\) denote the list of hidden Wishart matrices, and let \(\cV \defeq (\mV_1,\ldots,\mV_k,\mDelta_1,\ldots,\mDelta_k)\) be the variables that \(\tilde t\) is conditioned on.
Note that the algorithm only observes the response vectors \(\vz^{(1)},\ldots,\vz^{(q)}\), and that the algorithm does not exactly know the vectors \(\vz_i^{(j)} = \mW_i \vx_i^{(j)}\).
Therefore, the algorithm might not observe all of \cV, but \cV definitely covers everything that the algorithm does see.

\begin{lemma}
	\label{lem:lower-bound-use-mmse}
	Consider an algorithm that interacts with \mA via \(q\) Kronecker-matrix-vector products as in the setting of \Cref{corol:hidden-kron-wishart}.
	Let \(\hat t\) be the output of that algorithm, and let \(\tilde t \defeq \E[\tr(\mA) ~|~ \cV]\).
	Then, \(\tilde t\) has better means squared error than \(\hat t\):
	\[
		\E[(\hat t - \tr(\mA))^2] \geq \E[(\tilde t - \tr(\mA))^2].
	\]
\end{lemma}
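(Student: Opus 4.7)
The statement is a concrete instance of the standard fact that conditional expectation is the minimum mean squared error (MMSE) estimator, i.e.\ the $L^2$-projection onto a sub-$\sigma$-algebra. My plan is (i)~reduce to a deterministic estimator, (ii)~argue that the algorithm's query/response transcript is $\sigma(\cV)$-measurable, and (iii)~invoke the $L^2$-projection inequality.

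First, I would remove the algorithm's internal randomness. Let $O \defeq (\vx^{(1)},\vz^{(1)},\ldots,\vx^{(q)},\vz^{(q)})$ be the transcript and let $R$ denote the algorithm's internal random seed, independent of $\mA$. By Jensen applied to the convex map $y \mapsto (y-\tr(\mA))^2$, replacing $\hat t$ by $\E[\hat t \mid O]$ can only decrease $\E[(\hat t - \tr(\mA))^2]$. So without loss of generality $\hat t = g(O)$ for some deterministic function $g$.

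Second, I would check that $O$ is measurable with respect to $\sigma(\cV)$. The queries $\vx^{(j)}$ are chosen adaptively from past observations, so inductively it suffices to show that the responses $\vz^{(j)} = \otimes_i \vz_i^{(j)}$ with $\vz_i^{(j)} = \mW_i \vx_i^{(j)}$ are determined by $\cV = (\mV_1,\ldots,\mV_k,\mDelta_1,\ldots,\mDelta_k)$. Examining the Gram-Schmidt construction underlying \Cref{impthm:hidden-wishart}, the orthogonal matrix $\mV_i$ is built from $\vx_i^{(1)},\ldots,\vx_i^{(q)}$ and the top-left $q\times d$ block of $\mV_i \mW_i \mV_i^\intercal$ equals the top-left block of $\mDelta_i$ (since $\widetilde\mW_i$ sits strictly in the bottom-right $(d-q)\times(d-q)$ block). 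Hence $\mW_i\vx_i^{(j)}$ is recoverable as $\mV_i^\intercal(\mDelta_i \mV_i\vx_i^{(j)})$, i.e.\ from $\cV$ and $\vx_i^{(j)}$. Composing over $i$ recovers $\vz^{(j)}$, and inductively the whole transcript $O$ is a function of $\cV$.

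Third, I would invoke the standard $L^2$-projection identity. For any $\sigma(\cV)$-measurable $\hat t$ with $T \defeq \tr(\mA)$,
\[
	\E[(\hat t - T)^2] = \E[(\hat t - \E[T\mid\cV])^2] + \E[(\E[T\mid\cV] - T)^2] \geq \E[(\tilde t - T)^2],
\]
where the cross term $\E[(\hat t - \E[T\mid\cV])(\E[T\mid\cV] - T)]$ vanishes because $\hat t - \E[T\mid\cV]$ is $\sigma(\cV)$-measurable while $\E[T - \E[T\mid\cV] \mid \cV] = 0$. This is the claim.

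The only non-mechanical step is the measurability check in the second paragraph; everything else is a textbook MMSE argument. The measurability check is really a bookkeeping inspection of how $\mV_i$ and $\mDelta_i$ are constructed in \cite{braverman2020gradient}, confirming that the transcript is encoded inside $\cV$, but it is the place where the particular structure of \Cref{corol:hidden-kron-wishart} (as opposed to a purely abstract sub-$\sigma$-algebra argument) actually enters.
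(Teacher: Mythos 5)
Your proof is correct and follows essentially the same route as the paper: the transcript, and hence $\hat t$, is determined by $\cV$ (modulo internal randomness), after which the conditional expectation $\E[\tr(\mA)\mid\cV]$ is the MMSE estimator, giving the inequality via the tower rule / $L^2$-projection. You spell out the $\sigma(\cV)$-measurability of the transcript and the Jensen reduction to deterministic estimators more explicitly than the paper, which simply asserts that $\hat t$ is a (possibly randomized) function of $\cV$.
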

\begin{proof}
Note that the algorithm only observes \(\vx^{(1)},\ldots,\vx^{(q)},\vz^{(1)},\ldots\vz^{(q)}\), so the estimator \(\hat t\) has to be a (possibly randomized) function of \cV.
We then expand the mean squared error of \(\hat t\) by using the tower rule:
\begin{align*}
	\E[(\hat t - \tr(\mA))^2]
	&= \E[ \E_{\cW} [ (\hat t - \tr(\mA))^2 ~|~ \cV ] ] \\
	&\geq \E[ \E_{\cW} [ (\tilde t - \tr(\mA))^2 ~|~ \cV ] ] \\
	&= \E[(\tilde t - \tr(\mA))^2],
\end{align*}
where the inequality comes from \(\tilde t\) being the minimum mean square error estimator for that particular conditional expectation.
\end{proof}

For any specific instantiation of \cV, we can ask what the expected mean squared error of \(\tilde t\) is:
\[
	\E_{\cW}[(\tr(\mA) - \tilde t)^2 ~|~ \cV]
	= \E_{\cW}[(\tr(\mA) - \E[\tr(\mA)|\cV])^2 ~|~ \cV]
	= \Var_{\cW}[\tr(\mA) ~|~ \cV].
\]
We do not really care about this value for a specific choice of \cV, but instead care about it on average across all values of \cV.
That is, we care about the squared error \(\E[(\tilde t - \tr(\mA))^2] = \E_{\cV}[\Var[\tr(\mA) ~|~ \cV]]\).
We then want to ensure that the error is less than \(\eps\E[\tr(\mA)]\).
So, we want to find out what value of \(q\) ensures that
\[
	\sqrt{\E[\Var[\tr(\mA)~|~\cV]]} \leq \eps\E[\tr(\mA)].
\]
We start by computing the variance inside the square root on the left.

\begin{lemma}
\label{lem:lower-bound-exact-mse}
The mean squared error of \(\tilde t\) is \(\E[\Var[\tr(\mA) ~|~ \cV]] = (d^4 + 2d^2)^k - (d^4 + 2d^2 - 2(d-q)^2)^k\).
\end{lemma}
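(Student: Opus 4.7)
The plan is to apply the law of total variance: since $\tilde t = \E[\tr(\mA) \mid \cV]$ has the same mean as $\tr(\mA)$, we have $\E[\Var[\tr(\mA) \mid \cV]] = \E[\tr(\mA)^2] - \E[\tilde t^2]$, and the two terms in the claimed identity will arise from evaluating these two second moments separately. The first is immediate: the Kronecker structure $\mA = \otimes_{i=1}^k \mW_i$ gives $\tr(\mA) = \prod_i \tr(\mW_i)$; each $\tr(\mW_i) = \|\mG_i\|_F^2$ is chi-squared with $d^2$ degrees of freedom, so $\E[\tr(\mW_i)^2] = d^4 + 2d^2$; and independence of the $\mW_i$'s yields $\E[\tr(\mA)^2] = (d^4+2d^2)^k$, which is the first term in the target expression.

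Next, to get a clean formula for $\tilde t$, I would invoke \Cref{corol:hidden-kron-wishart} to decompose each $\mV_i \mW_i \mV_i^\intercal = \mDelta_i + \bmat{\mat0 & \mat0 \\ \mat0 & \widetilde\mW_i}$, so that $\tr(\mW_i) = \delta_i + w_i$ with $\delta_i \defeq \tr(\mDelta_i)$ a $\cV$-measurable quantity and $w_i \defeq \tr(\widetilde\mW_i)$ the trace of a $\text{Wishart}(d-q)$ matrix (a chi-squared random variable with $(d-q)^2$ degrees of freedom). The Wishart-hiding property says that the distribution of $w_i$ conditioned on the per-subsystem history does not depend on that conditioning, so $w_i$ is in fact independent of $\cV$; applied jointly across subsystems, this shows that $w_1,\ldots,w_k$ are mutually independent and jointly independent of $\cV$. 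Consequently, $\E[w_i \mid \cV] = (d-q)^2$ and
\[
    \tilde t
    = \E\!\left[\prod_i (\delta_i + w_i) \,\middle|\, \cV\right]
    = \prod_i \bigl(\delta_i + (d-q)^2\bigr)
    = \prod_i \gamma_i,
\]
where $\gamma_i \defeq \delta_i + (d-q)^2$.

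The final step is to show $\E[\tilde t^2] = \E[\prod_i \gamma_i^2] = (d^4 + 2d^2 - 2(d-q)^2)^k$. The per-subsystem second moment follows from the law of total variance applied to the decomposition $\tr(\mW_i) = \gamma_i + (w_i - (d-q)^2)$ with independent summands: $\Var[\gamma_i] = \Var[\tr(\mW_i)] - \Var[w_i] = 2d^2 - 2(d-q)^2$, and combined with $\E[\gamma_i] = \E[\tr(\mW_i)] = d^2$ this yields $\E[\gamma_i^2] = d^4 + 2d^2 - 2(d-q)^2$. The main obstacle is establishing mutual independence of $\gamma_1,\ldots,\gamma_k$, because each $\gamma_i$ is a deterministic function of the per-subsystem data $(\vx_i^{(1:q)}, \vz_i^{(1:q)})$ and those data are coupled across $i$ under adaptive queries. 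The key observation is that the Kronecker-structured response $\vz^{(j)} = \otimes_i \mW_i\vx_i^{(j)}$ splits into per-subsystem pieces $\vz_i^{(j)} = \mW_i \vx_i^{(j)}$ that each depend only on its own $\mW_i$ given the (adaptively chosen) queries. Hence, conditionally on the full history, the likelihood of $(\mW_1,\ldots,\mW_k)$ factorizes across $i$, the posterior remains a product measure, and the conditional expectations $\gamma_i$ are accordingly independent. Combining, $\E[\tilde t^2] = \prod_i \E[\gamma_i^2] = (d^4 + 2d^2 - 2(d-q)^2)^k$, and substituting into the law of total variance gives the claim.
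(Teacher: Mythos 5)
Your proof is correct and reaches the same formula by essentially the same calculation as the paper, but organized a bit differently and, I'd argue, more transparently. The paper expands $\Var[\prod_i X_i \mid \cV]$ directly using the variance-of-a-product-of-(conditionally)-independent-variables identity and then algebraically untangles $\prod_i \E[(\E[X_i\mid\cV])^2]$; you instead invoke the law of total variance at the top level, write $\E[\Var[\tr(\mA)\mid\cV]] = \E[\tr(\mA)^2] - \E[\tilde t^2]$, and give the explicit form $\tilde t = \prod_i \gamma_i$ with $\gamma_i = \tr(\mDelta_i) + (d-q)^2$. Your per-subsystem second moment $\E[\gamma_i^2] = d^4 + 2d^2 - 2(d-q)^2$, obtained via $\Var[\gamma_i] = \Var[\tr(\mW_i)] - \Var[\tr(\widetilde\mW_i)]$, is exactly the quantity $\E[X_i^2] - \E[\Var[X_i\mid\cV]]$ appearing in the paper. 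What your presentation buys is that the minimum-mean-square-error estimator $\tilde t$ is made explicit, which makes the structure of the bound clearer.

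One caveat. Both your proof and the paper's rely, at the crucial step, on $\E[\prod_i \gamma_i^2] = \prod_i \E[\gamma_i^2]$, i.e. on the mutual independence (at least at the level of products of second moments) of the $\gamma_i$'s. You are right to flag this as the delicate point under adaptivity. However, the argument you give for it does not actually close it: posterior factorization of $(\mW_1,\ldots,\mW_k)$ given the full history is a statement about the \emph{conditional} law of the $\mW_i$'s, whereas what is needed is a statement about the \emph{joint unconditional} law of the $\cH$-measurable quantities $\gamma_1(\cH),\ldots,\gamma_k(\cH)$. These are deterministic once $\cH$ is fixed, so conditional independence of the $\mW_i$'s given $\cH$ says nothing about whether $\gamma_i$ and $\gamma_j$ are independent as random variables. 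Under adaptive queries the per-subsystem transcripts $(\vx_i^{(1:q)},\vz_i^{(1:q)})$ are coupled (since $\vx_i^{(j+1)}$ may depend on $\vz_{i'}^{(1:j)}$ for $i' \neq i$), so this is a genuine subtlety. That said, the paper itself asserts the same factorization with only a brief appeal to conditional independence, so your proof is on the same footing as the paper's here; you just shouldn't take your posterior-factorization argument as having resolved it.
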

\begin{proof}
Let \(X_i \defeq \tr(\mW_i)\) and \(X \defeq \tr(\mA)\), so that \(X = \prod_{i=1}^k X_i\) is a product of \(k\) independent random variables.
Then, since these terms are independence, we know that
\[
	\Var[{\textstyle \prod_{i=1}^k X_i}]
	= \prod_{i=1}^k \E[X_i^2] - \prod_{i=1}^k (\E[X_i])^2.
\]
Further, since \(X_i | \cV\) is independent of \(X_j | \cV\) for \(i\neq j\), we know that
\begin{align*}
	\E[\Var[{\textstyle \prod_{i=1}^k X_i}~|~\cV]]
	&= \E\left[\prod_{i=1}^k \E[X_i^2|\cV] - \prod_{i=1}^k (\E[X_i|\cV])^2\right] \\
	&= \prod_{i=1}^k \E[X_i^2] - \prod_{i=1}^k \E[(\E[X_i|\cV])^2].
\end{align*}
The last expectation on the right can be simplified a bit further.
Since \(\Var[X_i ~|~\cV] = \E[X_i^2 ~|~ \cV] - (\E[X_i ~|~ \cV])^2\), we can write
\[
	\E[(\E[X_i|\cV])^2]
	= \E[ \E[X_i^2|\cV] - \Var[X_i|\cV] ]
	= \E[X_i^2] - \E[\Var[X_i|\cV]],
\]
which in turn means we have
\begin{align}
	\E[\Var[{\textstyle \prod_{i=1}^k X_i}~|~\cV]]
	= \prod_{i=1}^k \E[X_i^2] - \prod_{i=1}^k \left(\E[X_i^2] - \E[\Var[X_i|\cV]]\right).
	\label{eq:lower-bound-expand-variances}
\end{align}
Now we can directly analyze these means and expectations.
First, note that \(X_i = \tr(\mW_i) = \tr(\mG_i^\intercal\mG_i) = \norm{\mG_i}_F^2 \sim \chi_{d^2}^2\) is a chi-squared random variable.
Therefore, \(\E[X_i^2] = \Var[X_i] + (\E[X_i^2])^2 = 2d^2 + d^4\).
Next, we focus on \(\E[\Var[X_i | \cV]]\).
By the linearity and cyclic property of the trace, and since \(\mV_i\mV_i^\intercal=\mI\), we have that
\[
	X_i
	= \tr(\mW_i)
	= \tr(\mW_i \mV_i\mV_i^\intercal)
	= \tr(\mV_i^\intercal \mW_i \mV_i)
	= \tr(\mDelta_i) + \tr(\widetilde\mW_i).
\]
Therefore the only term in \cV that \(X_i\) depends on is \(\mDelta_i\), and so we can expand
\begin{align*}
	\E[\Var[X_i | \cV]]
	= \E[ \Var[\tr(\mDelta_i) + \tr(\widetilde\mW_i) ~|~ \mDelta_i] ]
	= \E[ \Var[\tr(\widetilde\mW_i)] ]
	= \Var[\tr(\widetilde\mW_i)] = 2(d-q)^2,
\end{align*}
where the last equality notes that \(\tr(\widetilde\mW_i) = \tr(\widetilde\mG_i^\intercal\widetilde\mG_i) = \norm{\widetilde\mG_i}_F^2 \sim \chi_{(d-q)^2}\) is also a chi-squared random variable.
Plugging our expectations back into \Cref{eq:lower-bound-expand-variances}, we find that
\begin{align*}
	\E[\Var[{\textstyle \prod_{i=1}^k X_i}~|~\cV]]
	&= \left(d^4 + 2d^2\right)^k - \left(d^4 + 2d^2 - 2(d-q)^2\right)^k,
\end{align*}
which completes the proof.
\end{proof}

We now have built up the tools we need to prove the central lower bound:
\begin{reptheorem}{thm:matvec-lower-bound}
	Any algorithm that accesses a PSD matrix \mA via Kronecker-matrix-vector products \(\mA\vx^{(1)},\ldots,\mA\vx^{(q)}\), where \(\vx^{(1)},\ldots,\vx^{(q)}\) are (possibly adaptively) chosen real-valued vectors, requires \(q = \Omega(\frac{\sqrt k}{\eps})\) in order to output an estimate \(\hat t\) such that \(\sqrt{\E[(\hat t - \tr(\mA))^2]} \leq \eps \tr(\mA)\).
\end{reptheorem}
\begin{proof}
	We consider \mA as in \Cref{corol:hidden-kron-wishart}.
	By \Cref{lem:lower-bound-use-mmse,lem:lower-bound-exact-mse}, we know that
	\[
		\E[(\hat t - \tr(\mA))^2] \geq \E[(\tilde t - \tr(\mA))^2] = (d^4 + 2d^2)^k - (d^4 + 2d^2 - 2(d-q)^2)^k.
	\]
	We then want to show that this term on the right hand side is at most \(\eps^2(\E[\tr(\mA)])^2\).
	First, we quickly note that \(\E[\tr(\mA)] = \prod_{i=1}^k \E[\tr(\mW_i)] = d^{2k}\), since \(\tr(\mW_i) = \tr(\mG_i^\intercal\mG_i) = \norm{\mG_i}_F^2 \sim \chi_{d^2}^2\) is a chi-squared random variable.
	So, we want to find what values of \(q\) ensure that
	\[
		(d^4 + 2d^2)^k - (d^4 + 2d^2 - 2(d-q)^2)^k \leq \eps^2 d^{4k}.
	\]
	In fact, it will be mathematically convenient to demand slightly less accuracy, which will still give a meaningful lower bound:
	\[
		(d^4 + 2d^2)^k - (d^4 + 2d^2 - 2(d-q)^2)^k \leq \eps^2 (d^4+2d^2)^k.
	\]
	We can rearrange this inequality to produce
	\[
		(d-q)^2 \leq \tsfrac12(d^4 + 2d^2) (1-(1-\eps^2)^{1/k}).
	\]
	Note that for \(\eps\in(0,\frac12)\) and \(k \geq 1\), we have that \((1-\frac{2\eps^2}{k})^k = (1-\frac{2\eps^2}{k})^{\frac{k}{2\eps^2}\cdot2\eps^2} \leq e^{-2\eps^2} \leq 1-\eps^2\), so that \((1-\eps^2)^{1/k} \geq 1-\frac{2\eps^2}{k}\).
	So, by additionally bounding \(d^4 + 2d^2 \leq 2d^4\) for \(d\geq2\), we can simplify the above bound to be
	\[
		(d-q)^2 \leq \tsfrac12(d^4 + 2d^2) (1-(1-\eps^2)^{1/k}) \leq \tsfrac{2d^4\eps^2}{k}.
	\]
	Solving this inequality for \(q\) gives \(q \geq d - \frac{\sqrt{2}\eps}{\sqrt k} d^2\).
	Optimizing over \(d\) gives \(q \geq \frac{4 \sqrt k}{\sqrt 2 \eps} = \Omega(\frac{\sqrt k}{\eps})\), completing the proof.
\end{proof}


\section{
	\texorpdfstring{Exactly Computing the Trace of \(\mA = \otimes_{i=1}^k\mA_i\) in \(kd+1\) Queries}{Exactly Computing the Trace of Kronecker Matrices with kd+1 Queries}
}
\label{sec:kron-recovery}

In this section, we are promised that \(\mA = \mA_1 \otimes \cdots \otimes \mA_k\) for some unknown matrices \(\mA_1,\ldots,\mA_k\in\bbR^{d \times d}\).
We show that by using exactly \(kd+1\) Kronecker-matrix-vector products, we can recover a set of matrices \(\mB_1,\ldots,\mB_k \in \bbR^{d \times d}\) such that \(\otimes_{i=1}^k \mB_i = \mA\).
Note that we cannot exactly recover the matrices \(\mA_1,\ldots,\mA_k\) because the Kronecker product does not admit a unique decomposition: we have \(\alpha\mA \otimes \mB = \mA \otimes \alpha\mB\) for all scalars \(\alpha\) and matrices \mA, \mB.
However, we are able to recover the trace exactly since  \(\tr(\mA_1 \otimes \mA_2) = \tr(\mA_1) \tr(\mA_2)\), we can then compute \(\tr(\mA) = \prod_{i=1}^k \tr(\mB_i)\).

\begin{algorithm}[ht]
	\caption[Exact Kronecker Recovery]{Exact Kronecker Recovery}
	\label{alg:kron-recovery}
	{\bfseries input}: Kronecker-matrix-vector oracle access to \(\mA\in\bbR^{d^k \times d^k}\). \\
	{\bfseries output}: Factor matrices \(\mB_1,\ldots,\mB_k\in\bbR^{d \times d}\).\\
	\vspace{-1em}
	\begin{algorithmic}[1]
		\STATE Sample \(\cN(\vec0,\mI)\) vectors \(\vg_1,\ldots,\vg_k \in \bbR^{d}\).
		\STATE Let \(\bar\vx = \otimes_{i=1}^k \vg_i\).
		\STATE Compute \(\gamma \defeq \bar\vx^\intercal\mA\bar\vx\).
		\STATE If \(\gamma = 0\), then {\bfseries return} \(\mB_1,\ldots,\mB_k = \mat 0\)
		\STATE For all \(i\in[k], m\in[d]\), let \(\vx_i^{(m)} = (\otimes_{j=1}^{i-1} \vg_i) \otimes \ve_m \otimes (\otimes_{j=i+1}^k \vg_i)\).
		\STATE For all \(i\in[k], m\in[d]\), compute \(\mA\vx_{i}^{(m)}\).
		\STATE For all \(i\in[k], m\in[d], n\in[d]\), store \([\mB_i]_{m, n} = \gamma^{-(1-\frac1k)} {\vx_i^{(m)}}^\intercal\mA\vx_i^{(n)}\).
		\STATE {\bfseries return} \(\mB_1,\ldots,\mB_k\).
	\end{algorithmic}
\end{algorithm}

\begin{theorem}
Let \(\mA = \mA_1 \otimes \cdots \otimes \mA_k \in \bbR^{d^k \times d^k}\).
Then, with probability 1, \Cref{alg:kron-recovery} computes exactly \(kd+1\) Kronecker-matrix-vector products with \mA and returns matrices \(\mB_1,\ldots,\mB_k\) such that \(\mA = \mB_1 \otimes \cdots \otimes \mB_k\).
\end{theorem}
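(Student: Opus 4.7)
The plan is to verify three things: (i) the algorithm issues exactly $kd+1$ Kronecker-matrix-vector products, (ii) when $\gamma\ne 0$, line 7 produces factors whose Kronecker product equals $\mA$, and (iii) when $\mA\ne\mat 0$ the branch at line 4 triggers only on a measure-zero event.

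The query count is direct bookkeeping: line 3 uses one product (to form $\mA\bar\vx$ and dot it with $\bar\vx$), and line 6 loops over $(i,m)\in[k]\times[d]$ for another $kd$ products. Line 7 reuses the already-computed vectors $\mA\vx_i^{(n)}$ when forming inner products, so no additional queries are needed, giving exactly $kd+1$.

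For the correctness of the recovered factors, I would set $\alpha_i \defeq \vg_i^\intercal\mA_i\vg_i$ and apply the Mixed-Product Property (\Cref{lem:mixed-product}) repeatedly. Since the Kronecker product of scalars is their ordinary product,
\[
	\gamma = \bar\vx^\intercal\mA\bar\vx = \prod_{i=1}^k \vg_i^\intercal\mA_i\vg_i = \prod_{i=1}^k \alpha_i,
\]
and for each $i\in[k]$ and $m,n\in[d]$,
\[
	{\vx_i^{(m)}}^\intercal \mA \vx_i^{(n)} = \Big(\prod_{j\ne i}\alpha_j\Big)\,\ve_m^\intercal\mA_i\ve_n = \frac{\gamma}{\alpha_i}\,[\mA_i]_{m,n}.
\]
Substituting into line 7 gives $[\mB_i]_{m,n} = \gamma^{-(1-1/k)}\cdot\frac{\gamma}{\alpha_i}[\mA_i]_{m,n} = \frac{\gamma^{1/k}}{\alpha_i}[\mA_i]_{m,n}$, i.e.\ $\mB_i = c_i\mA_i$ with $c_i = \gamma^{1/k}/\alpha_i$. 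The scalars satisfy $\prod_{i=1}^k c_i = \gamma / \prod_{i=1}^k \alpha_i = 1$, so a final application of the Mixed-Product Property yields $\otimes_{i=1}^k \mB_i = \big(\prod_i c_i\big)\,\otimes_{i=1}^k \mA_i = \mA$.

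It remains to show $\gamma\ne 0$ almost surely. If $\mA=\mat 0$, the zeros returned in line 4 trivially equal $\mA$. Otherwise every factor $\mA_i\ne\mat 0$, and $\alpha_i = \vg_i^\intercal\mA_i\vg_i = \vg_i^\intercal\,\tfrac12(\mA_i+\mA_i^\intercal)\,\vg_i$ is a nontrivial polynomial in the Gaussian entries of $\vg_i$ (nontriviality holds whenever the symmetric part of $\mA_i$ is nonzero), so $\alpha_i\ne 0$ almost surely by absolute continuity of the Gaussian measure along any nontrivial polynomial locus, and hence $\gamma\ne 0$ almost surely. The main subtlety I foresee is the degenerate edge case in which some $\mA_i$ has a vanishing symmetric part; this is a measure-zero input class and can be handled either by exploiting the $c\mA_i\otimes c^{-1}\mA_{i+1}$ scalar ambiguity in the Kronecker decomposition or by imposing a mild non-degeneracy hypothesis.
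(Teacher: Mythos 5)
Your argument follows the same path as the paper's: expand $\bar\vx^\intercal\mA\bar\vx$ and ${\vx_i^{(m)}}^\intercal\mA\vx_i^{(n)}$ via the mixed-product property to see that line~7 recovers $\mB_i = (\gamma^{1/k}/\alpha_i)\mA_i$, multiply the scalars to get $\prod_i c_i = 1$, and count queries. The algebra and bookkeeping are correct.

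What you do \emph{better} than the paper is the almost-sure step. The paper asserts that $\gamma_i = \vg_i^\intercal\mA_i\vg_i\neq 0$ with probability~1 ``so long as $\mA_i$ is not the all-zeros matrix,'' justifying this by saying $\vg_i$ does not lie in the nullspace of $\mA_i$. That is the wrong condition: $\mA_i\vg_i\neq\vec0$ does not imply $\vg_i^\intercal\mA_i\vg_i\neq 0$, and in fact if $\mA_i$ is antisymmetric and nonzero, then $\vg_i^\intercal\mA_i\vg_i\equiv 0$ for \emph{every} $\vg_i$. You correctly reduce the question to whether the symmetric part $\tfrac12(\mA_i+\mA_i^\intercal)$ is nonzero, and invoke the standard absolute-continuity argument for a nontrivial polynomial. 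This is the right criterion, and you are right to flag that it does not follow from $\mA_i\neq\mat0$.

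However, your proposed escapes from the antisymmetric edge case do not close the gap. The ``measure-zero input class'' comment is not applicable: the theorem quantifies over a \emph{fixed, arbitrary} $\mA$, so you cannot average over inputs. And the $c\,\mA_i\otimes c^{-1}\mA_j$ scalar rebalancing cannot help, because scaling preserves antisymmetry -- there is no choice of scalars that turns an antisymmetric factor into one with a nonzero symmetric part. In fact, when some factor $\mA_i$ is antisymmetric and nonzero, $\gamma=0$ deterministically, the early return at line~4 fires, the algorithm makes only one query rather than $kd+1$, and it returns $\mat0\neq\mA$. So the theorem as stated is genuinely false on this edge case, and the gap exists in the paper's own proof (where it is hidden by the incorrect nullspace justification). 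The clean fix is to add the hypothesis that each $\mA_i$ has nonzero symmetric part (equivalently, that the symmetric part of $\mA$ is nonzero under the Kronecker factorization), or to restrict to symmetric/PSD factors as is natural in the paper's setting; under that hypothesis your argument is complete.
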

\begin{proof}
At a high level, the quadratic form \({\vx_i^{(m)}}^\intercal \mA\vx_i^{(n)}\) is trying to recover the \((m,n)\) entry of \(\mA_i\).
To see why, notice that 
\[
	{\vx_i^{(m)}}^\intercal \mA \vx_{i}^{(n)}
	= (\otimes_{j=1}^{i-1} \vg_j^\intercal \mA_j \vg_j) \otimes \ve_m^\intercal\mA_i\ve_n \otimes (\otimes_{j=i+1}^{k} \vg_j^\intercal \mA_j \vg_j).
\]
If we let \(\gamma_i \defeq \vg_i^\intercal\mA_i\vg_i\) denote the various components of \(\gamma = \otimes_{i=1}^k \vg_i^\intercal\mA_i\vg_i = \prod_{i=1}^k \gamma_i\), then we can write the above expression as
\[
	{\vx_i^{(m)}}^\intercal \mA \vx_{i}^{(n)}
	= ({\textstyle \prod_{j \neq i} \gamma_i}) \ve_m^\intercal\mA_i\ve_n
	= \tsfrac{\gamma}{\gamma_i} [\mA_i]_{m,n}
\]
So, we get that the matrix \(\mB_i\) has \([\mB_i]_{m,n} = \frac{\gamma^{\frac1k}}{\gamma_i} [\mA_i]_{m,n}\).
We therefore conclude that
\[
	\otimes_{i=1}^k \mB_i
	= \otimes_{i=1}^k \big(\tsfrac{\gamma^{\frac1k}}{\gamma_i} \mA_i\big)
	= \big({\textstyle \prod_{i=1}^k} \tsfrac{\gamma^{\frac1k}}{\gamma_i} \big) \mA
	= \tsfrac{\gamma}{\gamma}\mA
	= \mA.
\]
This analysis implicitly assumes that \(\gamma_i = \vg_i^\intercal\mA\vg_i \neq 0\) for all \(i\in[k]\).
Since \(\vg_i\) are random Gaussian vectors, we know that \(\vg_i\) does not lie in the nullspace of \(\mA_i\) with probability 1 so long as \(\mA_i\) is not the all-zeros matrix.
So, \Cref{alg:kron-recovery} is correct so long as no factor \(\mA_i\) is the all-zeros matrix.
If some \(\mA_i\) is the all-zeros matrix, then we get that \(\mA\) is also the all-zeros matrix, so that \(\gamma = 0\), and therefore our output will have \(\mB_1,\ldots,\mB_k\) all being all-zeros matrices, which is correct.

To count the number of Kronecker-matrix-vector products computed, note that \Cref{alg:kron-recovery} only computes such products on lines 3 and 6.
Line 3 computes exactly 1 product, and line 6 computes exactly \(kd\) products.
\end{proof}


\section{Conclusion}
In this paper, we produced new and tight bounds on the rate of convergence for the Kronecker-Hutchinson Estimator.
We show that the estimator generally has an exponential dependence on \(k\), and we showed an expression for the exact variance of the estimator when using the Kronecker of Gaussian vectors.

There are several natural future directions for this research.
It is worth exploring the variance of more classes of matrices under \Cref{thm:real-psd}.
In particular, is there a natural property of matrices which exactly characterizes the input matrices for which the Kronecker-Hutchinson Estimator has a polynomial rate of convergence?
If so, this would be a significant step towards a sub-exponential runtime for trace estimation in the Kronecker-matrix-vector oracle.

It is also worth looking at algorithms in the Kronecker-matrix-vector oracle model beyond the Kronecker-Hutchinson Estimator.
For instance, when \mA has a quickly decaying spectrum, could a low-rank approximation method like ones used in the non-Kronecker case provide good trace estimation guarantees \cite{li2021randomized,saibaba2017randomized}?

There is also potential for newer and stronger lower bounds.
We provide a lower bound against possibly adaptive algorithms, which is very general, but is also perhaps too general to be descriptive of most algorithms people are currently using.
Is there perhaps a lower bound against all sketching-based methods which is exponential in \(k\)?
Or, even more narrowly, is there a lower bound against all quadratic-form based algorithms, in the vain of \cite{wimmer2014optimal}?

Lastly, there is also plenty of potential to study more problems in the Kronecker-matrix-vector oracle model.
Is it possible to accurately and efficiently compute the top eigenvalue or eigenvector of a matrix from Kronecker-matrix-vector products?
How about solving a linear system?
These problems all need to be explored in great detail.


\section*{Acknowledgements}
Raphael Meyer was partially supported by a GAANN fellowship from the US Department of Education.
Haim Avron was partially supported by the US-Israel Binational Science Foundation (Grant no. 2017698). 
We thank Moshe Goldstein and Noa Feldman for discussions on quantum entanglement estimation, which inspired this work.
We thank Tyler Chen for the design of \Cref{alg:kron-recovery}, Feyza Duman Keles for help with tight analysis of Rademacher vectors, and William Swartworth for tightening the analysis of the random rank-1 case.
We also thank Christopher Musco for pointing us to relevant results in \cite{ahle2020oblivious}.

\bibliographystyle{apalike}
\bibliography{local,morebibs}

\begin{thebibliography}{}

\bibitem[Aaronson, 2018]{aaronsonIntroduction}
Aaronson, S. (2018).
\newblock Introduction to quantum information science lecture notes.
\newblock \url{https://www.scottaaronson.com/qclec.pdf}.

\bibitem[Ahle et~al., 2020]{ahle2020oblivious}
Ahle, T.~D., Kapralov, M., Knudsen, J.~B., Pagh, R., Velingker, A., Woodruff,
  D.~P., and Zandieh, A. (2020).
\newblock Oblivious sketching of high-degree polynomial kernels.
\newblock In {\em Proceedings of the Fourteenth Annual ACM-SIAM Symposium on
  Discrete Algorithms}, pages 141--160. SIAM.

\bibitem[Avron et~al., 2014]{avron2014subspace}
Avron, H., Nguyen, H., and Woodruff, D. (2014).
\newblock Subspace embeddings for the polynomial kernel.
\newblock {\em Advances in neural information processing systems}, 27.

\bibitem[Avron and Toledo, 2011]{AT11}
Avron, H. and Toledo, S. (2011).
\newblock Randomized algorithms for estimating the trace of an implicit
  symmetric positive semi-definite matrix.
\newblock {\em J. ACM}, 58(2).

\bibitem[Bakshi et~al., 2022]{bakshi2022low}
Bakshi, A., Clarkson, K.~L., and Woodruff, D.~P. (2022).
\newblock Low-rank approximation with 1/{\(\epsilon\)}\({}^{\mbox{1/3}}\)
  matrix-vector products.
\newblock In {\em Proceedings of the 54th Annual ACM SIGACT Symposium on Theory
  of Computing}, pages 1130--1143.

\bibitem[Bernstein, 2011]{bernstein11}
Bernstein, D.~S. (2011).
\newblock {\em Matrix Mathematics: Theory, Facts, and Formulas}.
\newblock Princeton University Press, second edition.

\bibitem[Braverman et~al., 2020]{braverman2020gradient}
Braverman, M., Hazan, E., Simchowitz, M., and Woodworth, B. (2020).
\newblock The gradient complexity of linear regression.
\newblock In {\em Conference on Learning Theory}, pages 627--647. PMLR.

\bibitem[Bujanovi{\'c} et~al., 2024]{bujanovic2024subspace}
Bujanovi{\'c}, Z., Grubi{\v{s}}i{\'c}, L., Kressner, D., and Lam, H.~Y. (2024).
\newblock Subspace embedding with random khatri-rao products and its
  application to eigensolvers.
\newblock {\em arXiv preprint arXiv:2405.11962}.

\bibitem[Bujanovic and Kressner, 2021]{bujanovic2021norm}
Bujanovic, Z. and Kressner, D. (2021).
\newblock Norm and trace estimation with random rank-one vectors.
\newblock {\em SIAM Journal on Matrix Analysis and Applications},
  42(1):202--223.

\bibitem[Chen and Hallman, 2022]{chen2022krylov}
Chen, T. and Hallman, E. (2022).
\newblock Krylov-aware stochastic trace estimation.
\newblock {\em arXiv preprint arXiv:2205.01736}.

\bibitem[Cortinovis and Kressner, 2021]{cortinovis2021randomized}
Cortinovis, A. and Kressner, D. (2021).
\newblock On randomized trace estimates for indefinite matrices with an
  application to determinants.
\newblock {\em Foundations of Computational Mathematics}, pages 1--29.

\bibitem[Epperly, 2023]{epperlyStochastic}
Epperly, E. (2023).
\newblock Stochastic trace estimation.
\newblock
  \url{https://www.ethanepperly.com/index.php/2023/01/26/stochastic-trace-estimation/}.
\newblock Accessed: 2024-12-17.

\bibitem[Feldman et~al., 2022]{feldman2022entanglement}
Feldman, N., Kshetrimayum, A., Eisert, J., and Goldstein, M. (2022).
\newblock Entanglement estimation in tensor network states via sampling.
\newblock {\em PRX Quantum}, 3(3):030312.

\bibitem[Girard, 1989]{girard1989fast}
Girard, A. (1989).
\newblock A fast ‘monte-carlo cross-validation’procedure for large least
  squares problems with noisy data.
\newblock {\em Numerische Mathematik}, 56:1--23.

\bibitem[Girard, 1987]{girard1987algorithme}
Girard, D. (1987).
\newblock Un algorithme simple et rapide pour la validation crois{\'e}e
  generalis{\'e}e sur des probl{\`e}mes de grande taille: applications {\`a} la
  restauration d'image.
\newblock {\em Informatique et Math{\'e}matiques Appliqu{'e}es de Grenoble}.

\bibitem[Halikias and Townsend, 2022]{halikias2022matrix}
Halikias, D. and Townsend, A. (2022).
\newblock Matrix recovery from matrix-vector products.
\newblock {\em arXiv preprint arXiv:2212.09841}.

\bibitem[Huang et~al., 2020]{huang2020predicting}
Huang, H.-Y., Kueng, R., and Preskill, J. (2020).
\newblock Predicting many properties of a quantum system from very few
  measurements.
\newblock {\em Nature Physics}, 16(10):1050--1057.

\bibitem[Hutchinson, 1989]{hutchinson1989stochastic}
Hutchinson, M.~F. (1989).
\newblock A stochastic estimator of the trace of the influence matrix for
  laplacian smoothing splines.
\newblock {\em Communications in Statistics-Simulation and Computation},
  18(3):1059--1076.

\bibitem[Jiang et~al., 2021]{jiang2021optimal}
Jiang, S., Pham, H., Woodruff, D., and Zhang, R. (2021).
\newblock Optimal sketching for trace estimation.
\newblock {\em Advances in Neural Information Processing Systems},
  34:23741--23753.

\bibitem[Kueng et~al., 2017]{kueng2017low}
Kueng, R., Rauhut, H., and Terstiege, U. (2017).
\newblock Low rank matrix recovery from rank one measurements.
\newblock {\em Applied and Computational Harmonic Analysis}, 42(1):88--116.

\bibitem[Li and Zhu, 2021]{li2021randomized}
Li, H. and Zhu, Y. (2021).
\newblock Randomized block krylov subspace methods for trace and
  log-determinant estimators.
\newblock {\em BIT Numerical Mathematics}, 61:911--939.

\bibitem[Martinsson and Tropp, 2020]{martinsson2020randomized}
Martinsson, P.-G. and Tropp, J.~A. (2020).
\newblock Randomized numerical linear algebra: Foundations and algorithms.
\newblock {\em Acta Numerica}, 29:403--572.

\bibitem[Meyer, 2021]{ram900}
Meyer, R.~A. (2021).
\newblock Updates for {H}utch++: Hutch++ for undergrads.
\newblock
  \url{https://ram900.hosting.nyu.edu/hutchplusplus/#hutch_for_undergrads}.

\bibitem[Meyer et~al., 2021]{meyer21hutchpp}
Meyer, R.~A., Musco, C., Musco, C., and Woodruff, D.~P. (2021).
\newblock Hutch++: Optimal stochastic trace estimation.
\newblock In {\em Symposium on Simplicity in Algorithms (SOSA)}, pages
  142--155. SIAM.

\bibitem[Persson et~al., 2022]{persson2022improved}
Persson, D., Cortinovis, A., and Kressner, D. (2022).
\newblock Improved variants of the {H}utch++ algorithm for trace estimation.
\newblock {\em SIAM Journal on Matrix Analysis and Applications},
  43(3):1162--1185.

\bibitem[Roosta-Khorasani and Ascher, 2015]{roosta2015improved}
Roosta-Khorasani, F. and Ascher, U. (2015).
\newblock Improved bounds on sample size for implicit matrix trace estimators.
\newblock {\em Foundations of Computational Mathematics}, 15(5):1187--1212.

\bibitem[Saibaba et~al., 2017]{saibaba2017randomized}
Saibaba, A.~K., Alexanderian, A., and Ipsen, I.~C. (2017).
\newblock Randomized matrix-free trace and log-determinant estimators.
\newblock {\em Numerische Mathematik}, 137(2):353--395.

\bibitem[Sun et~al., 2021a]{sun2021querying}
Sun, X., Woodruff, D.~P., Yang, G., and Zhang, J. (2021a).
\newblock Querying a matrix through matrix-vector products.
\newblock {\em ACM Transactions on Algorithms (TALG)}, 17(4):1--19.

\bibitem[Sun et~al., 2021b]{sun2021tensor}
Sun, Y., Guo, Y., Tropp, J.~A., and Udell, M. (2021b).
\newblock Tensor random projection for low memory dimension reduction.
\newblock {\em arXiv preprint arXiv:2105.00105}.

\bibitem[Tropp and Webber, 2023]{tropp2023randomized}
Tropp, J.~A. and Webber, R.~J. (2023).
\newblock Randomized algorithms for low-rank matrix approximation: Design,
  analysis, and applications.
\newblock {\em arXiv preprint arXiv:2306.12418}.

\bibitem[Vershynin, 2020]{vershynin2020concentration}
Vershynin, R. (2020).
\newblock Concentration inequalities for random tensors.
\newblock {\em Bernoulli}.

\bibitem[Wang et~al., 2023]{wang23qubit}
Wang, S., McArdle, S., and Berta, M. (2023).
\newblock Qubit-efficient randomized quantum algorithms for linear algebra.
\newblock {\em arXiv preprint arXiv:2302.01873}.

\bibitem[Wimmer et~al., 2014]{wimmer2014optimal}
Wimmer, K., Wu, Y., and Zhang, P. (2014).
\newblock Optimal query complexity for estimating the trace of a matrix.
\newblock In {\em Automata, Languages, and Programming: 41st International
  Colloquium, ICALP 2014, Copenhagen, Denmark, July 8-11, 2014, Proceedings,
  Part I 41}, pages 1051--1062. Springer.

\end{thebibliography}

\end{document}